\definecolor{darkgreen}{rgb}{0.0, 0.5, 0.0}
\newcommand{\rjc}[1]{{\color{black}#1}}
\newcommand{\RJC}[1]{}
\newcommand{\YXT}[1]{}
\newcommand{\IA}[1]{}
\newcommand{\pr}[1]{\operatorname{Pr}\left[{#1}\right]}
\newcommand{\chance}{\operatorname{Pr}}
\newcommand{\Imb}{\operatorname{Imb}}
\newcommand{\Expect}{E}
\newcommand{\dd}{\texttt{d}}
\newcommand{\hide}[1]{}
\newtheorem{theorem}{Theorem}[section]
\newtheorem{thm}{Theorem}[section]
\newtheorem{lemma}[thm]{Lemma}
\newtheorem{claim}[thm]{Claim}
\DeclarePairedDelimiter\ceil{\lceil}{\rceil}
\newcommand{\cg}{\textnormal{\textsl{g}}}
\newcommand\numberthis{\addtocounter{equation}{1}\tag{\theequation}}
\begin{document}
\title{Selecting a Match: Exploration vs Decision}
\author{Ishan Agarwal \thanks{New York University. This work was supported in part by NSF Grant CCF-$1909538$} \and Richard Cole \footnotemark[1] \and Yixin Tao \thanks{London School of Economics. Part of the work done while Yixin Tao was a Ph.D. student at Courant Institute, NYU. This work was also supported in part by ERC Starting Grant ScaleOpt-$757481$.}}
\date{\today}
\maketitle

\begin{abstract}

In a dynamic matching market, such as a marriage or job market, 
how should agents balance accepting a proposed match with the cost of continuing their search? 
We consider this problem in a discrete setting, in which agents have cardinal values and finite lifetimes,
and proposed matches are random.

We seek to quantify how well the agents can do. 
We provide  upper and lower bounds on the collective losses of the agents, 
with a polynomially small failure probability,
where the notion of loss is with respect to a plausible baseline we define.
These bounds are tight up to constant factors.

We highlight two aspects of this work. First, in our model, agents have a finite time in which to enjoy their matches, namely the minimum of their remaining lifetime and that of their partner; this implies that unmatched agents become less desirable over time, and suggests that their decision rules should change over time. 
Second, we use a discrete rather than a continuum model for the population. 
The discreteness causes variance which induces localized imbalances in the two sides of the market.
One of the main technical challenges we face is to bound these imbalances.

In addition, we present the results of simulations on moderate-sized problems for both the discrete and continuum versions. For these size problems, there are substantial ongoing fluctuations in the discrete setting whereas the continuum version converges reasonably quickly.
\end{abstract}

\section{Introduction}
\label{sec::intro}
What strategies make sense when deciding whether to commit to a long-term relationship? We are interested
in pairings between members of two sets of agents, such as an employer offering a job and a worker accepting, a woman (or man) proposing marriage to a person of the opposite sex,\footnote{Single-sex marriages could also be studied, but then there would be just one set of agents. In fact, this does not appear to significantly affect our results, but in this work we have focused on the case of two sets of agents.} a landlord agreeing to rent an apartment to a potential renter.

The key feature of these relationships is that the longer they last, the greater the utility they provide;
for simplicity, we assume this utility is linear in the duration of the match. Nonetheless, as a rule agents do not choose to match as soon as they receive a proposal, for different potential partners may provide different utilities. An employer may be supportive or not, a marriage may be happy or not; the possibilities are myriad. Agents seek to assess the utility of a proposed match and then decide whether to accept or keep searching (such an assessment might be implicit). These judgements can be based on some combination of idiosyncratic factors and commonly shared perspectives. Both sides of a potential match are making this assessment, and a match happens only if both sides accept it.

Assessing potential matches takes time and therefore an agent can consider only a relatively small number of potential matches at any one time. In many circumstances, choices are offered on a take it or lose it basis. Typically, job offers are made with a short decision window.
While marriage or its equivalents have many cultural variations, as a rule offers of marriage when made are accepted or declined; it would be unusual to collect multiple offers and only then decide (in the somewhat unlikely event the parties on the other side would be willing to wait).
Again, for simplicity, we assume agents can consider only one match at a time. 

Furthermore, agents are aware of time slipping by. An unemployed worker cannot afford to stay unemployed indefinitely.
Businesses wish to fill open positions promptly as they need workers to carry out the duties of these open positions.
Many men and women appear to want to pair sooner rather than later (whether the pairing is called marriage or not).
We see two forces at work here: one is the ongoing utility from a match, which starts only when the match is formed.
The second is that at least in some circumstances partners become less desirable as they become older.

We are interested in two questions: 
\begin{center}
    What decision rules make sense and how can their effectiveness be measured?
\end{center}
Each potential decision rules provides a balance between the urge to form a match soon so as to have a longer time in which to enjoy it, and the desire to continue searching in the hopes of finding a better match. 

The equilibrium properties of decision rules have been studied previously in models with a continuum population, a continuum model for short~\cite{Adachi03,BurdettC97,BurdettC99,BurdettW98, smith2006marriage,shimer2000assortative,bloch2000two,eeckhout1999bilateral,lauermann2014stable, mcnamara1990job, damiano2005unravelling}.
In these works, agents are assumed to arrive according to a variety of processes, such as a Poisson process.
In some of these works, they are also assumed to use time discounting of future utility.
Either they have infinite lifetimes in which to seek matches or they depart---die---according to another process.
We discuss this in more detail in the related work section below.
Each agent has an intrinsic appeal, a numeric value, called \emph{charm} in Burdett and Coles~\cite{BurdettC99}.
The utility an agent derives from a match is assumed to be an increasing function of their partner's charm.
Agents receive match proposals at a fixed rate and agents either accept or reject a match immediately; for a match to succeed both participating agents must agree to it.
One natural class of agent strategies are reservation strategies; an agent will accept a proposed match exactly if the partner has charm at least $c$. Typically the chosen $c$ is a function of the agent's own charm.
The right choices of reservations $c$ yield equilibrium strategies.

In contrast, we study this problem in a discrete, albeit stochastic, setting.
By this we mean that a finite number of agents arrive at each time step; we also choose time to be discrete.
In addition, we model lifetimes differently, viewing all lives as having duration $T$.
This has the effect of making agents less demanding over time which we believe is a real
effect, and an effect that will not arise with a departure rate that stays the same over time.

Discreteness introduces variance, which leads to localized imbalances in the numbers of men and women
(by localized, we mean agents of a given age and charm). The analysis and bounding of these imbalances
are the largest challenge we face, and while asymptotically small, for moderate values
of our parameters these are non-trivial quantities, as confirmed by our simulation results.
This is in sharp contrast to a continuum setting, where there will be no variance.
Finally, it is not clear that our setting will converge to an equilibrium or near-equilibrium, and while our simulations for moderate parameter values suggest a certain level of stability, they also show that there is continuing substantial variability.
In any event, our concern is to understand the quality of the outcomes:
in a sense we make precise shortly, our model achieves near-optimal utility with high probability.

\paragraph{Roadmap} In section~\ref{sec::model} we formally define our setting, and in section~\ref{sec::results} we state our results. 
Following some preliminaries in section~\ref{sec::prelim},
we present our lower bound in section~\ref{sec::lower-bound}, and outline
the construction for our upper bound in section~\ref{sec::upper_bound}.
In section~\ref{sec::simulations} we describe our simulation results and
we conclude in section~\ref{sec:open-problems} with some additional remarks.
Many proofs are deferred to the appendix.

\subsection{Related Work}
\label{sec::related}

Rogerson~\cite{RogersonSW05} surveyed issues of search cost and bargaining in job markets.
More recently, Chade, Eeckout and Smith~\cite{ChadeES17} gave a broad survey of matching in economic models,
covering search with and without costs, and settings with and without transferable utility.
We focus on settings with search costs and no transferable utility.
Even in this domain there are many works.
We characterize these works w.rt.\ multiple dimensions.

The first is the treatment of time, both as regards arrivals and departures.
Most papers assume agents remain in the market till they are matched.
A few allow matches to be broken via a Poisson process (e.g., jobs end, partners divorce) and then the agents
return to the market; see Shimer and Smith~\cite{shimer2000assortative} and Smith~\cite{smith2006marriage}.
Others have agents ending their participation via various random processes:
Burdett and Wright~\cite{BurdettW98} use a Poisson process, Adachi~\cite{Adachi03} uses an exponential random variable, and Lauermann and Noldeke~\cite{lauermann2014stable} use an exogeneous rate.
Arrivals are similarly varied.  Poisson processes in
Burdett and Coles~\cite{BurdettC97}, Smith~\cite{smith2006marriage}, and Shimer and Smith~\cite{shimer2000assortative}. Cloning: when agents leave due to a match they are replaced by clones thereby keeping the available matches unchanged; see Adachi~\cite{Adachi03} and Burdett and Wright~\cite{BurdettW98}.
Fixed arrival rates: see Eeckhout~\cite{eeckhout1999bilateral}, and Lauermann and Noldeke~\cite{lauermann2014stable}. Finally, no new arrivals: see Damiano, Hao and Suen~\cite{damiano2005unravelling}, and McNamara and Collins~\cite{mcnamara1990job}.

The second dimension is the choice of utility model.
These are all functions of the partner's charm, though there is considerable variation.
The most common is that the utility an agent gains is a non-decreasing function, either linear~\cite{BurdettC97} or more general~\cite{smith2006marriage,eeckhout1999bilateral}; some papers allow for time discounting~\cite{Adachi03,bloch2000two};
the utility can be the product of the partners' charms~\cite{damiano2005unravelling};
or it is given by independent random variables for each pair of agents~\cite{BurdettW98,mcnamara1990job};
another option is that the agents obtain their utility by dividing
a reward which is a function of their individual charms~\cite{shimer2000assortative}.

The final dimension is the choice of equilibrium model.
Most of the papers consider a steady state equilibrium;
McNamara and Collins~\cite{mcnamara1990job} consider Nash Equilibria,
and Damiano, Hao and Suen~\cite{damiano2005unravelling} analyze
a multi-round dynamic equilibrium.

The tension between taking a choice now and waiting for potentially better options arises in multiple other domains,
including secretary problems\cite{Ferguson89}, online matching\cite{KarpVV90}, matching market thickening~\cite{AkbarpourLG20,BaccaraLY20},
and regret minimization~\cite{blumlearning}.
In spirit, the secretary problem seems the most analogous as it involves a single decision, albeit by just a single agent. We discuss it briefly in the next paragraph. In contrast, online matching has a centralized decision maker that seeks to optimize the outcome of many choices. Regret minimization occurs in a distributed setting, however here each agent makes multiple decisions over time, with the goal of achieving  a cumulatively good outcome; again, this seems quite distinct from our setting.
Market thickening is used in contexts where a global matching is being computed, which seem unlike the random matches on offer in our setting.

The standard secretary problem is expressed in terms of ranks.
A cardinal version was considered by Bearden~\cite{Bearden06}; 
here the goal is to maximize the expected value of the chosen
secretary, with values uniform on $[0,1]$.
For each applicant the decision maker learns whether they are the best so far.
Bearden shows the optimal strategy is to reject the first $\sqrt{n}-1$ candidates, 
and then choose the first candidate
who meets the ``best so far'' criteria.
Clearly, the expected value of the selected secretary is
$1-\Theta(1/\sqrt{n})$,
which is analogous to the bounds we obtain, although the settings appear quite distinct.
Bearden argued that the payoff rule in this version of the
problem is more natural that the classic version.
The problem of maximizing the duration of a relatively best choice
has also been considered~\cite{Ferguson89}.

\section{The Model}
\label{sec::model}

We consider a setting in which, at each time step, $n$ agents enter a matching pool. Agents exit the pool either when they are matched or if they have been in the pool for $T$ time steps. There are two types of agents, called men and women.  Each match pairs a man with a women. At each time step the agents are paired uniformly at random. Each pair comprises a proposed match. Each agent in a pair can accept or reject the proposed match as they prefer; a match occurs only if both agents accept it. 

In a discrete setting, a random pairing seems more natural than having pairs arrive one by one, for the process
of pairing will proceed in parallel, and pairs are necessarily mutually exclusive.
While in practice the pairings under consideration at any one time will not cover the
whole of the smaller side of the population, considering a maximal matching seems a reasonable simplification.

We assume agents evaluate their potential partners using
cardinal values, and furthermore these are common values: every agent of the opposite type (gender) has the same value $v_i$ for agent $i$. In the terminology of Burdett and Coles, this is agent $i$'s charm.

We associate two parameters $v_i$ and $t_i$ with agent $i$. $v_i$ is the agent's charm and $t_i$ is the total time remaining before agent $i$ is forced to exit the pool. Agent $i$ derives utility $v_j\cdot \min(t_i,t_j)$ when matched with agent $j$.
We assume that the values lie in the range $[T,2T)$, and that an agent's value, chosen when it enters the pool, is one of $\{T, T+1,\ldots, 2T-1\}$, picked uniformly at random. 
We note that the relative utilities of an agent are scale free;
in other words, the range assumption is equivalent to assuming the values lie in the range $[1,2]$.
We could have used a separate discretization for the values, but we preferred to avoid an additional parameter. Furthermore, it would not affect the results qualitatively.

Entering agents are either male or female with equal probability. 

Throughout this work it will be useful to view the market as a $T\times T$ size box, with agents
located at grid points. The box is indexed by value
on the horizontal axis and
by time on the vertical axis. 
Consider the set of $T$ points on the top edge: $\{(T,0),(T+1,0),\ldots,(2T-1,0)\}$. Agents enter the market at one of these points, picked uniformly at random.  At each time step, an agent either matches and leaves the box or moves down vertically by $1$ unit. After $T$ steps, if unmatched for all these times, the agent exits the box (at the bottom).

\paragraph{A Reasonable Notion of Loss}
In a single gender version of this setting, the total utility derived by the $n$ agents that enter at any one time step is at most $n \cdot \sum_i v_i\cdot T$; in the two-gender case, by applying a Chernoff bound, one can obtain a similar bound with high probability.
This bound can easily be achieved if all agents simply accept whatever match is proposed to them in the very first step in which they enter the matching pool. However such behavior seems implausible for high value agents, as their expected utility would be much smaller than what they might reasonably hope to achieve. Consequently, we set $v_i \cdot T$ as a reasonable target for $i$'s achieved utility. Based on this, we define the \emph{total loss} suffered by the agents to be:
$$\sum_{\substack{i\in \{\text{agents obtaining utility}\\~~~~~~~\text{less than their worth}\}}} v_i \cdot T-\text{utility obtained by agent } i.$$

This measure captures the intuition that agents who obtain less than their worth due either to a lower value partner, or to accepting a match only later on in the process, are suffering losses. We want to capture how much utility is lost compared to the benchmark in which each agent gets an equal value partner for the whole length $T$ time period. It also addresses what is implausible about the naive solution, in which all agents immediately accept whatever match is proposed to them, and which maximizes the usual notion of social welfare. 

It is not clear how to determine an optimal strategy, let alone whether it can be computed feasibly.
For a truly optimal strategy would incorporate the effects of past variance, a level of knowledge that seems implausible in practice;
and even an ex-ante optimal strategy seems out of reach.
Instead, we will present a strategy, which we call the \emph{reasonable strategy}, which seeks to ensure that if it is followed by all the players, then the total loss will be at most a constant factor larger than what could be achieved by the optimal strategy. Actually, we introduce two strategies, and the second one, called the \emph{modified reasonable strategy}, is the one we analyze.

\section{Results}
\label{sec::results}

We obtain a lower bound on the total loss suffered by agents; no matter their behavior, they will, with high probability, suffer an average loss of $\Omega(T\sqrt{T})$.

\begin{theorem}
\label{thm::lb-loss-two_sex}
Suppose the matching market runs for $\tau$ time steps. If $16\leq T+1\leq n, c\geq1,T\leq \tau \leq n^c$, and $n\geq 96T(2c+2)\ln n$, then, over $\tau$ time steps, whatever strategies the agents use, with probability at least $1-\frac{1}{4n^c}$, 
the average loss per agent is at least $\frac{T\sqrt{T}}{20}$.
\end{theorem}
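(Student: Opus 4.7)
The plan is to lower bound the expected per-agent loss by $\Omega(T\sqrt{T})$ through a trade-off argument, then upgrade to a high-probability bound via concentration. The intuition is that under random pairing, each agent faces an unavoidable tension: accepting quickly yields value-gap loss, while waiting for a better partner yields time-waste loss, and these two quantities are minimized together at $\Omega(T\sqrt{T})$.

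The first step is a per-pair loss decomposition. For any matched pair $(i,j)$, let $W_{ij} = T - \min(t_i, t_j)$ denote their joint wasted time. A case analysis on the signs of $v_i - v_j$ and of $W_{ij} - (v_i - v_j)T/v_i$ yields
\[
L_i^+ + L_j^+ \;\ge\; T \cdot |v_i - v_j| + T \cdot W_{ij}.
\]
Each unmatched agent contributes at least $T^2$ to the total positive loss, so the total loss is bounded below by $T$ times the sum of matched value gaps plus $T$ times the sum of matched waits, plus $T^2$ times the unmatched count.

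The heart of the argument is to show that, regardless of strategies, the expected contribution $|v_i - v_{j(i)}| + W_{i,j(i)}$ is $\Omega(\sqrt{T})$ for a constant fraction of agents $i$. Consider an agent $i$ hoping for loss at most $T\sqrt{T}/20$. The per-pair bound forces $i$ to match within a value window of width $\sqrt{T}/20$ and within $\sqrt{T}/20$ time steps of their arrival. The number of such ``good'' opposite-gender partners present in the pool at any one step is bounded above by the recent arrivals of the right values and ages, roughly $n/(10\sqrt{T})$ per step in the value window times $\sqrt{T}/20$ permissible ages, i.e.\ $O(n)$ divided by a large constant, while the opposite-gender pool has size at least $n/3$ with high probability from the fresh arrivals alone (by Chernoff). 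Thus the per-step probability that $i$ is paired with a good partner is $O(1/\sqrt{T})$, and so over the $\sqrt{T}/20$ steps in which such a match could occur, the expected number of good encounters is $O(1)$. By Markov, with constant probability $i$ has no good encounter in time and therefore suffers loss at least $T\sqrt{T}/20$, yielding expected loss $\Omega(T\sqrt{T})$ per agent.

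To pass from expected to high-probability loss, I would apply Chernoff bounds to the independent arrivals (to control the agent-population profile of values and genders) and an Azuma--Hoeffding argument to the random pairings across time steps. Each agent's contribution to the loss is bounded by $O(T^2)$, so under the hypotheses $n \ge 96T(2c+2)\ln n$ and $\tau \le n^c$ the total loss concentrates sufficiently around its mean to give the claimed probability $1 - 1/(4n^c)$. The main obstacle is making the per-step pairing-probability bound rigorous against adversarial strategies: picky strategies may accumulate high-value agents in the pool, raising the density of a narrow value window, but this accumulation is fundamentally capped by the finite lifetime $T$, and what really matters is the ratio of relevant \emph{fresh} arrivals to the overall opposite-gender pool. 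Formalizing this ratio, together with Chernoff controls on pool sizes in both directions and a careful handling of age (to avoid matches that nominally fall in the value window but incur waiting loss), is the delicate technical step requiring the most care.
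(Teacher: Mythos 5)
Your per-pair decomposition $L_i^+ + L_j^+ \ge T\,|v_i - v_j| + T\,W_{ij}$ is correct (since $v_j \ge T$, the higher-value partner already absorbs both terms), and it matches the case analysis the paper uses. The gap is in the heart of your argument: the claim that a fixed agent's per-step probability of a ``good'' encounter is $O(1/\sqrt{T})$, hence $O(1)$ expected good encounters over $\sqrt{T}/20$ steps. By your own count, the good partners present at a single step number about $\frac{n}{2T}\cdot\frac{\sqrt{T}}{10}\cdot\frac{\sqrt{T}}{10} = \Theta(n)$ divided by a constant (a value window of width $\Theta(\sqrt{T})$ times $\Theta(\sqrt{T})$ permissible ages), while the opposite-gender pool is only guaranteed to be $\Omega(n)$; the ratio is a \emph{constant}, not $O(1/\sqrt{T})$, so the expected number of good encounters over $\sqrt{T}/20$ steps is $\Theta(\sqrt{T})$ and Markov gives nothing for large $T$. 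This is not just a constant-tracking slip: there are strategy profiles under which specific agents really do get a good encounter with probability $1-o(1)$. For instance, if only the agents in one narrow value window are selective (accepting exactly close-in-value, recently arrived partners) while everyone else matches immediately, the window's unmatched population equilibrates so that a typical agent in it has $\omega(1)$ expected good encounters within $\sqrt{T}/10$ steps. The theorem survives because those agents are a vanishing fraction of the population, but your per-agent argument, applied to them, is simply false; and the heuristic that ``accumulation is capped by the finite lifetime'' does not rescue it.

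The paper's proof sidesteps exactly this by replacing the per-agent probability bound with a global averaging argument. It partitions values into columns of width $w=\sqrt{T}/4$, defines the column match rate $p_i$ as the probability that a man in column $i$ is offered a safe (adjacent-column) match, and observes via Lemma~\ref{lem::match rate} that $\sum_i p_i \le 3$, because each woman is counted in at most three columns and the denominator is the full population of one side. Combined with a Chernoff bound of $O(nw/T)$ on the number of new men per column, this caps the \emph{total} number of safe matches obtained by a cohort at $O(nw^2/T) = O(n/16)$ over its first $w$ steps --- regardless of how unevenly the individual $p_i$ are distributed --- so a constant fraction of \emph{every} cohort either matches unsafely or waits at least $w$ steps, incurring loss $wT$ either way. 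If you want to repair your write-up, this switch from ``each agent fails with constant probability'' to ``at most a constant fraction of each cohort can succeed'' is the missing idea; your decomposition and concentration scaffolding can then be kept essentially as is.
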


On the other hand, we construct a strategy profile, which if followed by all the agents, leads, with high probability, to a total loss of at most $\mathrm{O}(T\sqrt{T})$.

\begin{theorem}
\label{thm::ub-loss-two_sex}
 Suppose $2T\leq\tau\leq n^c$, $c\geq 1$, $676\leq T$, and $n \geq (3654 + 2436e^{12} + 546(e^{12} + 1) c)^2(3c+4) T^3 (\log_2 n)^2 \ln n.$ Then, over $\tau$ time steps, if all agents follow the modified reasonable strategy, with probability at least $1-\frac{1}{n^c}$, 
the average loss per agent is at most $11T\sqrt{T}$.
\end{theorem}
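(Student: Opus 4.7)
The plan is to pair a careful definition of the modified reasonable strategy with a concentration argument showing that the market stays approximately ``balanced'' under that strategy, in the sense that every (value, age) region contains close to its expected number of unmatched agents of each gender. Given the target bound of $O(T\sqrt{T})$ average loss per agent, and given that each agent's worth is $\Theta(T^2)$, I would work with value-bands of width $\Theta(\sqrt{T})$: an agent of value $v$ and remaining lifetime $t$ accepts a partner iff the partner's value lies in a suitable band around $v$, with the threshold loosened as $t$ shrinks --- this loosening is presumably the ``modification'' relative to a pure threshold strategy, designed to handle boundary effects near the end of an agent's lifetime.

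Under such a strategy the per-step matching probability for a typical agent should be $\Omega(1/\sqrt{T})$, provided the market is balanced: the chance of being paired with a partner in one's own band is $\Theta(1/\sqrt{T})$, and given approximate gender parity within each (band, age) cell, a constant fraction of these meetings yield mutual acceptances. So the expected waiting time is $O(\sqrt{T})$, contributing a delay loss of $O(v\sqrt{T}) = O(T\sqrt{T})$ per agent. The value-mismatch loss per match is at most $(\text{band width}) \cdot T = O(T\sqrt{T})$. Agents who remain unmatched until they are forced to leave contribute $vT = O(T^2)$ each, but a standard tail estimate shows only an $O(1/\sqrt{T})$ fraction of agents are of this kind, amortising to $O(T\sqrt{T})$ per agent. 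Summing these three sources yields the claimed $O(T\sqrt{T})$ bound, and the constants in the strategy can be tuned to reach the stated coefficient $11$.

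The main obstacle, and the source of the stringent hypothesis on $n$, is bounding the local imbalances that the authors single out as the principal technical challenge. The population of a given (band, age, gender) cell evolves under a sequence of binomial inflows from the cell above and binomial outflows from matchings across the gender boundary, and an imbalance arising at some point can propagate downward in age and persist over time. My plan is to establish an invariant --- roughly, that in every (band, age, gender) cell the deviation from its target occupancy is at most $O(\sqrt{(n/\sqrt{T})\, \log n})$ --- and to prove it inductively across time steps using Chernoff or Azuma-type concentration together with a union bound over the $O(\tau T \sqrt{T})$ cells visited in $\tau$ steps. Provided the invariant holds at the start of a step, the expected inflows and outflows in that step behave essentially as in the continuum model, and concentration propagates the invariant to the next step; the hypothesis on $n$ is calibrated so that the permitted deviation is a small fraction of the target population, keeping effective acceptance probabilities close to their ideal values. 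Once the balance invariant is in hand, the match-probability, waiting-time, and unmatched-fraction estimates become routine concentration arguments, and a single union bound over all failure events delivers the $11T\sqrt{T}$ bound with probability at least $1 - 1/n^c$.
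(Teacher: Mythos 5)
Your high-level architecture matches the paper's: a strategy built from value bands of width $\Theta(\sqrt T)$ that widen with age, an inductively maintained invariant on populations and gender imbalances in each region, and a final accounting that converts the invariant into the $O(T\sqrt T)$ loss bound. But the step you describe as routine is exactly where the argument lives, and as stated it does not go through. You propose to propagate the balance invariant ``inductively across time steps using Chernoff or Azuma-type concentration together with a union bound.'' The difficulty is that a gender imbalance in a cell is not self-correcting: the random matching within a strip merely \emph{redistributes} the male--female surplus among the diagonals of that strip (the net imbalance of a strip is conserved by matches), and each step injects a fresh fluctuation of order $\sqrt{n/\sqrt T}$. So if you only have ``invariant at step $t$ plus one-step concentration,'' the deviation behaves like a random walk and grows like $\sqrt{t}\cdot\sqrt{n/\sqrt T}$, which over $\tau=n^c$ steps destroys the invariant. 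The paper's Theorem~\ref{thm::imbalance_bound} escapes this by decomposing the imbalance into the contributions $X(d,\tau)$ and $Y(d,\tau)$ generated at each past step, proving a monotone-dispersal property (Claim~\ref{clm::distr-of-X}) that caps the contribution of any one step's fluctuation to any one strip at $O(B)$, and then showing via a potential-function argument (Claims~\ref{clm::remain::type::1} and~\ref{clm::remain::type::2}) that each step's fluctuation is flushed out of the system after $O(T\log n)$ steps, so only the most recent $O(T\log n)$ steps contribute. Without some version of this ``dispersal and exit'' mechanism, your induction has no way to prevent accumulation, and this is the missing idea.

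A secondary issue is the loss accounting. Your per-agent framing (expected waiting time $O(\sqrt T)$, hence delay loss $O(T\sqrt T)$ per agent) presumes a per-step match probability of $\Omega(1/\sqrt T)$ for every agent, but under the paper's strategy agents who survive into the deep Type~2 strips face match rates as low as $\Theta(1/T)$ per step, so the waiting time is not uniformly $O(\sqrt T)$ and per-agent concentration of waiting times (which are also dependent across agents) is awkward. The paper avoids this entirely with an aggregate flow identity: an agent matched at age $i$ occupies the pool for $i+1$ steps, so $\sum_i (i+1)n_i \le \sum_t P(t) \le 2n\sqrt T\,\tau$ follows directly from the population upper bound, and combined with the per-match loss bound $4Ti+2T\sqrt T$ of Lemma~\ref{lem::loss_in_strip} this yields the total loss bound without any lower bound on individual match rates. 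If you repair the imbalance argument, you should also adopt this amortized accounting (or something equivalent) rather than trying to control individual waiting times.
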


Our results hold for large $n$ and $T$. Furthermore, Theorem \ref{thm::ub-loss-two_sex} applies only when $n$ is much larger than $T$. However, our numerical simulations suggest that similar results hold even for quite moderate values of $n$ and $T$ and also do not require $n$ to be much bigger than $T$.
To simplify the presentation, we assume that $T= 4^i$ for some integer $i>0$, though the bounds extend to all values of $T$, possibly with somewhat larger constants.
\section{Preliminaries}
\label{sec::prelim}
We review the notion of negative cylinder dependence and make a simple observation regarding the matching procedure.

\begin{lemma}
\label{lem::match rate}
Suppose there are $m$ men and $w$ women in total.
Further suppose that for a given man $x$, there are $w'$ women for which a proposed
match would be accepted by both sides.
Then a random match will provide man $x$ such a match with probability 
$w'/\max\{m,w\}$.
\end{lemma}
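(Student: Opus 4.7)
The plan is to consider two cases based on which side of the market is larger, show that in each case the marginal probability that man $x$'s proposed partner is among the $w'$ acceptable women equals $w'/\max\{m,w\}$, and then observe that the two cases give the same formula.

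First, I would interpret the random matching procedure. At each step the market produces a uniformly random pairing between the two sides, which yields a uniformly random injection from the smaller side into the larger side (equivalently, a uniformly random matching of size $\min\{m,w\}$).

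Case 1: $m \le w$. Under a uniformly random injection from the men to the women, man $x$ is paired with each individual woman with probability exactly $1/w$, by symmetry among the women. Summing over the $w'$ acceptable women gives probability $w'/w = w'/\max\{m,w\}$.

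Case 2: $m > w$. Now the uniformly random injection goes from the women to the men. By symmetry among the men, each man (including $x$) is selected to be matched with probability $w/m$; conditioned on $x$ being matched, his partner is a uniformly random woman, so the conditional probability she is acceptable is $w'/w$. Multiplying, the unconditional probability that $x$ is matched to an acceptable woman is $(w/m)\cdot(w'/w) = w'/m = w'/\max\{m,w\}$. Since both cases give the same expression, the claim follows.

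There is no real obstacle here; the only thing to be careful about is not to double-count by conflating the ``man $x$ is matched'' event with the ``man $x$ is matched to someone acceptable'' event in Case 2, and to invoke the correct symmetry (over the larger side) in each case.
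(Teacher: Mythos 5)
Your proposal is correct and follows essentially the same route as the paper's proof: a case split on which side is larger, with the probability $w'/w$ in the balanced-or-more-women case and the product $(w/m)\cdot(w'/w) = w'/m$ in the more-men case. The only difference is that you spell out the symmetry argument slightly more explicitly, which the paper leaves implicit.
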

\begin{proof}
If there are at least as many women as men, every man will be offered a match,
and the probability that it is accepted by both sides is $w'/w$.
While if there are more men, a man will be offered a match with probability
$w/m$, and thus the probability that he is offered an acceptable match
is $w/m \cdot w'/w = w'/m$.
\end{proof}

\paragraph{Negative Dependence}
Consider a set of $0$-$1$ valued valued random variables $\{X_i\}_{i=1}^{n}$. The set $\{X_i\}$ is $\lambda$-\emph{correlated} if
\begin{align*}
E\Big[ \prod_{i=1}^{n} X_i\Big]\leq\lambda\cdot\prod_{i=1}^n E\Big[X_i\Big],
\end{align*}
where $\lambda\geq 1$.  
The set $\{X_i\}$ is \emph{negative cylinder dependent} if $\{X_i\}$ and $\{1-X_i\}$ are both $1-$correlated. In our arguments we will apply Chernoff-like bounds to negative cylinder dependent variables. We will use the following lemmas; their proofs are deferred to Appendix \ref{appn::prelim}.

\begin{lemma}
\label{lem::negative_dependence_two_sex}
Let $S_m$ and $S_w$ be two sets of $N_1$ and $N_2$ agents respectively, Suppose that $N_1\leq N_2$. Let $S_a=\{a_1,a_2,\ldots, a_n\}\subseteq S_m$ and $S_b =\{b_1,b_2,\ldots, b_r\} \subseteq S_w$. Consider a matching between $S_m$ and $S_w$ chosen uniformly at random.
Let $X_i$ be an indicator variable which equals $1$ if agent $a_i$ is paired with an agent in $S_b$, and $0$ otherwise. Then the set $\{X_i\}$ is negative cylinder dependent and for any $\delta>0$,
\begin{align*}
    \chance\left[\sum X_i \geq (1 + \delta) \mu \right] \leq e^{- \frac{\delta^2 \mu}{3}} \text{~~and~~}
    \chance\left[\sum X_i \leq (1 - \delta) \mu \right] \leq e^{- \frac{\delta^2 \mu}{2}}.
\end{align*}
\end{lemma}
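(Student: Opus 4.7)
The plan is to split the lemma into two parts: first verify that $\{X_i\}$ is negative cylinder dependent, and then deduce the two tail bounds by the standard moment-generating-function derivation.

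To establish negative cylinder dependence, I would view a uniformly random matching (with $N_1 \le N_2$) as a uniformly random injection $f\colon S_m \to S_w$, of which there are $N_2!/(N_2-N_1)!$. For any index set $I \subseteq \{1,\dots,n\}$, the event $\bigcap_{i\in I}\{X_i=1\}$ occurs exactly when $f$ sends each $a_i$, $i\in I$, into $S_b$. A direct count of the injections meeting this restriction gives
\[
E\Bigl[\prod_{i\in I} X_i\Bigr] \;=\; \prod_{k=0}^{|I|-1}\frac{r-k}{N_2-k},
\]
whereas $\prod_{i\in I} E[X_i] = (r/N_2)^{|I|}$. The key observation is $(r-k)/(N_2-k) \le r/N_2$ for $0 \le k \le r-1$, which is a one-line cross-multiplication using $r \le N_2$. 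Taking the product term by term yields $E\bigl[\prod_{i\in I} X_i\bigr] \le \prod_{i\in I} E[X_i]$. The identical computation with $N_2-r$ replacing $r$ (counting matches into $S_w \setminus S_b$) gives the same inequality for $\{1-X_i\}$, using $(N_2-r-k)/(N_2-k) \le (N_2-r)/N_2$. Together these two facts are exactly negative cylinder dependence, once one reads the paper's definition of $\lambda$-correlated as applying to every subset (which is the form the Chernoff argument actually consumes).

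With negative cylinder dependence in hand, the Chernoff bounds follow by the usual Panconesi–Srinivasan style argument. For any $t>0$ one writes
\[
E\bigl[e^{t\sum_i X_i}\bigr] = E\Bigl[\prod_i \bigl(1 + (e^t-1)X_i\bigr)\Bigr],
\]
expands into a sum over subsets $I$ of $(e^t-1)^{|I|}\prod_{i\in I} X_i$, applies the subset correlation bound term by term, and reassembles to obtain $E\bigl[e^{t\sum_i X_i}\bigr] \le \prod_i E\bigl[e^{t X_i}\bigr]$. The standard derivations for independent Bernoulli sums then transfer verbatim, giving both $\chance[\sum X_i \ge (1+\delta)\mu] \le e^{-\delta^2\mu/3}$ and $\chance[\sum X_i \le (1-\delta)\mu] \le e^{-\delta^2\mu/2}$; for the lower tail the same manipulation is applied to $\{1-X_i\}$ with $t<0$.

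I do not expect any real obstacle: the combinatorial identity for joint probabilities is clean and the two monotonicity inequalities $(r-k)/(N_2-k) \le r/N_2$ and $(N_2-r-k)/(N_2-k) \le (N_2-r)/N_2$ are immediate, so the whole argument reduces to these two lines plus a citation of the standard MGF proof.
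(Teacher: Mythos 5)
Your proposal is correct and follows essentially the same route as the paper: the paper likewise computes $E[\prod_{i\in S} X_i]$ for an arbitrary subset $S$ as the telescoping product $\frac{r}{N}\cdot\frac{r-1}{N-1}\cdots\frac{r-k+1}{N-k+1}$ (via the chain rule of conditional probabilities rather than your injection count, but the resulting expression and the termwise bound by $(r/N)^k$ are identical), handles $\{1-X_i\}$ symmetrically, and then invokes Theorem 3.4 of Panconesi--Srinivasan with $\lambda=1$, which is exactly the moment-generating-function transfer you sketch.
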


\begin{lemma} \label{lem::chernoff_bound}
If $\{X_i\}_{i = 1}^n$ are $1$-correlated random variables taking value $\{0, 1\}$ and $\overline{\mu}$ is an upper bound on $\mu = E[\sum X_i]$, then, for any $\delta>0$,
\begin{align*}
    \chance\left[\sum X_i \geq (1 + \delta) \overline{\mu} \right] \leq e^{- \frac{\delta^2 \overline{\mu}}{3}} \text{~~and~~}
    \chance\left[\sum X_i \leq \mu-\delta \overline{\mu} \right] \leq e^{- \frac{\delta^2 \overline{\mu}}{2}}.
\end{align*}
\end{lemma}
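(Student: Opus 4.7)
The plan is to adapt the classical Chernoff moment-generating-function argument, substituting $1$-correlation for independence at the crucial step. For the upper tail, apply Markov's inequality to $e^{t\sum_i X_i}$ with parameter $t > 0$ to be tuned:
\[
\chance\!\left[\sum_i X_i \geq (1+\delta)\bar{\mu}\right] \leq e^{-t(1+\delta)\bar{\mu}}\cdot E\!\left[\prod_{i=1}^n e^{tX_i}\right].
\]
Since each $X_i \in \{0,1\}$, we have $e^{tX_i} = 1 + (e^t-1)X_i$, so expanding the product yields
\[
\prod_{i=1}^n e^{tX_i} \;=\; \sum_{S\subseteq[n]}(e^t-1)^{|S|}\prod_{i\in S}X_i.
\]
For $t > 0$ all coefficients $(e^t-1)^{|S|}$ are non-negative, so applying $1$-correlation subset-by-subset (the natural reading of the paper's definition, since one may append constant-$1$ variables to recover a full index set) gives $E[\prod_{i\in S}X_i] \leq \prod_{i\in S}E[X_i]$. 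Reconstituting the product, using $1+x\leq e^x$, and using $\mu \leq \bar\mu$ together with $e^t-1 > 0$, gives $E[\prod_i e^{tX_i}] \leq e^{(e^t-1)\bar\mu}$. Choosing $t = \ln(1+\delta)$ then yields the standard simplification $e^{-\bar\mu[(1+\delta)\ln(1+\delta) - \delta]} \leq e^{-\delta^2\bar\mu/3}$.

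For the lower tail, a symmetric MGF argument uses $e^{-s\sum_i X_i}$ with $s > 0$. The complication is that $e^{-sX_i} = 1 - (1-e^{-s})X_i$ has a negative linear coefficient, so the product expansion has alternating signs and $1$-correlation of $\{X_i\}$ alone does not directly apply. The cleanest route is to pass to the complementary variables $Y_i = 1-X_i$: the event $\sum_i X_i \leq \mu - \delta\bar\mu$ is exactly $\sum_i Y_i \geq (n-\mu) + \delta\bar\mu$, an upper-tail event to which the previous argument applies once one knows that $\{Y_i\}$ is also $1$-correlated, which is precisely the other half of the negative cylinder dependence definition recalled just above the lemma. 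Repeating the optimization, now using the tighter Taylor expansion available in the lower-tail regime (giving $(1-\delta)\ln(1-\delta) + \delta \geq \delta^2/2$), yields $e^{-\delta^2\bar\mu/2}$.

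The main obstacle, which I would flag on reading the eventual write-up, is the lower tail: the hypothesis literally says only ``$1$-correlated'', yet a proof of this form requires $1$-correlation of $\{1-X_i\}$ as well. I expect the intended reading is that ``$1$-correlated'' in this paper is implicitly the two-sided property --- equivalently, negative cylinder dependence --- which is in any case supplied in the lemma's downstream applications by Lemma~\ref{lem::negative_dependence_two_sex}.
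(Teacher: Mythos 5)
Your proposal is correct in substance but takes a genuinely different route from the paper's. The paper treats this lemma as a pure re-parametrization: it takes the $\mu$-centered bounds $\Pr\left[\sum X_i \geq (1+\gamma)\mu\right] \leq e^{-\gamma^2\mu/3}$ and $\Pr\left[\sum X_i \leq (1-\gamma)\mu\right] \leq e^{-\gamma^2\mu/2}$ as already established (they are asserted in Lemma~\ref{lem::negative_dependence_two_sex}, resting on the cited Panconesi--Srinivasan theorem), writes $\overline{\mu}=(1+\theta)\mu$, and for the upper tail sets $1+\gamma=(1+\theta)(1+\delta)$ and uses $(\theta+\delta+\theta\delta)^2 \geq \delta^2(1+\theta)^2\geq\delta^2(1+\theta)$, while for the lower tail it sets $\gamma=\delta\overline{\mu}/\mu$ and uses $\overline{\mu}/\mu\geq 1$. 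You instead re-derive the moment-generating-function machinery from scratch, which amounts to re-proving the cited black-box theorem; this is more self-contained but buys nothing extra here. Your observation about the hypothesis is correct and worth keeping: $1$-correlation of $\{X_i\}$ alone cannot yield the lower tail, the paper's own proof silently relies on the lower-tail half of Lemma~\ref{lem::negative_dependence_two_sex} (which rests on negative cylinder dependence), and the two-sided property is what all downstream applications actually supply. Likewise, $1$-correlation must be read as holding for all subsets (as it is verified in the proof of Lemma~\ref{lem::negative_dependence_two_sex}); your ``append constant-$1$ variables'' justification is circular, but the intended definition is indeed the subset version.

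One step of your lower tail needs care as written. If you literally treat $\sum Y_i \geq (n-\mu)+\delta\overline{\mu}$ as an upper-tail deviation of $\sum Y_i$ about its mean $n-\mu$ and invoke the upper-tail bound, the resulting exponent is of order $\delta^2\overline{\mu}^2/(n-\mu)$, which is far weaker than $\delta^2\overline{\mu}/2$ whenever $n-\mu\gg\overline{\mu}$ (the typical case here). The substitution $Y_i=1-X_i$ should be used only to fix the signs in the expansion of $E\left[e^{-s\sum X_i}\right]=e^{-sn}E\left[e^{s\sum Y_i}\right]$; one must then bound $1+(e^s-1)(1-p_i)=e^s\bigl(1-(1-e^{-s})p_i\bigr)\leq e^s e^{-(1-e^{-s})p_i}$ so that the $e^{sn}$ cancels, leaving $E\left[e^{-s\sum X_i}\right]\leq e^{-(1-e^{-s})\mu}$ and hence, after optimizing, $e^{-\mu[(1-\delta')\ln(1-\delta')+\delta']}\leq e^{-\delta'^2\mu/2}$ with $\delta'=\delta\overline{\mu}/\mu$, which gives $e^{-\delta^2\overline{\mu}^2/2\mu}\leq e^{-\delta^2\overline{\mu}/2}$. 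Your citation of $(1-\delta)\ln(1-\delta)+\delta\geq\delta^2/2$ indicates you are aiming at exactly this, but the phrase ``an upper-tail event to which the previous argument applies'' describes a computation that does not produce it. Finally, a caveat shared with the paper: the simplification to $e^{-\delta^2\overline{\mu}/3}$ in the upper tail fails for $\delta$ larger than roughly $1.8$, so the ``for any $\delta>0$'' in the statement is inherited sloppiness from Lemma~\ref{lem::negative_dependence_two_sex}; it is harmless in the applications, where $\delta$ is small.
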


\section{Lower Bound on the Loss for Any Strategy}
\label{sec::lower-bound}

The intuition for this result is fairly simple. 
If an agent remains unmatched for $\sqrt{T}$ steps, then any subsequent proposed match would cause a loss of at least $\sqrt{T}$ to one of the participating agents.
Thus to avoid having average losses of
$\Omega(\sqrt{T})$,  most matches would need to occur during an agent's first $\sqrt{T}$ steps.

But we will show that for at least a constant fraction of the agents, the matches they are offered during their first $\sqrt{T}$ steps will all have the property that the values of the two agents differ by at least $\sqrt{T}$, and consequently one of the participating agents would suffer a $\sqrt{T}$ loss.
The overall result follows.

This second claim is not immediate because the probability that an agent is offered a close-in-value match might vary significantly from agent to agent and over time.

We somewhat optimize constants and consequently consider a period of time $w = \Theta(\sqrt{T})$ and value differences $w$, instead of precisely the value $\sqrt{T}$ used in the above outline.

\begin{proof} (Of Theorem~\ref{thm::lb-loss-two_sex})~
We divide the grid into width $w$
columns, where a column includes the low-value side boundary, but not the high-value boundary; one end column may be narrower. We will set the parameter $w$ later.

We consider the set of proposed matches at some arbitrary time $t$.
We say a proposed match is \emph{safe} if the paired agents are in the same or adjacent columns.
We also define the male match rate $p_i$ for column $i$ to be the probability that a man in the column has a safe match. By Lemma~\ref{lem::match rate}, this is at most the number of women in columns $i-1$, $i$, and $i+1$ divided by the maximum of the total number of women and the total number of men, which is at most the number of women in these columns divided by the total number of women. 
Clearly the sum of the male match rates over all the columns is at most $3$.
The same claim holds for the analogous female match rates.

Consider the men entering the system at time $t$, which we call the \emph{new} men.
Each column contains at most $w$ points at which agents enter the market, namely the points along the column's top edge, and each entering agent is equally likely to be a man or a woman. By applying a Chernoff bound, we see that for any given column $i$,

\vspace*{-0.1in}

$$\chance\Big[\text{\# of new men in column $i$ at time $t$} \ge \frac{(1+\delta)nw}{2T} \Big]\le e^{-\delta^2nw/6T}.$$

Applying this bound to every column over $\tau$ consecutive time steps yields:
\begin{align*}
&\chance\Big[\text{every column receives at most } \frac{(1+\delta)nw}{2T} \text{ new men} \\[-10pt]
&\hspace*{1.5in}\text{for each of $\tau$ consecutive time steps}\Big] 
\ge\Big(1-e^{-\delta^{2}nw/6T}\Big)^{\tau T/w}.
\end{align*}
Call this event $\mathcal E$.
Henceforth we condition on $\mathcal E$.

Now suppose that every time an agent was offered a safe match, they accepted it. Recall that $p_i$ is the match rate for column $i$.
By Lemma \ref{lem::negative_dependence_two_sex}, for the new men at time $t$ in column $i$, for any $t$,
$$\chance\Big[\text{\# safely matched men } \le \frac{(1+\delta)n w p_i}{2T}\Big] \ge1-e^{-\delta^2 n w p_i/6T}.$$

In fact, agents may not accept every proposed safe match; but this only reduces the number of agents safely matched, and therefore the bound on the probability continues to hold.

Furthermore, by Lemma~\ref{lem::chernoff_bound}, letting $\overline{\mu} = \frac{n w \max\{p_i, \frac{w}{T}\}}{2T}$, gives 
\begin{align*}
    \chance\Big[\text{number of safely matched men } \le \frac{(1+\delta)n w \max\{p_i, \frac{w}{T}\}}{2T}\Big] \ge1-e^{-\frac{\delta^2 n w \max\{p_i, \frac{w}{T}\}}{6T}} \ge 1-e^{-\frac{\delta^2 n w^2}{6T^2}}.
\end{align*}


Recalling that $\sum_i p_i \le 3$, and applying a union bound over all $T/w$ strips
for $w$ successive steps,
we obtain, for any given set of new men
entering at some time $t$,
over their first $w$ time steps,
\begin{align}
\label{eqn::prog-num-safe-match}
\chance\Big[\text{\# of safely matched men} \leq\frac{2(1+\delta)nw^2}{T}\Big]\ge1- \frac{T}{w} w e^{-\delta^2nw^2/6T^2}.
\end{align}

In addition, for any given set of new men, on applying a Chernoff bound, we know that
\begin{align}
\label{eqn::prob:num-new-men}
\chance\Big[\text{\# of new men} \geq\frac{n(1-\epsilon)}{2}\Big]\ge 1 - e^{-\epsilon^2n/4}. 
\end{align}
For each remaining man in each of the first $\tau-w$ sets of new men---of which there are at least $(\tau - w)\big(\frac{n(1-\epsilon)}{2} - \frac{2(1 + \delta)nw^2}{T}\big)$---
one of the following two cases must apply.
\begin{itemize}
    \item He has not been matched after spending $w$ time in the system. Now, if and when he is matched, the only way he can avoid suffering a $wT$ loss is to match with a sufficiently higher value woman. In this case the higher value woman suffers at least a $wT$ loss.
    \item He has been matched within $w$ time but it was not a safe match. In such a match whichever agent had the higher value suffered at least a $wT$ loss.
\end{itemize}

Since the system runs for $\tau$ time steps, this argument can be applied to all agents except those that enter the system during the last $w$ time steps. 
We deduce that the total loss generated by all these agents is at least
$(\tau-w)(\frac{n(1-\epsilon)}{2}- \frac{2(1 + \delta)nw^2}{T})\cdot wT$.

Note that this loss is being shared by up to $n\tau$ agents.
Hence there is an average loss of at least $\frac 12\big(wT(1-\epsilon) - \frac{4(1 + \delta)w^3T}{T}\big)\cdot\frac{\tau-w}{\tau}.$
Setting
$w = \frac{\sqrt{T}}{4}$, 
and using the lower bound on $\tau$ ($\tau\geq T$), we obtain:
$$\text{average loss per agent}\geq
\frac 12 \Big(\frac{T\sqrt{T}(1-\epsilon)}{4}-\frac{T\sqrt{T}(1+\delta)}{16}\Big)\cdot\Big(1-\frac{\sqrt{T}}{4T}\Big).$$

Now we set $\delta=\sqrt{\frac{6T^2}{nw^2}\ln(3n^{c}T\tau)}$ and $\epsilon=\sqrt{\frac 4n\ln(3\tau n^{c})}$. We would like to have $\delta\leq1$, which we enforce by our choice of constraints on $n,T,\tau$ and $c$ (namely $16\leq T\leq n, c\geq1,T\leq \tau \leq n^c$ and $n\geq 96T(2c+2)\ln n$). These constraints also  ensure that $\epsilon\leq 1/16$. Substituting $\delta \le 1$ and $\epsilon\leq 1/16$ yields: 
$$\text{average loss per agent}\geq\frac{7T\sqrt{T}}{128}\cdot\frac{15}{16}\geq \frac{T\sqrt{T}}{20}. $$

By \eqref{eqn::prog-num-safe-match}
and \eqref{eqn::prob:num-new-men},
this bound holds with probability at least
\begin{align*}
\chance[{\mathcal E}] \cdot\Big(1-\tau T  e^{-\frac{\delta^2 nw^2}{6T^2}}-\tau e^{-\frac{\epsilon^2 n}{4}}\Big)
& \ge  \Big( 1  - \frac{1}{n^c}\Big).
\end{align*}
The detailed calculation can be found in Appendix \ref{appn::loss_prob_calculation}.
\end{proof}

\section{Upper Bound on the Loss when Using the Modified Reasonable Strategy}
\label{sec::upper_bound}

The lower bound suggests that plausible agent strategies will yield a constant probability of matching
every $\sqrt{T}$ steps. This would imply that the number of agents present decreases geometrically
with agent age; more precisely, there would be a constant factor decrease for every $\sqrt{T}$ increment in
age. Then, in order to maintain match probabilities, all agents would have to be willing to match
with young agents who will accept them. In fact, the decreases we just described are far from uniform,
which makes the analysis quite non-trivial. Nonetheless, the above intuition informed the design of the following agent strategies.
The first strategy, which we call ``a reasonably good strategy'' seems quite natural, but for ease of analysis we consider a modified strategy which we prove to be asymptotically within a constant factor of optimal.

We define the \emph{worth} of an agent to be $v_i\cdot (T-t_i)$;
this is the maximum utility its partner
could derive from a match with this agent.
Note that the worth of an agent decreases as it ages.

\vspace*{-0.05in}

\paragraph{A Reasonably Good Strategy}
In this strategy an agent accepts a proposed match if it gives the agent utility at least $v_i\cdot (T-t_i)\cdot(1-\frac{1}{\sqrt{T}}-\frac{t_i}{T})$. 
The terms $1/\sqrt{T}$ and $t_i/T$ are present to approximately balance
the expected loss of utility from not matching in a single step with the marginal gain in utility agent $i$ could receive from being more demanding in terms of the minimum worth it will accept in a partner.

\vspace*{-0.05in}

\paragraph{The Modified Reasonable Strategy}
We partition the $T\times T$ size space into the regions defined below,
as shown in Figure~\ref{fig:strips}.
In the modified strategy, an agent accepts a proposed match exactly if the proposed partner lies
in the same region.
This partition uses regions of two kinds, which we call \emph{strips}. 

\begin{itemize}
\item 
Type $1$ strips: these are strips that have new people entering the strip at the top. The $i$-th Type $1$ strip is defined as the region between the parallel lines $v=2(t-1)+T+(i-1)\sqrt{T}$ and $v=2(t-1)+T+i\sqrt{T}$; they have $\sqrt{T}$ width and $\sqrt{T}/2$ height. Points on the first (left) line are included in the strip, but points on the second (right) line are excluded. There are $\sqrt{T}$ Type $1$ strips.

\item 
\emph{Type $2$ strips}: these strips do not touch the top boundary of the box. The strips are again defined by parallel lines. They have successive heights $\sqrt{T}$,
$\sqrt{T}$, $2\sqrt{T}$, and then repeatedly doubling up to $T/2$.
Here the points on the first (upper) line are excluded from the strip and the points on the second (lower) line are included in the strip. There are $\log_2 \sqrt{T} +1$ Type $2$ strips.
\end{itemize}

\begin{figure}[bht]
\centering
	\includegraphics[scale=0.45]{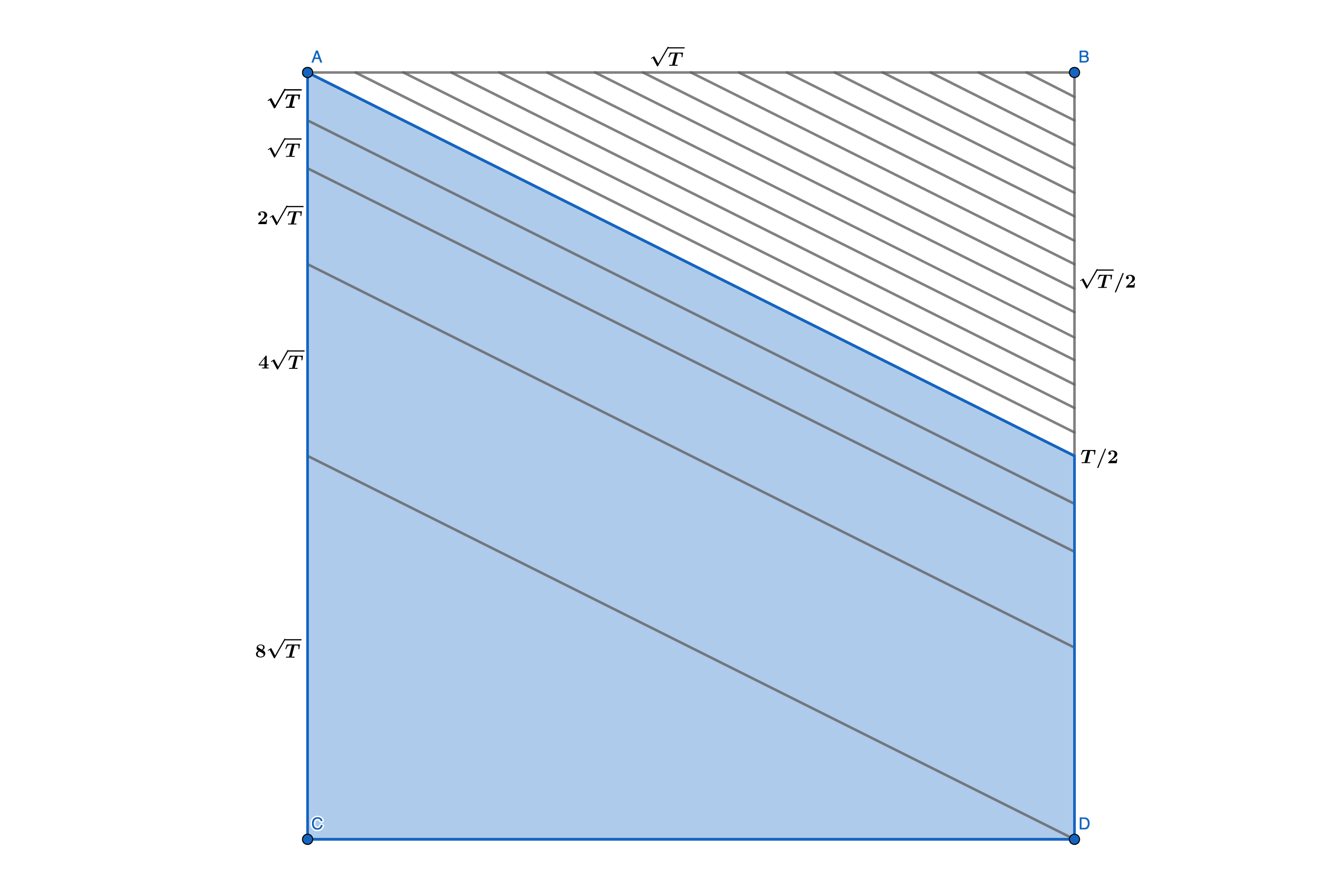}
	\caption{The two types of strips used to partition the matching pool.}
	\label{fig:strips}
\end{figure}

We note that with the previously stated reasonable strategy, agents would be willing to match with some agents outside their strip and would reject some agents in the same strip. However, using the modified strategy simplifies the analysis, for if all agents use the modified strategy, agents will definitely get accepted when they accept a match. We will prove that the modified strategy is not much worse than the optimal strategy in terms of the average loss of value suffered by an agent. 

\vspace*{-0.05in}

\paragraph{Outline of the proof of the upper bound}
Our analysis assumes the following constraints on $n$ and $T$.
\begin{align}
\label{eqn::constraints}    
\begin{array}{l}
c\geq 1,\\
T \ge 676,\\
n \geq (3654 + 2436e^{12} + 546(e^{12} + 1) c)^2(3c+4) T^3 (\log_2 n)^2 \ln n. 
\end{array}
\end{align}

The result follows from a high-probability inductive bound on the overall population, the strip populations, and the male-female imbalances in each strip. 
We start at time $t=0$. Time $t$ will refer to the moment after the new agents have entered in this step, but before the match occurs.

\begin{lemma}\label{lem::ind-bound}
Let $N$ denote the total number of strips.
Suppose that the constraints in \eqref{eqn::constraints} hold.
Then, with probability at least $1-1/n^c$, the following
inductive hypothesis $H(t)$ holds at the start of 
every time step $t$, immediately following the entry of the new agents at time $t$, for $\sqrt{T}\le t \le n^c$.

\begin{enumerate}
\item\label{itm::tot-pop}
The total population is at most $\frac{3}{2}nN+n$.
\item The population of every Type $1$ strip is at most $2.6n$.
\item The population of every Type $2$ strip is at most $\frac{7.5n\sqrt{T}}{\text{maximum height of the strip}}$.
\item The population in the bottommost Type $2$ strip is no more than $60n/\sqrt{T}$.
\item\label{itm::pop-imb}
In every strip $s$, except possibly the bottommost Type 2 strip,
the imbalance, $\Imb(s,t) = \big|\text{the number of men in $s$} 
-\text{the number of women in $s$}%
\big| 
\leq n/25\sqrt{T}$.
\end{enumerate}
\end{lemma}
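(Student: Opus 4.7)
The plan is to prove $H(t)$ by induction on $t$, and to take a union bound over all time steps in $[\sqrt{T},n^c]$ and all $N = O(\sqrt{T})$ strips. We aim to show that, conditional on $H(t)$, the probability that $H(t+1)$ fails is at most $1/(N n^{2c+1})$, so the total failure probability stays below $1/n^c$ as required. The base case at $t=\sqrt{T}$ is handled directly by applying the Chernoff/negative-dependence bound of Lemma~\ref{lem::negative_dependence_two_sex} to the first $\sqrt{T}$ batches of new entries, since by that point each strip has accumulated at most a handful of cohorts of $n/\sqrt{T}$ expected entries each, well below the claimed thresholds.

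For the inductive step I would establish the five clauses simultaneously. The population bounds (clauses~\ref{itm::tot-pop} and 2--4) come first: for each strip $s$, the inflow per step is bounded deterministically by $H(t)$ applied to the source of the inflow (new entries at the top for Type~1, or aging in from the strip above for Type~2), while the outflow due to matching is controlled by Lemma~\ref{lem::match rate}, which assigns each agent in $s$ a match probability equal to the number of opposite-gender agents of $s$ divided by $\max(m,w)$. Clause~\ref{itm::tot-pop} supplies the upper bound on $\max(m,w)$, Lemma~\ref{lem::negative_dependence_two_sex} converts the expected match count into a high-probability lower bound on outflow, and a short calculation then shows that whenever a strip sits at its claimed population threshold the outflow strictly dominates the inflow and pushes the population back down. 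The doubling heights of the Type~2 strips are chosen precisely to match the geometric decay of the steady-state populations as one moves down the box.

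The imbalance bound in clause~\ref{itm::pop-imb} is the heart of the argument. The crucial observation is that a match within a strip removes exactly one man and one woman and hence leaves $\Imb(s,t)$ pointwise unchanged; moreover the expected number of matched men in $s$ equals the expected number of matched women (both equal $m_s w_s / \max(m,w)$), so matching is gender-symmetric in expectation. Thus $\Imb(s,t)$ evolves only through the $\pm 1$ genders of inflowing agents and through fluctuations of matching around its symmetric mean. The plan is to write the change in imbalance as a sum of bounded, approximately mean-zero random contributions, express $\Imb(s,t)$ telescopically over time, and apply the Chernoff-style bound of Lemma~\ref{lem::chernoff_bound} (or an equivalent Azuma-type martingale inequality). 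Using the strip-height and population bounds from clauses~2--4, the total variance of this sum is of order $n \cdot (\text{strip height})/\sqrt{T}$, and the constraint $n \gtrsim T^3 (\log_2 n)^2 \ln n$ from~\eqref{eqn::constraints} is precisely what makes the resulting concentration tight enough to deliver $|\Imb(s,t+1)| \le n/(25\sqrt{T})$ with the required failure probability.

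The main obstacle will be controlling the imbalance in the deeper Type~2 strips, whose heights can be as large as $T/2$: over such a long window both the aging-in sequence and the matching process are driven by the entire past history of the system, so a clean martingale decomposition requires careful conditioning on this history and isolating only the fresh randomness introduced at each step. The bottommost Type~2 strip is explicitly excluded from clause~\ref{itm::pop-imb} because its expected population is already on the order of $n/\sqrt{T}$, comparable to the target imbalance bound itself, so no meaningful separation of population and imbalance is possible there. Finally, the five clauses really must be proved together at each time step: the imbalance bound needs the population bounds to estimate $\max(m,w)$, while the population bounds in turn are sensitive to asymmetries in matching between the sexes, which forces a simultaneous induction on all five clauses rather than a sequential one.
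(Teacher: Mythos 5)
Your overall architecture --- simultaneous induction on all five clauses, a per-step failure probability of order $1/n^{2c+1}$, and a union bound over the $n^c$ steps --- matches the paper, and your treatment of the population clauses (inflow versus matching outflow via Lemma~\ref{lem::match rate} and Lemma~\ref{lem::negative_dependence_two_sex}, with clause~\ref{itm::tot-pop} supplying $\max(m,w)$) is the right skeleton, though you underestimate the work in the Type~2 case, where the paper must track a shrinking population whose match rate is quadratic in its own size and resolves this with a differential-equation comparison to pin down the constant $7.5$.

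The genuine gap is in clause~\ref{itm::pop-imb}, and it is a gap of mechanism, not just of detail. Under the modified reasonable strategy every match is within a strip, so each realized match removes exactly one man and one woman from $s$: matching changes $\Imb(s,\cdot)$ by exactly zero, pointwise, not merely in expectation. So ``fluctuations of matching around its symmetric mean'' are not a source of strip-level imbalance at all, and your telescoping sum of mean-zero matching contributions is summing the wrong quantity. The actual driver of $\Imb(s,\cdot)$ is the flux of agents across the (diagonal) strip boundaries as everyone ages: each step a strip absorbs the boundary diagonal of its predecessor and sheds its own boundary diagonal, and the gender composition of those specific diagonals is a deterministic function of the entire past --- it is precisely here that matching fluctuations matter, because they redistribute imbalance \emph{among the diagonals inside a strip} and thereby determine how much imbalance reaches the boundary. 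This is why the paper descends to width-one diagonals, derives the redistribution formula of Claim~\ref{clm::update-to-I}, proves the ordering property of Claim~\ref{clm::distr-of-X} so that partial sums of the fluctuations $X(d,\tau)$ and $Y(d,\tau)$ can be bounded by a common quantity $B$ (Claim~\ref{clm::bound-on-X-one-strip}), and then shows via a potential function that fluctuations older than $\Theta(T\log n)$ steps have all but left the relevant strips (Claims~\ref{clm::remain::type::1} and~\ref{clm::remain::type::2}). The final bound is a worst-case sum of $\Theta(T\log n)$ per-generation contributions of size $O(\sqrt{n\log(\cdot)/\sqrt{T}})$, not a variance computation --- which is exactly why the constraint in~\eqref{eqn::constraints} must be $n\gtrsim T^3(\log_2 n)^2\ln n$ rather than the weaker requirement your variance heuristic would suggest. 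You correctly flag that the deep Type~2 strips are where ``a clean martingale decomposition'' becomes problematic, but the proposal offers no substitute for this diagonal-level bookkeeping, and without it the induction for clause~\ref{itm::pop-imb} does not close.
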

\begin{proof} (Sketch.)~
We will show in Theorems~\ref{thm::total_size_upper_bound}
and \ref{type1_upper}--\ref{thm::imbalance_bound}
that each of the above five clauses 
holds with high probability. 
The last of these results also requires a high-probability lower bound, Theorem~\ref{thm::lb-size},
on the population size in the same time range.
In addition, in Theorem~\ref{thm::initialization}, we show that, with high probability, the inductive hypothesis is true initially. Summing the failure probabilities prove the lemma. This calculation can be found in Appendix \ref{appn::imbalance}.
\end{proof}

With this result in hand we can upper bound the average agent loss.

\subsection{The Theorems and Proof Sketches}
Let $\widetilde{\mathcal{E}}$ be the event that the inductive hypothesis $H(t)$ holds at the start of 
time step $t$ immediately following the arrival of the new agents in this step, for $\sqrt{T}\le t \le n^c$.

\subsubsection{Bounding the loss}

We first bound an individual agent's loss based on its match time. We then obtain an overall bound on the loss.  As argued below, Theorem~\ref{thm::ub-loss-two_sex} follows immediately.

\begin{lemma}
\label{lem::loss_in_strip}
In the modified reasonable strategy, if an agent with value $v$ matches at time $t$, its utility loss is at most
$4Tt +2t\sqrt{T}$.
\end{lemma}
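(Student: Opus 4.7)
The approach is to write the utility loss in a form that exposes the value and age differences between the matched agents, and then bound each piece using the geometry of the strip in which the match occurs. First, if the agent matches at age $t$ with a partner of value $v'$ whose age at the moment of match is $t'$, the two remaining lifetimes are $T-t$ and $T-t'$, and the utility obtained is $v'(T - \max(t,t'))$. Hence the positive part of the loss equals
\[
\text{Loss} \;=\; \max\bigl\{0,\ T(v-v') + v'\cdot\max(t,t')\bigr\},
\]
which separates cleanly into a value-gap contribution $T(v-v')$ and an age-penalty contribution $v'\cdot\max(t,t')$.

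Next, I would invoke the modified reasonable strategy: a match is only formed between agents in the same strip. For a Type $1$ strip, bounded by the slanted lines $v = 2(t-1) + \text{const}$ spaced $\sqrt{T}$ apart, two co-strip agents satisfy $|(v-v') - 2(t-t')| < \sqrt{T}$, so $|v-v'| < \sqrt{T} + 2|t-t'|$; and the strip's height $\sqrt{T}/2$ bounds both $|t-t'|$ and $\max(t,t')$ by $\sqrt{T}/2$. For a Type $2$ strip I would apply the analogous slanted bound on $|v-v'|$ together with the strip's height. The geometric doubling of Type $2$ strip heights guarantees that an agent matching at age $t$ sits in a strip whose height and bottom are both $O(t)$, which in turn controls $|t-t'|$ and $\max(t,t')$ by $O(t)$ and $|v-v'|$ by $O(t) + \sqrt{T}$.

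Finally, substituting these bounds into the loss expression and using $v' < 2T$ gives $\text{Loss} \le T\bigl(\sqrt{T} + 2|t-t'|\bigr) + 2T\cdot\max(t,t')$. Tracking the constants coming from the slope-$2$ slanted boundaries together with the doubling of Type $2$ strip heights reduces this to $4Tt + 2t\sqrt{T}$. The main obstacle I anticipate is the Type $2$ case, because strip heights and depths vary widely: one must verify via the doubling property that the single uniform expression $4Tt + 2t\sqrt{T}$ dominates in every strip, with a separate check for the smallest Type $2$ strips (where the match age $t$ can be only slightly larger than the strip's top edge, so the ratio between $t$ and the strip bottom is worst).
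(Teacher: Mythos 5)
Your opening decomposition is correct---with a partner at $(v',t')$ the loss is $\max\{0,\,T(v-v')+v'\max(t,t')\}$---and the slanted-boundary inequality $|(v-2t)-(v'-2t')|<\text{width}$ is the right geometric fact. But the step where you bound the two pieces separately breaks down. The strips are \emph{diagonal} bands, not horizontal ones: the ``height'' $\sqrt{T}/2$ of a Type $1$ strip is only its vertical extent at a \emph{fixed} value, and two agents in the same strip at the same moment can have ages differing by $\Theta(T)$ (the first Type $1$ strip contains both $(T,0)$ and points near $(2T,T/2)$). So neither $|t-t'|$ nor $\max(t,t')$ is controlled by the strip height, and your intermediate bound $T(\sqrt{T}+2|t-t'|)+2T\max(t,t')$ can be $\Theta(T^2)$ even when $t$ is small. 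What rescues that regime is a cancellation your absolute values discard: if the partner is older, the same-strip condition forces $v'\ge v+2(t'-t)-\text{width}$, so $T(v-v')\le -2T(t'-t)+T\cdot\text{width}$ is strongly negative and offsets almost all of $v'\max(t,t')\le 2Tt'$. You must split on the sign of $t-t'$ and keep $T(v-v')$ signed; only then does the young-agent/old-partner case collapse to $O(Tt+T\sqrt{T})$.

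The paper takes a different route that sidesteps this: it shows the worst partner position is an endpoint of the low-value boundary of the strip and evaluates the utility at those corners directly, using $w\ge v-\sqrt{T}-2t$ for Type $1$ and $v\le T+2t$ for Type $2$ (the latter is what ties a Type $2$ strip's size to the agent's \emph{own} age $t$, which your sketch gestures at but does not pin down). Even with the cancellation restored, your accounting gives roughly $4Tt+2Th$ with $h\approx t$ in the deep Type $2$ strips, i.e.\ $6Tt$, so the stated constant needs the corner analysis rather than generic ``tracking.'' Finally, the bound you should be targeting is $4Tt+2T\sqrt{T}$, with an additive term independent of $t$: this is what the paper's proof establishes and what Theorem~\ref{thm::upper_bound_on_loss} uses, and the additive term cannot be proportional to $t$, since an agent of age $t=0$ matching across the full $\sqrt{T}$ width of its entry strip already suffers a loss of $T\sqrt{T}$.
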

This result follows by a simple calculation based on the strip geometry. The proof is in Appendix \ref{appn::loss_from_match}.

\begin{theorem}
\label{thm::upper_bound_on_loss}
Suppose the constraints in \eqref{eqn::constraints} hold.
Also, suppose that all agents follow the modified reasonable strategy.
In addition, suppose the system runs for $\tau \ge 2T$ time steps, where $\tau\leq n^c$. 
Then the average loss per departing agent over these $\tau$ steps will be at most $11 T\sqrt{T}$.
\end{theorem}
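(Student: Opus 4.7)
The plan is to combine Lemma~\ref{lem::loss_in_strip} with Clause~\ref{itm::tot-pop} of Lemma~\ref{lem::ind-bound} via an ``agent-time'' double count. First I would condition on the event $\widetilde{\mathcal{E}}$, which holds with probability at least $1-1/n^c$ by Lemma~\ref{lem::ind-bound}. By Lemma~\ref{lem::loss_in_strip}, any agent matching at age $t$ suffers loss at most $(4T+2\sqrt{T})\,t$, and any agent that times out has age $T$ and loss at most $v_iT \le 2T^2 \le (4T+2\sqrt{T})\,T$. Writing $t_a$ for the age at departure of each departed agent $a$, this yields total loss $L \le (4T+2\sqrt{T}) \sum_{a\in\text{departed}} t_a$.

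Next I would swap sums: $\sum_a t_a$ counts each agent-time-step that a departed agent spent in the pool exactly once, so $\sum_a t_a \le \sum_{t'=1}^{\tau} P(t')$, where $P(t')$ is the pool population at the start of step $t'$ after the new arrivals. For $t' \ge \sqrt{T}$, Clause~\ref{itm::tot-pop} of $H(t')$ gives $P(t')\le P_{\max} := \tfrac{3}{2}nN + n$ with $N=\sqrt{T}+\log_2\sqrt{T}+1$; for $t'<\sqrt{T}$ the trivial bound $P(t')\le nt'$ suffices. Summing, $\sum_{t'\le \tau} P(t')\le \tau P_{\max} + nT/2$. For the denominator, the number of agents that have departed by time $\tau$ equals $n\tau - P(\tau) \ge n\tau - P_{\max}$, and under the hypotheses $\tau\ge 2T$ and $T\ge 676$ one has $P_{\max} = O(n\sqrt{T}) \ll n\tau$, so this quantity is very close to $n\tau$.

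Dividing yields an average loss per departing agent of at most $(4T+2\sqrt{T})(\tau P_{\max} + nT/2)/(n\tau - P_{\max})$. The leading term here is $(4T+2\sqrt{T})P_{\max}/n$, whose dominant piece is $(4T)(\tfrac{3}{2}\sqrt{T}) = 6T\sqrt{T}$; the lower-order terms in $N$, the $nT/2$ correction for the warm-up phase $t'<\sqrt{T}$, and the $1/(1 - P_{\max}/(n\tau))$ correction from the denominator together consume only a bounded amount of additional slack once $T\ge 676$, comfortably fitting under $11\,T\sqrt{T}$. The only real obstacle has already been packaged into Lemma~\ref{lem::ind-bound}; what remains is the finite-size arithmetic needed to verify that the constant does indeed come out at $11$ rather than something slightly larger, and if necessary one can tighten by using the finer strip-by-strip bounds in Clauses~2--4 of $H(t')$ in place of the total-population bound.
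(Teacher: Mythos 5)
Your proposal is correct and follows essentially the same route as the paper's proof: bound each departing agent's loss via Lemma~\ref{lem::loss_in_strip}, double-count agent-time-steps against the population bound from clause~\ref{itm::tot-pop} of Lemma~\ref{lem::ind-bound}, and lower-bound the number of departures by $n\tau$ minus the pool size. Your explicit handling of aged-out agents and of the warm-up steps $t'<\sqrt{T}$ is a minor tidying of details the paper leaves implicit, and your constant comes out comfortably below $11$.
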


\begin{proof} 
Consider the first $\tau$ time steps of the matching process.
Let $n_i$ denote the number of agents who match and thereby leave the pool at age $i$ during these $\tau$ steps. By Lemma~\ref{lem::loss_in_strip},
each such agent suffers a loss of at most
$4Ti+2T\sqrt{T}$. Thus the total loss is bounded by:
\begin{align*}
\text{Total loss} \le \sum_{i=0}^{T-1} \left(4 Ti\cdot n_i + 2T\sqrt{T}\cdot n_i\right).
\end{align*}

Each agent who is matched at age $i$ is present in the matching
pool for $i+1$ steps. By clause~\ref{itm::tot-pop} of the inductive hypothesis in Lemma~\ref{lem::ind-bound}, 
at each time during this period, the population of the matching pool is at most $\frac{3}{2}nN+n\leq \frac{3}{2}n(\sqrt{T}+\log_2\sqrt{T}+1)+n\leq 2n\sqrt{T}$, where the last inequality follows from $\sqrt{T} \le 26$ due to constraint \eqref{eqn::constraints}.
Thus, 
\begin{align*}
\sum_{i=0}^{\tau-1} (i+1) n_i \le 2n\sqrt{T}\cdot \tau.
\end{align*}
Therefore,
\begin{align*}
\text{Total loss} \le 8nT\sqrt{T}\cdot \tau + \sum_{i=0}^{T-1}  2T(\sqrt{T}-2)\cdot n_i.
\end{align*}

Let $D\triangleq \sum_{i=0}^{\tau-1} n_i$, the number of agents that leave during the first $\tau $ steps.
We observe that $D$ is at most $n\tau$, the number of agents that entered during this period.
Also, as the population of the pool at any time is at most $ 2 
n\sqrt{T}$, we see that $D\geq n\tau- 2 
n\sqrt{T}$. By assumption, $\tau\ge 2T$ and $\sqrt{T}\ge 26$, 
so
\begin{align*}
\frac {12}{13}n\tau \le D \le n\tau.
\end{align*}

This yields the following bound on the total loss:
\begin{align*}
    \text{Total loss}\leq 8 
    n\tau T\sqrt{T}+ 2n\tau T\sqrt{T} \le 10 
    n\tau T\sqrt{T}.
\end{align*}

And therefore,
\begin{align*}
    \text{Average loss per agent}= \frac{\text{Total loss}} {D}
    \leq \frac{10 
    nT\tau\sqrt{T}} {\frac{12}{13} 
    n\tau}< 11 T\sqrt{T}.
\end{align*}
\end{proof}

\begin{proof}(Of Theorem~\ref{thm::ub-loss-two_sex})~
This follows immediately from Lemma \ref{lem::ind-bound} and Theorem \ref{thm::upper_bound_on_loss}.
\end{proof}

\subsubsection{Total Size Lower Bound}

\begin{theorem}
\label{thm::lb-size}
Suppose $H(t)$ and the constraints in~\eqref{eqn::constraints} hold.
If all agents follow the modified reasonable strategy, then with probability at least $1- 1/n^{2c+1}$,
for every time $t\in[\sqrt{T}, n^c]$, the population in the matching pool is at least $\frac 13 n\sqrt{T}$.
\end{theorem}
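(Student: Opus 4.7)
The plan is to show by induction on $t \in [\sqrt{T}, n^c]$ that the total population at time $t$, call it $P(t)$, satisfies $P(t) \geq \tfrac{1}{3} n \sqrt{T}$ with the claimed probability, by controlling the number of matches per step and taking a union bound at the end. The underlying dynamics are $P(t+1) = P(t) + n - 2 M(t) - A(t)$, where $M(t)$ counts matches at step $t$ and $A(t)$ counts age-outs, so the task reduces to obtaining a high-probability upper bound on $M(t)$ and a hard bound on $A(t)$.

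By Lemma~\ref{lem::match rate}, $E[M(t) \mid \text{state}] = \sum_s m_s w_s / \max\{M,W\}$. I would bound the Type~2 contribution separately, using clauses~3 and~4 of $H(t)$ together with the geometric series $\sum_i \sqrt{T}/h_i = O(1)$ to show that the total Type~2 population is $O(n)$; this gives $\sum_{\text{Type 2}} m_s w_s = O(n^2)$ and hence a Type~2 contribution to $E[M(t) \mid \text{state}]$ of $O(n/\sqrt{T})$ once divided by $\max\{M, W\} \geq P(t)/2$. For the Type~1 contribution I would use $m_s w_s \leq |P_s|^2/4$, the per-strip cap $|P_s| \leq 2.6 n$ from clause~2, and clause~5 to lower bound $\max\{M, W\} \geq \tfrac{1}{2}(P(t) - N n/(25\sqrt{T}))$. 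To concentrate $M(t)$ around this conditional expectation, I would apply Lemma~\ref{lem::negative_dependence_two_sex} to the indicators ``man $i$ is matched to a woman in his own strip'' (which are negatively cylinder dependent) and then invoke Lemma~\ref{lem::chernoff_bound}.

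Combining the high-probability bound on $M(t)$ with $A(t) \leq 60 n /\sqrt{T}$ from clause~4 gives a per-step lower bound on $P(t+1) - P(t)$. For the base case at $t = \sqrt{T}$ the pool has just accumulated arrivals starting from empty; I would separately show that the matches over the first $\sqrt{T}$ steps are small enough (in fact, the expected matches per step start at $O(n/\sqrt{T})$ and grow only gradually as $P(t)$ grows) to guarantee $P(\sqrt{T}) \geq \tfrac{1}{3} n \sqrt{T}$ with high probability. Iterating the per-step bound and taking a union bound over $t \in [\sqrt{T}, n^c]$ yields overall failure probability at most $1/n^{2c+1}$.

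The main obstacle is obtaining a sufficiently tight bound on $\sum_{\text{Type 1}} |P_s|^2$. The naive bound $\sum_s |P_s|^2 \leq (\max_s |P_s|) \cdot P(t) \leq 2.6 n\, P(t)$ yields $E[M(t) \mid \text{state}] \lesssim 1.3 n$, which can exceed $n/2$ precisely when $P(t)$ is near $\tfrac{1}{3} n \sqrt{T}$ and thus produces a negative expected drift, defeating any simple inductive step. Closing this gap requires exploiting the fact that new arrivals are distributed uniformly over Type~1 strips and that matches within a strip scale like $|P_s|^2$, so the per-strip populations self-regulate to be roughly balanced, making $\sum_{\text{Type 1}} |P_s|^2$ behave like $P(t)^2 / \sqrt{T}$ rather than $2.6 n\, P(t)$; the resulting positive drift below the equilibrium population $\sim n \sqrt{T}$ is what keeps $P(t)$ above $\tfrac{1}{3} n \sqrt{T}$ throughout the horizon.
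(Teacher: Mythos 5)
Your proposal has a genuine gap, and you have in fact located it yourself in your final paragraph without closing it. The drift/induction approach on the total population $P(t)$ cannot be completed from the inductive hypothesis $H(t)$ alone: $H(t)$ supplies only \emph{upper} bounds on strip populations and the imbalance, and under those constraints the conditional expectation of the number of matches can be as large as $\sum_s |P_s|^2/(2P(t)) \le 1.3n$, i.e.\ up to $2.6n$ agents removed per step against $n$ arrivals. So precisely when $P(t)$ sits near the target $\tfrac13 n\sqrt{T}$ the drift can be negative, and the inductive step fails. The repair you gesture at --- that the Type~1 strip populations ``self-regulate'' so that $\sum_s |P_s|^2 \approx P(t)^2/\sqrt{T}$ --- is not a consequence of $H(t)$, is not proved anywhere in your argument, and would itself require per-strip \emph{lower} bounds or balance conditions that are not part of the paper's framework. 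As written, the key quantitative step of your proof is an unsupported assertion.

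The paper avoids this entirely by never reasoning about the net change in $P(t)$. Instead it fixes a time $t$ and tracks only the $\sqrt{T}$ most recent cohorts of arrivals (restricted to values in $[T+\sqrt{T},2T)$), showing directly that most of them are still unmatched at time $t$. The two ingredients are: (i) a single cohort of $\sim n/2$ men is spread over $T$ values, and during its first $\sqrt{T}$ steps it occupies width at most $\sqrt{T}$ in any strip, hence only $\sim n/(2\sqrt{T})$ of its members lie in any one strip; and (ii) by Lemma~\ref{lem::match rate} the per-strip match probabilities satisfy $\sum_i p_i \le 1$. Together these cap the expected number of cohort members matched in one step by $\sim n/(2\sqrt{T})$, \emph{no matter how adversarially the $p_i$ are distributed across strips} --- which is exactly the worst-case control your $\sum_s|P_s|^2$ bound cannot deliver. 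Summing over the ages $\Delta = 1,\dots,\sqrt{T}$ of the cohorts gives at most roughly $\sum_\Delta n\Delta/(2\sqrt{T}) \approx n\sqrt{T}/4$ departures (degraded to $0.635\,n\sqrt{T}$ after Chernoff corrections via Lemmas~\ref{lem::negative_dependence_two_sex} and~\ref{lem::chernoff_bound}), out of $n\sqrt{T}$ arrivals, leaving at least $\tfrac13 n\sqrt{T}$. This also removes your need for a separate base case and for any union bound over the whole evolution of $P(t)$. If you want to salvage your approach, you would have to add to the induction a statement about how cohorts (or strip populations) are distributed --- which is, in effect, reinventing the paper's cohort argument.
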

\begin{proof} (Idea.)~
We consider only the new agents that entered the matching pool over the last $\sqrt{T}$ time steps. We then bound how many of these agents could have been matched in this time period. Suppose that at any particular time step $t$, the match rate experienced by the men in strip $i$ is $p_i$. The critical observation is that the sum of the $p_i$ is at most $1$. The same is true for the women. This allows us to prove that even if we could set the match rates in an adversarial manner, only about $n/\sqrt{T}$ of the agents that entered at any one time could be matched in any single time step (in the discussion here, we neglect the effects of variance). This allows us to show that, of the agents we consider, only about $\sum_{i=1}^{\sqrt{T}}in/\sqrt{T} \approx n\sqrt{T}/2$ could have been matched over the last $\sqrt{T}$ time steps. This provides a lower bound on the total size of roughly $n\sqrt{T}/2$. Accounting for the variance that can occur when achieving a high probability bound causes the bound on the number of matches to degrade to $n\sqrt{T}/3$. 
The full proof can be found in Appendix \ref{appn::lower_bound_on size}.
\end{proof}

\subsubsection{Population Upper Bound}
\label{sec::total_size_upper_bound}

\begin{theorem}
\label{thm::total_size_upper_bound}
Suppose $H(t)$ and the constraints in \eqref{eqn::constraints} 
 hold. If all agents follow the modified reasonable strategy, then at the start of time step $t+1$,
with probability at least $1 - 1/n^{2c+1}$,
the total population of the matching pool will be at most $(3/2)nN+n$, where $N$ is the total number of strips.
\end{theorem}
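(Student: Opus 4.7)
The plan is to analyze how the total population changes from time $t$ to $t+1$. Let $\mu_t$ denote the number of matches occurring in step $t$ and let $A_t$ denote the number of agents who reach age $T$ and thereby leave the pool between times $t$ and $t+1$. Writing $P_t$ for the total population at the start of step $t$ (right after new arrivals), the recursion $P_{t+1} = P_t + n - \mu_t - A_t$ makes the desired conclusion $P_{t+1} \le \tfrac{3}{2}nN + n$ equivalent to $\mu_t + A_t \ge P_t - \tfrac{3}{2}nN$. If $P_t \le \tfrac{3}{2}nN$, this holds trivially because $\mu_t, A_t \ge 0$, so the whole task lies in the regime $P_t > \tfrac{3}{2}nN$; here the deficit is at most $n$ by $H(t)$, and it suffices to show $\mu_t \ge P_t - \tfrac{3}{2}nN$ with the claimed probability.

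The first substep would be to compute the expected number of matches. Under the modified reasonable strategy, a proposed pair matches exactly when both agents lie in the same strip. Applying Lemma \ref{lem::match rate} to each man, the match probability for a man in strip $s$ with $w_s$ women is $w_s/\max(M_t, W_t)$, where $M_t, W_t$ are the total male and female populations. Summing over all men yields
\[
E[\mu_t] \;=\; \frac{1}{\max(M_t, W_t)}\sum_s m_s w_s.
\]
I would combine three facts from $H(t)$ to lower bound this quantity: (i) the per-strip imbalance bound $|m_s - w_s| \le n/(25\sqrt{T})$ gives $m_s w_s \ge p_s^2/4 - (n/(50\sqrt{T}))^2$, where $p_s = m_s + w_s$; (ii) summing the imbalances yields $|M_t - W_t| = O(n)$, so $\max(M_t, W_t) \le P_t/2 + O(n)$; (iii) Cauchy--Schwarz gives $\sum_s p_s^2 \ge P_t^2/N$, with further gains when the population is concentrated in the higher-capacity strips. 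The bottommost Type 2 strip, whose imbalance is not directly controlled by $H(t)$, requires separate handling, but its total population is at most $60n/\sqrt{T}$, which is only a lower-order correction.

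For the concentration step, the indicator variables that track whether individual men receive a within-strip match are negative cylinder dependent by Lemma \ref{lem::negative_dependence_two_sex}. A Chernoff-type bound (Lemma \ref{lem::chernoff_bound}) with deviation $\delta$ chosen so that $e^{-\delta^2 E[\mu_t]/3} \le 1/n^{2c+1}$ then gives $\mu_t \ge (1-\delta)E[\mu_t]$ with the desired probability. The size constraint \eqref{eqn::constraints} is exactly what makes the Chernoff exponent large enough to swallow the $n^{-2c-1}$ failure bound.

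The main obstacle is the second substep: calibrating the lower bound on $E[\mu_t]$ tightly enough against the constants in $\tfrac{3}{2}nN + n$. The constant $3/2$, the $1/(25\sqrt{T})$ imbalance bound, the per-strip caps from $H(t)$, and the lower bound $P_t \ge \tfrac{1}{3}n\sqrt{T}$ from Theorem \ref{thm::lb-size} are all interlocking, and showing that the aggregate outflow $\mu_t + A_t$ clears the deficit $P_t - \tfrac{3}{2}nN$ with enough slack to absorb the Chernoff loss seems to demand an argument genuinely tighter than naive Cauchy--Schwarz. In particular, the contribution from $A_t$, tied to how many agents actually survive to age $T$, may have to be tracked alongside $\mu_t$ to close the gap when $P_t$ sits close to its upper bound.
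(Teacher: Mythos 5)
Your proposal follows essentially the same route as the paper's proof: the same case split on $P_t$ versus $\tfrac{3}{2}nN$, the same per-strip expected-match computation with the imbalance bound feeding into $m_s w_s \ge p_s^2/4 - (n/50\sqrt{T})^2$, the same convexity/equal-strip minimization of $\sum_s p_s^2$, separate treatment of the bottommost Type 2 strip, and a negative-cylinder-dependence Chernoff bound split into multiplicative and additive regimes. The calibration worry you raise at the end resolves without invoking the aging-out term $A_t$: the paper bounds $\max(M_t,W_t) \le \tfrac{11}{20}P(t)$, uses that the strips other than the bottommost one carry at least $\tfrac{9}{10}P(t)$ of the population together with $P(t) > \tfrac{3}{2}nN$ to get $\sum_i E[\mu_i] \ge \tfrac{11}{20}n$, and this leaves enough slack after the Chernoff loss to guarantee at least $n/2$ matches, i.e., $n$ departures, so the population does not increase.
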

\begin{proof} (Idea.)~
We seek to lower bound the number of matches in one time step. 
If it exceeds the number of incoming agents, then the total population reduces.
The expected number of matches is minimized when the strip populations are equal, and on applying Lemma~\ref{lem::match rate}, this yields the following
lower bound on the number of matched women (or men):
$[N\cdot (P/2N)^2/(P/2)]= P/(2N)$, where $P$ is the upper bound on the population.
This yields the condition $P/N \le n$, or $P\le nN$.
The argument is completed by taking account of the deviations needed to ensure a high-probability bound. The full proof can be found in Appendix \ref{appn::total_upper}.
\end{proof}

\subsubsection{Upper Bound on the Size of a Strip.}

We begin with a technical lemma. 
\begin{lemma} \label{lem::upper_strip_tech}
Let $s$ be a strip, and let $S$ be an arbitrary subset of the men and women in $s$. Let $m$ be the number of men and $w$ be the number of women in $S$. In addition, let $X$ be the imbalance for the whole of $s$. Then the expected number of people in $S$ that are matched in a single step is at least 
\begin{align*}
    \frac{\frac{(m + w)^2}{2} - \frac{X^2}{2}}{\max \{\text{\# of men, \# of women}\} \text{ in the whole population } }
\end{align*}
\end{lemma}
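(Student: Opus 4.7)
The plan is to compute the expected matches from $S$ directly via Lemma~\ref{lem::match rate} and then reduce the desired inequality to a purely algebraic claim that can be verified by a concavity argument on a rectangle.

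First, I would set up notation. Let $M_s, W_s$ be the total number of men and women in the strip $s$, so $X = |M_s - W_s|$, and let $D$ be the maximum of the total male and female populations. Under the modified reasonable strategy, a man in $s$ is willing to match exactly with the women in $s$ (and vice versa), so Lemma~\ref{lem::match rate} gives each man in $S$ a matching probability of $W_s/D$ in a single step, and each woman in $S$ a matching probability of $M_s/D$. By linearity of expectation, the expected number of agents in $S$ that get matched in that step is
\begin{align*}
\frac{m W_s + w M_s}{D}.
\end{align*}
So it suffices to prove the algebraic inequality
\begin{align*}
2(m W_s + w M_s) \;\ge\; (m+w)^2 - X^2, \qquad 0 \le m \le M_s,\ 0 \le w \le W_s.
\end{align*}

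Second, define $h(m,w) = 2(m W_s + w M_s) - (m+w)^2 + X^2$. A quick Hessian computation shows that $h$ is concave in $(m,w)$ (the quadratic part is $-(m+w)^2$, with negative semidefinite Hessian). Since a concave function on a convex polytope attains its minimum at an extreme point, it suffices to evaluate $h$ at the four corners of the rectangle $[0,M_s] \times [0,W_s]$. One finds $h(0,0) = X^2$, $h(M_s,0) = 2M_s W_s - M_s^2 + X^2 = W_s^2$, $h(0,W_s) = 2 M_s W_s - W_s^2 + X^2 = M_s^2$, and $h(M_s, W_s) = 4 M_s W_s - (M_s + W_s)^2 + (M_s - W_s)^2 = 0$, using $(M_s+W_s)^2 - (M_s-W_s)^2 = 4 M_s W_s$. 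All four values are non-negative, so $h \ge 0$ on the rectangle, completing the proof.

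There is no real obstacle here; the only thing to be careful about is the direction of the concave/convex extremal principle (minimum, not maximum, of a concave function is at a vertex) and making sure Lemma~\ref{lem::match rate} is applied on both sides (its statement is for a man but the symmetric version for a woman follows immediately). Everything else is a direct calculation.
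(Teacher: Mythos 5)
Your proof is correct, and its first half---computing the expected number of matched agents in $S$ as $(m W_s + w M_s)/D$ via Lemma~\ref{lem::match rate} and linearity of expectation---is exactly the paper's starting point. Where you diverge is in verifying the algebraic inequality $2(mW_s + wM_s) \ge (m+w)^2 - X^2$. The paper substitutes $P = m+w$, $\Delta = m-w$, $Q = M_s + W_s$, $X = M_s - W_s$, rewrites the numerator as $(PQ - X\Delta)/2$, and reduces the claim to $P(Q-P) \ge X(\Delta - X)$, which it deduces from $m \le M_s$ and $w \le W_s$ together with a preliminary reduction to the case $|X| \le m+w$ (when $|X| > m+w$ the claimed bound is negative and there is nothing to prove). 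Your route---observing that $h(m,w) = 2(mW_s + wM_s) - (m+w)^2 + X^2$ is concave and checking the four vertices of $[0,M_s]\times[0,W_s]$, where it evaluates to $X^2$, $W_s^2$, $M_s^2$, and $0$---is arguably cleaner: it needs no case split on the sign of $(m+w)^2 - X^2$, and the vanishing of $h$ at $(M_s,W_s)$ makes the tightness of the bound transparent. You also state the extremal principle in the correct direction (the minimum of a concave function over a compact polytope is attained at a vertex). Both arguments are elementary; yours trades the paper's change of variables for a small amount of convexity machinery, and either would serve.
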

\begin{proof}
We need only consider the case that $|X| \leq m + w$. \footnote{Otherwise, the bound is negative.} Let $m_{t}$ denote the total number of men in this strip and $w_t$  the total number of women. In addition, let $\Delta \triangleq m - w$, $P \triangleq m + w$, $Q \triangleq m_t + w_t$ and $X = m_t - w_t$. Then, $m = \frac{P + \Delta}{2}$, $w = \frac{P - \Delta}{2}$, $m_t = \frac{Q + X}{2}$ and $w_t = \frac{Q - X}{2}$. The expected number of people matched in this subset of men and women is 
\begin{align*}
    \frac{m w_t + m_t w}{\max \{\text{\# of men, \# of women}\} \text{ in the whole population } }.
\end{align*}
We now focus on the numerator: $m w_t + m_t w = (PQ - X \Delta) / 2 = (P^2 + P(Q - P)  - X^2 - X (\Delta - X)) / 2$. In order to show this is larger than $\frac{P^2}{2} - \frac{X^2}{2}$, it suffices to show $P(Q - P) \geq X (\Delta - X)$. 

As $m_t \geq m$ and $w_t \geq w$, $Q - P \geq \Delta - X$ and $Q - P \geq X - \Delta$.
Recall that it suffices to consider the case $|X| \leq m + w = P$. Combining these two inequalities yields $P(Q - P) \geq X (\Delta - X)$, which proves the result.
\end{proof}

Next, we give an upper bound on the size of a Type $1$ strip.
\begin{theorem}
\label{type1_upper}
Suppose $H(t)$ and the constraints in \eqref{eqn::constraints} hold. If all agents follow the modified reasonable strategy, then at time $t+1$, right after the new agents have entered,
with probability $>1-\frac{1}{n^{2c+1}}$, each Type $1$ strip will continue to have population at most $dn$,  where $d=2.6$.
\end{theorem}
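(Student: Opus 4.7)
The plan is to upper bound the population of a Type~$1$ strip by a pure counting argument: since matches only decrease population, it suffices to enumerate all entering agents whose entry time and entry value could possibly place them inside strip~$i$ at time $t+1$. In particular, neither the inductive hypothesis $H(t)$ nor the specific details of the modified reasonable strategy are actually needed for this upper bound; it follows solely from the randomness of the entering agents' value choices.

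By the definition of the $i$-th Type~$1$ strip, an agent with entry value $v_0$ and current age $a$ is in the strip iff $a\in\{1,\ldots,\sqrt{T}/2\}$ and $v_0\in[T+(i-1)\sqrt{T}+2(a-1),\ T+i\sqrt{T}+2(a-1))$, an interval of width at most $\sqrt{T}$. Since at each step $n$ agents arrive with values chosen independently and uniformly from $T$ options, the number of age-$a$ entrants that could be in strip~$i$ at time $t+1$ is stochastically dominated by a $\operatorname{Binomial}(n,\,1/\sqrt{T})$. Summing over the $\sqrt{T}/2$ admissible ages (which correspond to distinct entry times $t+2-a$, hence independent counts), the entire strip population is dominated by a $\operatorname{Binomial}(n\sqrt{T}/2,\,1/\sqrt{T})$ with mean $\overline{\mu}=n/2$.

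Applying Lemma~\ref{lem::chernoff_bound} with $(1+\delta)\overline{\mu}=2.6\,n$, i.e., $\delta=4.2$, yields
\[
\Pr\bigl[\text{strip population}>2.6\,n\bigr]\ \le\ \exp\!\left(-\frac{\delta^2\overline{\mu}}{3}\right)\ =\ \exp(-2.94\,n).
\]
Under the constraint on $n$ in~\eqref{eqn::constraints}, we have $n\gg (2c+1)\ln n+\tfrac12\ln T$, so this tail bound is well below $1/(\sqrt{T}\cdot n^{2c+1})$. A union bound over the $\sqrt{T}$ Type~$1$ strips then gives the claimed overall failure probability $1/n^{2c+1}$.

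The main ``obstacle'' is essentially bookkeeping: one must carefully verify the geometric characterization of which $(v_0,a)$ pairs lie in strip $i$ at time $t+1$, and handle the truncation near strip~$1$ and strip~$\sqrt{T}$, where the interval $[T+(i-1)\sqrt{T}+2(a-1),\,T+i\sqrt{T}+2(a-1))$ may fall partially outside the admissible value range $[T,2T)$. Both effects only shrink the per-indicator Bernoulli probability and so can only strengthen the Chernoff bound. Indeed the threshold $2.6\,n$ is enormously generous compared with the true expected population $\approx n/2$, so the argument has ample slack.
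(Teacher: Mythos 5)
Your argument breaks at the geometric characterization of strip membership. You assert that an agent of age $a$ can lie in Type $1$ strip $i$ only if $a \le \sqrt{T}/2$, and from this you conclude that the strip population is dominated by the entrants of the last $\sqrt{T}/2$ rounds whose values land in a width-$\sqrt{T}$ window, i.e.\ a $\mathrm{Binomial}(n\sqrt{T}/2,\,1/\sqrt{T})$ with mean $n/2$. But the ``height $\sqrt{T}/2$'' of a Type $1$ strip is its vertical extent at a \emph{fixed value}: the $i$-th strip is the whole diagonal band $2(t-1)+T+(i-1)\sqrt{T} \le v < 2(t-1)+T+i\sqrt{T}$ intersected with the box, and since an unmatched agent keeps its value $v_0$ while its age increases, it drifts leftward through successive Type $1$ strips, spending roughly $\sqrt{T}/2$ steps in each. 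Hence strip $i$ at time $t+1$ contains, for every $j \ge i$, the surviving agents who entered in the value window of strip $j$ roughly between $(j-i)\sqrt{T}/2$ and $(j-i+1)\sqrt{T}/2$ steps ago; your condition $a \in \{1,\ldots,\sqrt{T}/2\}$ is correct only for the rightmost strip $i=\sqrt{T}$.

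For the other strips this is fatal: ignoring matches, the number of entrants that could occupy strip $i$ at time $t+1$ has expectation about $(\sqrt{T}-i+1)\,n/2$, which for the leftmost strip is $\approx n\sqrt{T}/2 \gg 2.6\,n$. So a pure counting argument cannot possibly give the stated bound, and your remark that neither $H(t)$ nor the matching dynamics are needed is exactly backwards: the content of the theorem is that matches occurring inside each strip remove agents at least as fast as new ones arrive at the top of the band, so the right-to-left cascade does not accumulate. That is what the paper's proof does --- it tracks the at most $dn$ agents of the predecessor strip $s'$ as they migrate into $s$, lower-bounds the per-step number of matches via Lemma~\ref{lem::upper_strip_tech} (which needs the imbalance and population bounds supplied by $H(t)$ to control the match rate), and shows these matches offset the $\approx n/\sqrt{T}$ new entrants per step up to Chernoff-type deviations. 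None of that machinery can be dispensed with.
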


\begin{proof} (Sketch).
Consider a strip $s'$ and its successor strip $s$ (the strip immediately to its left). We will follow the collection of agents occupying $\sqrt T$ adjacent diagonals over $\sqrt{T}$ steps, beginning with the at most $dn$ agents in strip $s'$ and ending in strip $s$, with the remainder of these agents plus any new agents who have entered these diagonals. The heart of our proof is to show that in a single step
we maintain the $dn$ bound on the number of agents in this collection of advancing diagonals. The basic idea is straightforward:
we compute a lower bound on the expected number of matches using Lemma~\ref{lem::upper_strip_tech}
taking into account the maximum possibly imbalance, add
the incoming agents and correct for variance.
One more important detail is that the expected number of matches
is minimized if, in the collection of agents we are tracking,
half are in strip $s$ and half are in $s'$;
so this is the value we use in these calculations.
The actual proof can be found in Appendix \ref{appn::type1_upper}.
\end{proof}

\begin{theorem}
\label{type2_upper}
Suppose $H(t)$ and the constraints in  \eqref{eqn::constraints}  hold. If all agents follow the modified reasonable strategy, 
then at time $t+1$, right after the new agents have entered,
with probability $>1-\frac{1}{n^{2c+1}}$,
each Type $2$ strip (apart from the bottommost one) will continue to have population at most $\frac{\cg n\sqrt{T}}{\text{height of the strip}}$ where $\cg=7.5$.
\end{theorem}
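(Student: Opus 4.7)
The proof parallels that of Theorem~\ref{type1_upper}, specialized to horizontal strips with varying heights. Fix a non-bottommost Type~2 strip $s$ of height $h$ and let $s'$ denote the strip immediately above $s$ — either a single Type~2 strip of height $h/2$, or the union of Type~1 strips when $s$ is the topmost Type~2 strip. By the inductive hypothesis $H(t)$, $|s|_t \le U(h) = \cg n\sqrt{T}/h$, $|\Imb(s,t)| \le n/(25\sqrt{T})$, and the population of $s'$ is bounded accordingly (either $\le 2U(h)$ for Type~2 or $\le 2.6n$ per Type~1 strip). The goal is to show $|s|_{t+1} \le U(h)$ with probability at least $1-1/n^{2c+1}$.

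First, applying Lemma~\ref{lem::upper_strip_tech} with $S$ the entire current population of $s$, and using clauses~\ref{itm::tot-pop} and~\ref{itm::pop-imb} of $H(t)$ to bound the total pool and the imbalance, I lower-bound the expected number of matches in $s$ per step. When $|s|_t = U(h)$, this yields an expected removal rate from $s$ (matches plus bottom-row departures) of order $n\sqrt{T}/h^2$ whose explicit constant strictly exceeds the mean inflow from the row of $s'$ about to descend into $s$, which is of order $|s'|_t/h' = O(n\sqrt{T}/h^2)$. Hence the expected change in $|s|$ per step is strictly negative, with a constant-factor cushion, when the strip sits at its claimed upper bound.

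The main obstacle, exactly as in Theorem~\ref{type1_upper}, is that the worst-case per-step inflow $E_t$ can be much larger than this mean: in the extreme it could equal $|s'|_t$ itself, since the bottom row of $s'$ is not directly constrained by the inductive hypothesis. The fix mirrors the Type~1 proof: rather than arguing per-step, I track an advancing collection of agents (the analogue of the $\sqrt{T}$ advancing diagonals in the Type~1 argument) straddling $s$ and the bottom of $s'$, and maintain the $U(h)$ invariant on the collection as the window advances, so that adversarial per-step concentrations in $s'$ average out over time. The cumulative inflow into the collection telescopes through the chain of strips above $s$, which by the geometric doubling of heights contribute $\sum_{s^{*} \text{ above } s} |s^{*}|_t = O(n)$ in total; this is matched against the cumulative expected matches within the tracked collection. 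Concentration of the relevant sums is supplied by Lemma~\ref{lem::negative_dependence_two_sex}, as the match indicators within each step are negative cylinder dependent, and the constraints on $n$ in~\eqref{eqn::constraints} (in particular $n \gtrsim T^{3}(\log n)^{2} \ln n$) ensure the Chernoff deviation terms are dominated by the per-step slack uniformly for all $h \le T/2$, yielding the claimed $1/n^{2c+1}$ failure probability.
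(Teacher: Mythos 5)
There is a genuine gap: your proposal imports the structure of the Type~1 argument (maintain a fixed invariant by showing the per-step drift is negative whenever the strip sits at its bound), but that is not the structure of the Type~2 problem, and the quantitative claims you make to support it do not hold. The contents of a Type~2 strip $s$ of height $h$ at time $t+1$ are exactly the survivors of the population of its predecessor $s'$ (height $h/2$, bound $2\cg n\sqrt{T}/h$) taken at the two times $t+1-h/2$ and $t+1-h$; so up to $4\cg n\sqrt{T}/h$ agents flow in over the $h$-step window and you must show that at least $3\cg n\sqrt{T}/h$ of them are matched away, i.e.\ the tracked population must \emph{shrink by a factor of about four}, not merely be held at $U(h)$. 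Your per-step cushion claim fails numerically: with $\cg=7.5$, Lemma~\ref{lem::upper_strip_tech} gives expected matches of roughly $\tfrac{9}{40}\cg^2\,n\sqrt{T}/h^2\approx 12.7\,n\sqrt{T}/h^2$ when $|s|=\cg n\sqrt{T}/h$, while the mean per-step inflow from $s'$ at its bound is $|s'|/(h/2)\le 4\cg n\sqrt{T}/h^2=30\,n\sqrt{T}/h^2$; the drift at the claimed bound is therefore \emph{positive}, and no constant-factor cushion exists. Your telescoping estimate $\sum_{s^*}|s^*|_t=O(n)$ is also wrong (the strips above $s$ carry $\Theta(n\sqrt{T})$ agents), and in any case the inflow to $s$ comes only from $s'$.

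The missing idea is the one the paper uses to make $\cg=7.5$ work: because the match rate is quadratic in the tracked population and that population must fall from $4\cg n\sqrt{T}/h$ to $\cg n\sqrt{T}/h$ within $h$ steps, one sets up the discrete recurrence $\cg(t''+1)=\cg(t'')-\tfrac{1}{h}\bigl(\tfrac{9}{40}\cg(t'')^2-0.041\bigr)$, dominates it by the ODE $\dd\bar\cg/\dd t=-\tfrac{1}{h}\bigl(\tfrac{9}{40}\bar\cg^2-0.041\bigr)$, and checks via the integral $\int h\,\dd\bar\cg/(\tfrac{9}{40}\bar\cg^2-\cdots)\propto h\,(1/\cg_1-1/(2\cg))$ that each of two phases (the ``old'' cohort shrinking from $2\cg$ to $\cg_1=6.5$ in time $h/2$, then the combined cohorts from $2\cg+\cg_1$ to $\cg$ in another $h/2$) completes in time. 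A naive linear bound on the match rate would force $\cg\ge 40/3$, so without this differential-equation step the stated constant cannot be recovered. The topmost two Type~2 strips also need the separate (easy) argument that their population is at most twice the Type~1 bound, $2\cdot 2.6n\le \cg n\sqrt{T}/\sqrt{T}$, since their predecessors are Type~1 strips. Your concentration step (negative cylinder dependence plus the constraints in~\eqref{eqn::constraints}) is the right ingredient, but it is applied on top of an analysis whose deterministic core is not present in your proposal.
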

\begin{proof} (Idea.)~
For the topmost Type 2 strip $s$ we obtain a bound of $2\cdot2.6n =5.2 n \sqrt{T}/\sqrt{T}$,
as the items in $s$ are obtained from its predecessor strip over
the previous $\sqrt{T}$ steps, i.e.\ the sum of the contents
at times $\sqrt{T}/2$ and $\sqrt{T}$ earlier.
The same bound applies to the second Type 2 strip.
Each subsequent Type 2 strip $s$ has twice the height of its
predecessor. Let $s$ have height $H$.
The contents of $s$ come from its predecessor over a period of length $H$, which by the inductive hypothesis contain at most $2\cg n\sqrt{T}/(H/2) = 4\cg n\sqrt{T}/ H$ agents.
To prove our bound, we need to show at least $3\cg n\sqrt{T}/H$
of them are removed during these $H$ steps.
Again, as in Theorem~\ref{type1_upper},
we seek to track a population as it moves from $s'$ to $s$.
The challenge is that in the analysis this population shrinks
over time and the match rate is proportional to the
square of this population. 
To get a fairly tight bound, we formulate this as a differential
expression and determine the smallest value for the
constant $\cg$ that enables this number of matches.
The full proof can be found in Appendix \ref{appn::type2_upper}.
\end{proof}

\begin{theorem}
\label{thm::last_strip_size_upper_bound}
Suppose $H(t)$ and the constraints in  \eqref{eqn::constraints} hold.
If all agents follow the modified reasonable strategy, 
then at time $t+1$, right after the new agents have entered,
the strip population for the bottommost Type $2$ strip will continue to be at most $\frac{60n}{\sqrt{T}}$.
\end{theorem}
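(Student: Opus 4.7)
The plan is to establish the bound deterministically, conditional on the inductive hypothesis, with no additional failure probability. The bottommost Type 2 strip $s_b$ has height $T/2$; its predecessor $s_p$ is the Type 2 strip of height $T/4$, which by clause 3 of $H(t)$ has population at most $7.5n\sqrt{T}/(T/4) = 30n/\sqrt{T}$. The key observation is that since agents can spend at most $T/2$ time steps in $s_b$ before aging out of the system, every agent present in $s_b$ at time $t+1$ must have entered $s_b$ during the window $W = [t+2-T/2,\, t+1]$. So it suffices to bound the total number of transitions from $s_p$ into $s_b$ over $W$ by $60n/\sqrt{T}$.

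For this I would use a simple cohort argument. Any agent that transitions from $s_p$ to $s_b$ must have occupied $s_p$ for the full $T/4$ consecutive time steps immediately preceding the transition (otherwise they would still be above the bottom edge of $s_p$ and could not yet cross into $s_b$). Partition $W$ into two contiguous sub-windows $W_1, W_2$ of length $T/4$ each. Every agent transitioning in $W_1$ must be present in $s_p$ at the single reference time $t+1 - T/2$, and every agent transitioning in $W_2$ must be present in $s_p$ at the reference time $t+1 - T/4$, since in each case the agent's $T/4$-step residence in $s_p$ straddles the relevant reference time. Because the transitioning agents are all distinct, the count in each sub-window is at most the population of $s_p$ at the corresponding reference time; by the running inductive hypothesis (applied at these past times, as in the proofs of Theorems \ref{type1_upper} and \ref{type2_upper}), each such population is at most $30n/\sqrt{T}$. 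Summing gives at most $60n/\sqrt{T}$ transitions during $W$, and hence the desired population bound on $s_b$.

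The main subtlety I expect is the boundary issue common to this family of inductive arguments: the reference times $t+1-T/2$ and $t+1-T/4$ must themselves lie in the valid range $[\sqrt{T},\,n^c]$ for $H$ to apply, which is arranged by combining the induction with the initialization in Theorem \ref{thm::initialization}. Unlike the Type 1 and earlier Type 2 bounds, no Chernoff-type control on matches inside $s_b$ is needed here --- the flow bound on what can arrive from $s_p$ already suffices --- which is why the conclusion is deterministic; this also explains why the constant $60$ is larger than the $15$ that the rule $7.5 n\sqrt{T}/H$ from Theorem \ref{type2_upper} would predict for a strip of height $T/2$, since we are giving up on amortizing any of the departures to in-strip matches.
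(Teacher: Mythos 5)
Your argument is correct and is essentially the paper's own proof: the paper likewise observes that every agent in the bottommost strip at time $t+1$ must have been present in the height-$T/4$ strip above it at time $t+1-T/4$ or at time $t+1-T/2$, and sums the two inductive bounds of $30n/\sqrt{T}$ to get $60n/\sqrt{T}$ deterministically. Your cohort/sub-window phrasing and the remark about the reference times lying in the valid range are just a more explicit rendering of the same idea.
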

This simple calculation is deferred to Appendix \ref{appn::type2_upper}.

\subsubsection{Bound on Imbalance}

\label{sec::imbalance_bound}
\begin{theorem}
\label{thm::imbalance_bound}
Suppose that $H(\tau)$ and the constraints in \eqref{eqn::constraints} hold. If all agents follow the modified reasonable strategy, 
then with probability at least $1-2/n^{2c+1}$, in every strip $s$ (except possibly the bottommost Type $2$ strip), 
$\Imb(s) \le n/25\sqrt{T}$.
\end{theorem}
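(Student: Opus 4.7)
The plan is to argue by induction on $\tau$: assuming $H(\tau)$ holds, I would bound $\Imb(s, \tau+1)$ for every strip $s$ apart from the bottommost Type~2 strip. The crucial structural observation is that under the modified reasonable strategy, a match occurs only between two agents in the same strip, so each match removes exactly one man and one woman from the same strip and contributes zero to $\Imb(s, \tau+1) - \Imb(s, \tau)$. The only mechanisms that can change $\Imb(s)$ are therefore (a) new system arrivals with position in $s$, each contributing $\pm 1$ independently with probability $\tfrac{1}{2}$, and (b) transitions of unmatched agents across $s$'s boundary with its geometric neighbors.

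To get the bound I would write $\Imb(s, \tau+1) = \sum_i g_i Y_i$, where $g_i \in \{-1,+1\}$ is agent $i$'s gender (set i.i.d.\ fairly at entry) and $Y_i$ is the indicator that $i$ is in $s$ at time $\tau+1$. By the inductive population bounds (Theorems~\ref{type1_upper} and~\ref{type2_upper}) the number of nonzero $Y_i$ is at most $P_s \le 7.5n$ for every strip under consideration, since for a Type~2 strip of height $H_s \ge \sqrt{T}$ we have $7.5 n\sqrt{T}/H_s \le 7.5n$ as well. The next step is a decoupling argument: using a deferred-decisions construction that first samples all agent values and all random pairings gender-blindly and only then assigns genders, one can check that conditional on the sequence of aggregate population counts $(M(s', t'), W(s', t'))$ over all strips $s'$ and past times $t'$, the matches inside each strip at each step form a uniform random bipartite sub-matching whose identities are independent of the individual gender labels of the agents. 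Under this conditioning the $g_i$'s remain independent fair $\pm 1$ variables, so $\sum_i g_i Y_i$ is a sum of at most $P_s$ independent $\pm 1$ terms, and Lemma~\ref{lem::chernoff_bound} delivers $|\Imb(s, \tau+1)| \le O\bigl(\sqrt{P_s \ln n}\bigr)$ except with probability at most $1/n^{2c+2}$.

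Substituting $P_s \le 7.5n$ yields $O(\sqrt{n \ln n})$, and the lower bound $n \ge \Omega\bigl(T^3 (\log n)^2 \ln n\bigr)$ from \eqref{eqn::constraints} comfortably brings this below the target $n/(25\sqrt{T})$. A union bound over the $N = O(\sqrt{T})$ relevant strips then produces the stated failure probability of at most $2/n^{2c+1}$. The main technical obstacle I expect is making the decoupling step rigorous: the population counts themselves depend on past gender outcomes through the matching process, so one must be careful that conditioning on those counts does not secretly bias the individual $g_i$'s. Verifying that the deferred-decisions sampler induces the correct joint distribution---so that matching outcomes can be analyzed as a function of value/pairing randomness alone with genders appearing only as an i.i.d.\ side channel that is then plugged in---is where the subtlety lies, and it is what lets us invoke Chernoff on $\sum_i g_i Y_i$ as if the genders were fresh independent coins.
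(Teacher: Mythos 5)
Your structural observation that within-strip matches leave $\Imb(s)$ unchanged is correct and is also the starting point of the paper's argument. But the decoupling step on which your whole bound rests is not merely delicate --- it is false, and the conclusion it would deliver, $|\Imb(s)| = O(\sqrt{n\ln n})$ at every fixed time, is stronger than what this process actually satisfies. The random pairing is a bipartite matching \emph{between the men and the women}, so it cannot be sampled ``gender-blindly'' with genders assigned afterwards; more importantly, whether an agent survives to time $\tau+1$ depends on its gender through the local sex ratio. As an extreme illustration: if a strip contains $m$ men and $w<m$ women and all proposed within-strip matches are accepted, every woman can be matched while $m-w$ men necessarily survive, so the survival indicator $Y_i$ is (in that step) strongly correlated with $g_i$. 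Conditioning on the aggregate counts $(M(s',t'),W(s',t'))$ makes things worse, not better: those counts \emph{are} functions of the genders, and fixing them pins down exactly how many of the agents present in each strip are men. So $\sum_i g_i Y_i$ is not a sum of independent signs, and Lemma~\ref{lem::chernoff_bound} does not apply to it.

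The deeper issue is that the conservation property you correctly identified means imbalance, once created, \emph{persists}: it is not re-randomized each step but is carried forward by the unmatched agents as they age across strip boundaries, and it dissipates only on a timescale of order $T\log n$. This is why the paper's proof decomposes the imbalance into per-step innovations $X(d,\tau)$ (matching variance) and $Y(d,\tau)$ (arrival variance), tracks how each innovation is redistributed over diagonals and strips at later times (Claims~\ref{clm::update-to-I}--\ref{clm::bound-on-X-one-strip::B}), and then shows that innovations older than $\kappa=\Theta(T\log n)$ have essentially left every non-bottom strip (Claims~\ref{clm::remain::type::1} and~\ref{clm::remain::type::2}). The resulting bound is roughly $\kappa$ times the per-step fluctuation, i.e.\ of order $T\log n\cdot\sqrt{n\ln n/\sqrt{T}}$ --- which is exactly why constraint~\eqref{eqn::constraints} needs $n\gtrsim T^3(\log n)^2\ln n$ to push this below $n/25\sqrt{T}$. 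Your proposal, if it worked, would make that accumulation (and most of the hypothesis on $n$) unnecessary, which is a signal that the decoupling cannot be repaired rather than a technicality to be checked.
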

\begin{proof}
We divide each strip into thin diagonals of width $1$. Let the diagonal include the bottom but not the top boundary. Notice that for each value, a diagonal contains at most one grid point. 

We introduce the following notation w.r.t.\ diagonal $d$ at time step $\tau$, where we are conditioning on the outcome of step $\tau-1$.
\begin{align*}
I(d,\tau) &= \Expect[(\text{number of men at time $\tau$}-\text{number of women at time $\tau$})]\\
X(d,\tau) &=  (\text{number of men matching at time $\tau$}-\text{number of women matching at time $\tau$})\\
&\hspace*{0.2in} - \Expect[(\text{number of men matching at time $\tau$}-\text{number of women matching at time $\tau$})]\\
Y(d,\tau) &= \text{number of men entering at time $\tau$} - \text{number of women entering at time $\tau$}
\\
A(d,\tau) &= (\text{number of men matching at time $\tau$}+\text{number of women matching at time $\tau$})/2.
\end{align*}

$I(d,\tau)$ is measured after the entry of the new agents at time $\tau$ but prior to the match for this step. Also, note that $Y(d, \tau) = 0$ if $d$ is in a Type $2$ strip. 

In addition, observe that the imbalance $\Imb(s)$ at the start of step $t$ equals $\sum_{d\in s} I(d,t)$.


We observe that a match between two agents in distinct diagonals of the same strip
will increment the $(\text{number of men } - \text{ number of women})$
in one diagonal and decrement it in the other.
Thus there is a zero net change over all the diagonals
in the strip due to the matches. However, as the agents all age by 1 unit during a step, some agents enter the strip and some leave, which can cause changes to the imbalance within a strip.
However, the entry of new agents can introduce new imbalances.
We will need to understand more precisely how these imbalances evolve.

It is convenient to number the diagonals as $d_1,d_2,d_3,\ldots$, in right to left order.

\begin{claim}
\label{clm::update-to-I}
Let $d_i$ and $d_j$ be two diagonals in the same strip $s$. For brevity, let $I_i\triangleq I(d_i,\tau-1)$,
$I_j\triangleq I(d_j,\tau-1)$,
$A_i\triangleq A(d_i,\tau-1)$,
$A_j\triangleq A(d_j,\tau-1)$,
$X_i\triangleq X(d_i,\tau-1)$,
$X_j\triangleq X(d_j,\tau-1)$.
Finally, let $R$ denote the maximum of the total number
of men and the total number of women in the system
at time $\tau-1$.

Then the new imbalance on diagonal $d_i$, prior to every unmatched agent adding 1 to their age (which causes the agents on $d_i$ to move to $d_{i+1}$), denoted by $I'(d_i,\tau)$, is given by:
\begin{align*}
&I'(d_i,\tau)= \\
&~~~~I_i + X_i - \sum_{d_j \in s}\Big[   X_i\frac{(2A_j-I_j-X_j)}{4R}-X_j\frac{(2A_i-I_i-X_i)}{4R}+I_i\frac{(2A_j-I_j-X_j)}{4R}-I_j\frac{(2A_i-I_i-X_i)}{4R}\Big]; \\
&\text{~~and~~}I(d_i, \tau) = I'(d_{i-1}, \tau - 1) + Y(d,\tau).
\end{align*}
\end{claim}

This claim is shown by considering the expected number of matches involving agents in diagonals $d_i$ and $d_j$.
The proof can be found in Appendix \ref{appn::imbalance}. 

The expression $X_i(2A_j-I_j-X_j)/4R$ reflects the reduction
of the contribution of $X_i$ to the total imbalance on diagonal $d_i$ and the corresponding increase on diagonal $d_j$.
Thus it is convenient to view the multiplier $(2A_j-I_j-X_j)/4R$ as indicating the fraction of $X_i$ that is being moved to diagonal $j$; the remaining fraction of $X_i$ remains on $d_i$.

$X(d,\tau)$ and $Y(d,\tau)$ are generated at diagonal $d$ at time $\tau$. In each subsequent time step the portion on each diagonal where it is present will be further redistributed:
\begin{enumerate}
    \item Due to the expected matching at time $\tau'\ge \tau$, each portion of $X(d,\tau)$ and $Y(d,\tau)$ spreads to other diagonals in the same strip.
    \item At the end of  time step $\tau'$ the portions of $X(d,\tau)$ and $Y(d,\tau)$ present on diagonal $d_i$ move to diagonal $d_{i+1}$.
\end{enumerate}

Building on these observations, we will show our bound on the imbalance by means of the following two arguments. Specifically, we show that: 

\begin{enumerate}
    \item For any $\tau$ and $\tau'$, the total contribution from $X(\cdot, \tau)$ and $Y(\cdot, \tau)$ to strip $s$ at time $\tau'$ is bounded.
    \item For times $\tau'\ge \tau + \Omega(T\log n)$, the remaining portions of $X(\cdot, \tau)$ and $Y(\cdot, \tau)$ in the market are small. 
\end{enumerate}

\paragraph{Bound on the contribution of $X$ to the strip $s$}

Notice that  $\sum_{d_i\in s} I'(d_i,\tau) = \sum_{d_i\in s} I(d_i,\tau-1)$, for the coefficients
multiplying $X_i$ cancel, as they also do for $I_i$. Thus we can think of this process as redistributing the imbalance, but not changing the total imbalance.

Over time an imbalance $X(d_i,\tau)$ will be redistributed over many diagonals. We write
$X(d_i,\tau,d_j,\tau')$ to denote the portion of
$X(d_i,\tau)$ on diagonal $d_j$ at time $\tau'$.
$d_j$ need not be in the same strip as $d_i$.
Note that $\sum_{d_j} X(d_i,\tau,d_j,\tau') = X(d_i,\tau)$ for all $\tau'\ge \tau$. $Y(d_i,\tau,d_j,\tau')$ is defined analogously.

An important property concerns the relative
distribution of the $X(d_i,\tau,d_j,\tau')$ and
the $X(d_k,\tau,d_j,\tau')$. In a sense made precise in the following claim, if $k>i$ the $d_k$ terms
remain to the left of the $d_i$ terms.

For the purposes of the following claim, we treat the final strip as a single diagonal, and in addition ignore the fact that people depart at age $T$ (which means that once an imbalance appears in this strip it remains there). The reason this strip is different is that it covers the whole of the bottom boundary and so is the only strip from which people leave the system by aging out.

\begin{claim}
\label{clm::distr-of-X}
For all $\ell$, for all $i<k$, and for all $\tau'\ge \tau$,
$\big|\sum_{j>\ell} X(d_i,\tau,d_j,\tau')\big| \le \big|\sum_{j>\ell} X(d_k,\tau,d_j,\tau')\big|$.
The same property holds for the $Y(d_i,\tau,d_j,\tau')$.
\end{claim}
\begin{proof}
We prove the result for the $X$ terms by induction on $\tau'$; the same argument applies to the $Y$ terms.
Clearly the property holds for $\tau'=\tau$.
Let $x_{ij} \triangleq X(d_i,\tau,d_j,\tau')/X(d_i,\tau)$, and define $x_{kj}$ analogously. Our claim states that $\sum_{j>\ell} x_{ij} \le \sum_{j>\ell} x_{kj}$; we need to show it holds at time $\tau'+1$ also.
We view the $x_{ij}$ as sitting on the unit interval, with $x_{ij}$ taking a portion of length $x_{ij}$, ordered by increasing $j$, and likewise for the $x_{kj}$.
We map aligned portions of the $x_{ij}$ and $x_{kj'}$ to each other.
This mapping has the property that the $j$ index in the $x_{ij}$ term is always equal to or smaller than the $j'$ index in the $x_{kj'}$ term.

Let's look at how aligned portions of $x_{ij}$ and $x_{kj'}$ are dispersed
in the next step. If they are in distinct strips, then $j< j'$ and this property is maintained for all the dispersed portions.

We view the multiplier $(2A_j-I_j-X_j)/4R$ in Claim~\ref{clm::update-to-I} as specifying the fraction of $X_i$ that moves from diagonal $i$ to diagonal $j$. Notice that this multiplier is the same for every diagonal in this strip.

We also note that $I_i$ consists of a sum of terms $X(d_i,\tau,d_j,\tau')$ 
and $X(d_i,\tau,d_j,\tau')$ for diagonals $d_j$ in the same strip as $d_i$ or to the right of $d_i$. Furthermore, the multiplier $(2A_j-I_j-X_j)/4R$ specifies
the fraction of each of these terms that moves from diagonal $i$ to diagonal $j$. Thus if $d_j$ and $d_{j'}$ are in the same strip, the $X$ terms corresponding to the aligned portions of $x_{ij}$ and $x_{kj'}$ are redistributed identically, thereby maintaining the property for these fragments. Naturally, the property also continues to hold for undispersed fragments.

Finally, shifting down by one diagonal, as is done following the dispersal, will leave the property unaffected.

\end{proof}

Later, we will show a common bound $B$ on the sums
$\big| \sum_{i\le j \le k} X(d_j,\tau)\big|$,
which holds for all $d_i$ and $d_k$ in the same strip and all\footnote{The calculation for the bound proved in Claim \ref{clm::bound-on-X-one-strip::B} only applies to | $\sum_{i\le j \le k} X(d_j,\tau)|$, where $\tau>\sqrt{T}$. However for times in the initial $\sqrt{T}$ steps, the bound is only better. A calculation of this bound for times in this initial period is done in the proof of Theorem \ref{thm::initialization}; see Claim \ref{clm:: B_for_init} in Appendix \ref{appn::init}.} $\tau$.
With this bound and Claim~\ref{clm::distr-of-X} in hand, for each strip $s$, we can bound the contribution of the $X(d_i,\tau,d_j,\tau')$ summed
over all $d_i$ and over $d_j\in s$ by $2B$.
\begin{claim}
\label{clm::bound-on-X-one-strip}
For all $\tau'\ge \tau$, for every strip $s$,
$\big|\sum_{d_i; d_j\in s} X(d_i,\tau,d_j,\tau')  \big| \le 2B$.
\end{claim}
\begin{proof}
Let $d_{r(s)}$ be the rightmost (lowest index) diagonal in $s$ and $d_{l(s)}$ be the leftmost (highest index) diagonal in $s$. Let $w_i = \sum_{j\ge r(s)} X(d_i,\tau,d_j,\tau')/X(d_i,\tau)$.
Let's consider 
$\sum_{d_i\in s'; j \ge r(s)} X(d_i,\tau,d_j,\tau')
= \sum_{d_i\in s'} w_i\cdot X(d_i,\tau)
$. 
Notice that $\sum_{r(s')\leq i\leq l(s)} X(d_i,\tau)=0$. By Claim~\ref{clm::distr-of-X}, $w_i \le w_k$, for $i<k$. Thus, 
\begin{align*}
    \Big|\sum_{d_i\in s'; j \ge r(s)} X(d_i,\tau,d_j,\tau')\Big|=
    \Big|\sum_{d_i\in s'} w_i\cdot X(d_i,\tau)\Big| &\leq
    \sum_{r(s') \le r < l(s')} (w_i -w_{i-1})\Big|\sum_{r\le i \le l(s')} X(d_i,\tau)\Big|\\
    &\le (w_{l(s')}-w_{r(s')})\cdot \max_{r \geq r(s')}\Big| \sum_{r\le i \le l(s')} X(d_i,\tau)\Big|.
\end{align*}

We apply this bound to the diagonals from every strip to obtain:
\begin{equation}
\label{eqn::canceled_out_spread}
    \begin{aligned}
    \Big|\sum_{d_i; j \ge r(s)} X(d_i,\tau,d_j,\tau')\Big|=
    \Big|\sum_{s'}\sum_{d_i\in s'; j \ge r(s)} X(d_i,\tau,d_j,\tau')]\Big| \leq \sum_{s'}(w_{l(s')}-w_{r(s')})\cdot B\leq B.\
    \end{aligned}
\end{equation}

Using the same argument, $\big|\sum_{d_i; j \ge l(s)+1} X(d_i,\tau,d_j,\tau')\big| \le B$, since $l(s)+1=r(s'')$ where $s''$ is the strip immediately below $s$. Therefore,
%
%
\begin{align*}
   \Big|\sum_{d_i; d_j\in s} X(d_i,\tau,d_j,\tau') \Big| =
   \Big| \sum_{d_i; j \ge r(s)} X(d_i,\tau,d_j,\tau') -
   \sum_{d_i; j \ge l(s)+1} X(d_i,\tau,d_j,\tau') \Big| \le 2B.
\end{align*}
\end{proof}

\hide{
Notice that the coefficient of the $X_i$ term (as well as the $I_i$ term) in \rjc{Claim~\ref{clm::update-to-I}} is $\frac{(2A_j-I_j-X_j)}{4R}$. This is independent of $i$. So equal fractions of $X_i$ get sent to the diagonal $d_j$ no matter which diagonal $d_i$ we consider in the strip. \RJC{This was already used so can probably be omitted.}
}


\begin{claim}\label{clm::bound-on-X-one-strip::B}
For any time $\tau \leq n^c$, with probability at least $1 - \frac{1}{n^{2c+1}}$, $B \leq 96\Big[ \frac{n\ln(4n^{3c+1} (T^2/32 + T/8) N)}{\sqrt{T}}\Big]^{1/2}$.
\end{claim}
\begin{proof}
First we bound $|\sum_{d\in S} X(d,\tau)|$ for any subset $S$ of consecutive diagonals in a strip $s$. Suppose the total number of men in $S$ is $m$ and the total number of women is $w$.

By Theorem \ref{thm::lb-size}, the total population is at least $1/3 \cdot n\sqrt{T}$. By Theorem \ref{thm::total_size_upper_bound}, it is at most $3nN/2+n$. In addition, by the inductive hypothesis, the total imbalance is bounded by the bottommost strip population plus the individual strip imbalances, and this is at most $60 n / \sqrt{T} + 25 n N / \sqrt{T}$. Therefore,
$$\frac{n\sqrt{T}}{6}\leq\max
\Big\{
\begin{array}{l}
   \text{total number of men},\\ \hspace*{0.2in}\text{total number of women}
\end{array}
\Big\}
\leq \frac 12 \Big(\frac{3n(\sqrt{T}+\log_2\sqrt{T} + 1)}{2}+n + 60 n/\sqrt{T} +  nN/25\sqrt{T}\Big).$$

As $\sqrt{T}\geq 26$ by constraint \eqref{eqn::constraints}, %
\begin{equation}
\label{eqn::upper_and_lower_TotBound}
    \begin{aligned}
    \frac{n\sqrt{T}}{6}\leq\max
\Big\{
\begin{array}{l}
   \text{total number of men},\\ \hspace*{0.2in}\text{total number of women}
\end{array}
\Big\}
\leq  \rjc{n\sqrt{T}.}
    \end{aligned}
\end{equation}

Let $M=\max\{\text{total number of men}, \text{total number of women} \}$. 
Lemmas~\ref{lem::negative_dependence_two_sex} and~\ref{lem::match rate} yield the following bound on the deviation from the expected number of the number of men in $S$ matched in a given time step:
\begin{equation}
\label{eqn::man_deviation_general_population}
\begin{aligned}
    &\Pr\bigg[\Big|
\begin{array}{l}
    \text{number of men matched}\\
    \hspace*{0.2in}-\Expect[\text{number of men matched}]
\end{array}
    \Big|> \frac{mw\epsilon}{M}\bigg] \leq 2e^{-{mw\epsilon^2}/{3M}}.
\end{aligned}
\end{equation}
By the lower bound on $M$ provided by  \eqref{eqn::upper_and_lower_TotBound}:
\begin{align*}
    &\Pr\bigg[\Big|
\begin{array}{l}
    \text{number of men matched}\\
    \hspace*{0.1in}-\Expect[\text{number of men matched}]
\end{array}
    \Big|> \frac{6mw\epsilon}{n\sqrt{T}}\bigg] \leq \Pr\bigg[\Big|
\begin{array}{l}
    \text{number of men matched}\\
    \hspace*{0.1in}-\Expect[\text{number of men matched}]
\end{array}
    \Big|> \frac{mw\epsilon}{M}\bigg].
\end{align*}
And by the upper bound on $M$ given by  \eqref{eqn::upper_and_lower_TotBound}, $2e^{-{mw\epsilon^2}/{3M}}\leq
2e^{\rjc{-{mw\epsilon^2}/{3n\sqrt{T}}}}$.

We now apply these two bounds to equation \eqref{eqn::man_deviation_general_population} to obtain:
\begin{align*}
    &\Pr\bigg[\Big|
\begin{array}{l}
    \text{number of men matched}\\
    \hspace*{0.2in}-\Expect[\text{number of men matched}]
\end{array}
    \Big| > \frac{6mw\epsilon}{n\sqrt{T}}\bigg] \leq 2e^{-{2mw\epsilon^2}/{9n\sqrt{T}}}.
\end{align*}

The same reasoning can be applied to the number of women matched in $S$.

We  set $\epsilon=\big[\frac{3n\sqrt{T}}{mw}\ln(4n^{3c+1} (T^2/32 + T/8) N)\big]^{1/2}$. 
By the inductive hypothesis, $m+w\leq 7.5n$, and therefore $mw \leq (15n/4)^2$.  We obtain:
\begin{align*}
& \frac{6mw\epsilon}{n\sqrt{T}} 
 \Big[\frac{3mw\ln(4n^{3c+1} (T^2/32 + T/8) N)}{n\sqrt{T}}\Big]^{1/2}
= \rjc{\frac{45\sqrt{3}}{2}}
\Big[\frac{n\ln(4n^{3c+1} (T^2/32 + T/8) N)}{\sqrt{T}}\Big]^{1/2},\\
&\text{and}\hspace*{0.2in} 2e^{-{2mw\epsilon^2}/{9n\sqrt{T}}} 
 \le \frac{1}{2n^{3c+1} (T^2/32 + T/8) N}.
\end{align*}

\hide{
We obtain:
\begin{align*}
&\Pr\bigg[\Big|
\begin{array}{l}
    \text{number of men matched}\\
    \hspace*{0.2in}-\Expect\big[\text{number of men matched}\big]
\end{array}
\Big|> 6 \Big[\frac{9mw\ln(4n^{2c+1} (T^2/32 + T/8) N)}{2n\sqrt{T}}\Big]^{1/2}\bigg]\\
    &\hspace*{0.4in}\leq \frac{1}{2n^{2c+1} (T^2/32 + T/8) N}.
\end{align*}

By the inductive hypothesis, $m+w\leq 7.5n$. Therefore $mw \leq (15n/4)^2$, and: 
\begin{align*}
&\Pr\bigg[\Big|
\begin{array}{l}
    \text{number of men matched}\\
    \hspace*{0.2in}-\Expect\big[\text{number of men matched}\big]
\end{array}
> \frac{45}{2}
\Big[\frac{9n\ln(4n^{2c+1} (T^2/32 + T/8) N)}{2\sqrt{T}}\Big]^{1/2}\bigg]\\
    &\hspace*{0.4in}\leq \frac{1}{2n^{2c+1} (T^2/32 + T/8) N}.
\end{align*}
}

On adding the bounds for the numbers of men and women, this yields:
\begin{align}
\label{eqn::new_variance}
\Pr\bigg[\big|\sum_{d\in S} X(d,\tau)\big|
\leq \rjc{45\sqrt{3}}\Big[ \frac{n\ln(4n^{3c+1} (T^2/32 + T/8) N)}{\sqrt{T}}\Big]^{1/2}\bigg]
\leq \frac{1}{n^{3c+1} (T^2/32 + T/8) N}.
\end{align}

Recall that there are $N$ strips, at most $n^c$ rounds, and, for each strip, there are at most $(T^2/32 + T/8)$ choices of $l$ and $r$. Therefore, the total failure probability is at most $\frac{1}{n^{2c+1}}$.
\end{proof}

\paragraph{Bound on the contribution of $Y$ to strip $s$.}
As for $X$, we define $Y(d_i, \tau, d_j, \tau')$ to be the portion of $Y(d_i, \tau)$ on diagonal $d_j$ at time $\tau'$. 
\hide{\begin{claim}
\label{clm::distr-of-Y}
For all $\ell$, for all $i<k$, and for all $\tau'\ge \tau$,
$\big|\sum_{j>\ell} Y(d_i,\tau,d_j,\tau')\big| \le \big|\sum_{j>\ell} Y(d_k,\tau,d_j,\tau')\big|$.
\end{claim}
In addition:}
\begin{claim}
\label{clm::bound-on-Y-one-strip}
With probability at least $1 - \frac{1}{n^{2c+1}}$, for all $\tau'\ge \tau$, for every strip $s$,
$\big|\sum_{d_i; d_j\in s} Y(d_i,\tau,d_j,\tau')  \big| \le 2\sqrt{\frac{3n}{2} \ln \left(2 T n^{3c + 1}\right)}$.
\end{claim}
The proof of this claim is similar in spirit to that of
Claim~\ref{clm::bound-on-X-one-strip::B}. We defer it to Appendix \ref{appn::imbalance}.

\paragraph{Remaining $X$ and $Y$ in the market.}
Next, we want to show that after $O(T)$ time the portions of $X$ and $Y$ remaining in the market are small. 
\begin{claim}\label{clm::remain::type::1}
$\frac{e^2 \ln 2}{\log_2 (4/3)} \sqrt{T}(\sqrt{T}+\log_{2}(2n^k))$ time after their creation, there is only a $\frac{1}{2n^k}$ fraction of $X(d,\tau)$ and $Y(d,\tau)$ remaining in the Type $1$ strips.
\end{claim}
\begin{proof}
Consider some $X(d,\tau)$ or $Y(d, \tau)$ generated in a Type $1$ strip. 

We first bound $\sum_{j: d_j \in s} (2A_j - I_j - X_j) / 4R$ for any Type $1$ strip $s$. 
By Theorem \ref{thm::lb-size}, the total size of the population is lower bounded by $(1/3)n\sqrt{T}$. By the inductive hypothesis, any Type $1$ strip $s$ has total size at most $2.6n$. The term $\sum_{j: d_j \in s} (2A_j - I_j - X_j)$ is $2$ times the total number of women in strip $s$. By the inductive hypothesis, the number of women in $s$ is at most $1.3n+n/50\sqrt{T}$. Lemma \ref{lem::match rate} provides the following upper bound on the probability
that a man receives a match in a Type $1$ strip:
\begin{align}
\label{enq::match-rate-bound}    
\sum_{j: d_j \in s} \frac{(2A_j - I_j - X_j)}{4R} \leq \frac{1}{2} \cdot \frac{1.3n+ \frac{n}{50\sqrt{T}}}{\frac{1}{6}n\sqrt{T}}<\frac{4}{\sqrt{T}},\hspace*{0.2in}\text{as ($\sqrt{T}\ge 26$ by constraint~\ref{eqn::constraints})}.
\end{align}

Consider any $X(d,\tau,d',\tau')$. If $d'$ is in a Type $1$ strip then by \eqref{enq::match-rate-bound} in one step at most $\frac{4}{\sqrt{T}}$ of it disperses to some location in the same strip, and at least $1-\frac{4}{\sqrt{T}}$ of it moves down distance one. This implies that in $\sqrt{T}/2$ time a Type $1$ strip loses at least $e^{-2}$ of the $X(d,\tau,d',\tau')$ that had been present within it at time $\tau'$. Let $K_1=e^{2}\ln 2$. By time $\tau'+ K_1\sqrt{T}/2$ at least half of the $X(d,\tau,d',\tau')$ in a Type $1$ strip has moved out of the strip. 

We number the Type $1$ strips from top to bottom. Let $\gamma$ be the distribution of $X(d,\tau)$ (or $Y(d,\tau)$) where $\gamma_i$ is the fraction of $X(d,\tau)$ (or $Y(d,\tau)$) in strip $i$. Recall that there are $\sqrt{T}$ Type $1$ strips. We consider the worst case: the $X(d,\tau)$ starts out in the topmost strip. Define a potential function $\phi(\gamma)=\sum_{i=1}^{\sqrt{T}}\gamma_i\cdot 2^{\sqrt{T} - i + 1}.$ Any fraction of $X(d,\tau)$ that has left the bottommost Type $1$ strip contributes nothing to the potential. The initial potential is $2^{\sqrt{T}}$. Every $K_1\sqrt{T}$ time steps, the potential decreases by at least $1/4$. Therefore, after $\frac{1}{\log_2 (4/3)} K_1\sqrt{T}\log_{2}(2^{\sqrt{T}}2n^k)$ time, the potential would have reduced to at most $\frac{1}{2n^k}$, which means that the fraction of $X(d,\tau)$ (or $Y(d,\tau)$) in the Type $1$ strips after $\frac{1}{\log_2 (4/3)} K_1\sqrt{T}(\sqrt{T}+\log_{2}(2n^k))$ time is at most $\frac{1}{2n^k}$.
\end{proof}

We will analyze the progress through the Type 2 strips, apart from the bottommost one, in a similar way. The proof can be found in Appendix \ref{appn::imbalance}.

\begin{claim}\label{clm::remain::type::2}
$\frac{e^2 \ln 2}{\log_2 (4/3)} \sqrt{T}(\sqrt{T}+\log_{2}(2n^k)) + \frac{e^{12} \ln 2}{4 \log_2 (4/3)} T\log_{2}(2n^k\sqrt{T})$ time after their creation, there is only $\frac{1}{n^k}$ fraction of $X(d,\tau)$ and $Y(d,\tau)$ remaining in any strip other than the bottommost Type 2 strip.
\end{claim}

\paragraph{The Total Bound on Imbalance}
Now we can bound the total imbalance in a strip $s$ at time $\tau'$. Let $\kappa = \frac{e^2 \ln 2}{\log_2 (4/3)} \sqrt{T}(\sqrt{T}+\log_{2}(2n^k)) + \frac{e^{12} \ln 2}{4 \log_2 (4/3)} T\log_{2}(2n^k\sqrt{T})$.
We divide the time interval $[0, \tau']$ into two periods: $\left[0, \tau' - {\kappa}\right]$ and  $\left[\tau' - {\kappa} + 1, \tau'\right]$. 
\begin{itemize}
    \item In the first period, we bound each $|X(d,\tau)|$ and $|Y(d,\tau)|$ by $7.5n$ as no strip can have more than $7.5n$ agents on it by Lemma~\ref{lem::ind-bound}.  By Claims~\ref{clm::remain::type::1} and \ref{clm::remain::type::2}, the total imbalance for this period is at most  $15n T n^c / n^k$;
    \item For the second period, using Claims \ref{clm::bound-on-X-one-strip}, \ref{clm::bound-on-X-one-strip::B}, and \ref{clm::bound-on-Y-one-strip}, The total imbalance is at most $\ceil{\kappa} \cdot \Big(192\sqrt{ \frac{n\ln(4n^{3c+1} (T^2/32 + T/8) N)}{\sqrt{T}}} +2\sqrt{\frac{3n}{2} \ln \left(2 T n^{3c + 1}\right)} \Big) $.
\end{itemize} 

We choose $k = c + 4$ and sum them up. We desire that both these contributions to the imbalance add up to no more than $n/25\sqrt{T}$. Using $n\geq T\geq 676$ (by the constraint \eqref{eqn::constraints}), we simplify this condition to conclude that it suffices to have:
$$n \geq (3654 + 2436e^{12} + 546(e^{12} + 1) c)^2(3c+4) T^3 (\log_2 n)^2 \ln n.$$

The details of this calculation can be found in Appendix \ref{appn::imbalance}.

Finally, the failure probability of $2/n^{2c+1}$ arises from Claims~\ref{clm::bound-on-X-one-strip::B} and~\ref{clm::bound-on-Y-one-strip}, which each have failure probability at most $1/n^{2c+1}$.

\end{proof}

\subsubsection{Initialization}

\label{sec::initialization}
 \begin{theorem}
 \label{thm::initialization}
Suppose that constraint~\eqref{eqn::constraints} holds. If all agents follow the modified reasonable strategy, then $H(\sqrt{T})$ holds with probability at least $1-\frac{1}{n^{c+1}}$.
 \end{theorem}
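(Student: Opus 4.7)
The plan is to verify each of the five clauses of $H(\sqrt{T})$ separately and then union bound their failure probabilities. During the first $\sqrt{T}$ steps many strips are not yet populated, so most clauses follow from simple Chernoff arguments on the arrival process together with the observation that matching can only decrease strip populations.

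For Clause~1 the bound is immediate: at most $n\sqrt{T}$ agents have entered by time $\sqrt{T}$, and $\frac{3}{2}nN + n \ge \frac{3}{2}n\sqrt{T} + n > n\sqrt{T}$ since $N \ge \sqrt{T}$. For Clause~2 I would fix a Type~1 strip and observe that at time $\sqrt{T}$ it contains agents of ages $0, 1, \ldots, \sqrt{T}/2 - 1$. For each such age $a$, the agents in the strip all arrived at time $\sqrt{T} - a$ with a value in some interval of length $\sqrt{T}$; since values are uniform on an interval of length $T$, the count at age $a$ has expectation $n/\sqrt{T}$ and is a sum of independent indicators. Summing over the $\sqrt{T}/2$ ages and applying a standard Chernoff bound yields a strip population of at most $2.6n$ with failure probability $\exp(-\Omega(n))$; matching only decreases the count. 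Clause~4 is trivial because the bottommost Type~2 strip is empty at time $\sqrt{T}$.

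For Clause~3, observe that the second and later Type~2 strips span ages $\ge 3\sqrt{T}/2$, so they are empty at time $\sqrt{T}$. For the topmost Type~2 strip I would mimic the argument sketched for Theorem~\ref{type2_upper}: its population at time $t$ is bounded by the sum of the relevant Type~1 strip populations at times $t - \sqrt{T}/2$ and $t - \sqrt{T}$. At $t = \sqrt{T}$ the second contribution is zero (no agents have arrived by time $0$) and the first is at most $2.6n$ by the same Chernoff argument as in Clause~2 applied one step earlier, which is well below the required $7.5n$.

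The main obstacle is Clause~5, which I would handle following the template of Theorem~\ref{thm::imbalance_bound}. I would decompose the imbalance on each diagonal into the contributions of the arrival terms $Y(d,\tau)$ and matching terms $X(d,\tau)$, and bound the strip-level sum of each. The arrival imbalance per strip over the first $\sqrt{T}$ steps is $O(\sqrt{n})$ by Chernoff on the i.i.d.\ gender coin flips, and the matching imbalance is controlled by Claim~\ref{clm:: B_for_init}, the initialization analogue of Claim~\ref{clm::bound-on-X-one-strip::B}, in which the smaller populations present during the first $\sqrt{T}$ steps yield a strictly stronger bound (so the same proof idea goes through, more easily). The constraint $n \ge \Omega(T^3 (\log n)^2 \ln n)$ then ensures both contributions fall below $n/25\sqrt{T}$ with the target probability, and a union bound over strips and clauses gives the overall $1/n^{c+1}$ failure probability. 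The subtle part will be checking that the dispersal arguments (Claim~\ref{clm::distr-of-X} and its consequences, as used in Claim~\ref{clm::bound-on-X-one-strip}) continue to apply during the transient phase when match rates differ from the steady-state values implicit in Claim~\ref{clm::bound-on-X-one-strip::B}.
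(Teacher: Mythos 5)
Your plan matches the paper's proof in structure: clauses 1--4 are dispatched exactly as in the paper (total arrivals bounded by $n\sqrt{T}<nN$; a Chernoff bound on arrivals into each Type~1 strip and its right neighbour giving $2n(1+\epsilon)\le 2.6n$; the topmost Type~2 strip inheriting that bound; the lower Type~2 strips empty), and clause~5 is handled by rerunning the $X$/$Y$ decomposition of Theorem~\ref{thm::imbalance_bound} over only $\sqrt{T}$ generation times. Two points are worth flagging. First, a minor geometric slip: a Type~1 strip is diagonal and at time $\sqrt{T}$ contains agents of every age up to $\sqrt{T}$, not only ages $0,\dots,\sqrt{T}/2-1$; the expected occupancy is therefore about $n$ rather than $n/2$, which still sits comfortably below $2.6n$, so nothing breaks. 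Relatedly, the accumulated arrival-imbalance contribution to a strip is $\sqrt{T}$ generation times each contributing $O(\sqrt{n\log n})$, i.e.\ $O(\sqrt{Tn\log n})$ in total, not $O(\sqrt{n})$; the stated constraint on $n$ still absorbs this.

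The one substantive gap is in your treatment of clause~5. You assert that the initialization analogue of Claim~\ref{clm::bound-on-X-one-strip::B} ``goes through, more easily'' because strip populations are smaller during the first $\sqrt{T}$ steps. But the variance bound there controls a ratio $mw/M$, and during initialization the denominator $M$ (the larger of the total number of men and women) is \emph{also} much smaller than the $n\sqrt{T}/3$ guaranteed by Theorem~\ref{thm::lb-size}, which only applies for $t\ge\sqrt{T}$. One must therefore separately prove a time-dependent lower bound $M\ge 0.12\,nt$ for all $t\le\sqrt{T}$ (the paper's Claim~\ref{claim::early_total_size_lower_bound}), by rerunning the argument of Theorem~\ref{thm::lb-size} on the initial segment, and pair it with the matching upper bound $m,w\le nt(1+\epsilon)/2\sqrt{T}$ on per-strip populations so that the ratio $mw/M$ comes out at the same order as in the steady state. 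Without this ingredient the deviation bound on the matching terms $X(d,\tau)$ is not justified, and citing Claim~\ref{clm:: B_for_init} does not close the loop since that claim is itself part of the proof of this theorem and rests on exactly this population lower bound. Your instinct that the transient phase is the subtle part is correct; this is precisely the piece that has to be supplied.
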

The proof is similar to the earlier analysis and can be found in Appendix \ref{appn::init}.

\section{Numerical Simulations}
\label{sec::simulations}

We have demonstrated a strategy which is asymptotically close to optimal with regard to minimizing the average loss experienced by agents. Complementing this, in this section we simulate the  evolution of the system for moderately large values of $n$ and $T$. In order to gain a sense of the overall stability of the system, we track the total population over time.

We now discuss some observations based on our simulations.\footnote{For every pair of $n$ and $T$ that we considered in the discrete setting, we ran the simulation $10$ times; letting each run for $2000$ iterations. The error ranges mentioned below are obtained from the range of values we obtained over these $10$ runs. The values for each run can be found in Appendix \ref{appn::data}.}

\begin{figure}[thb]

\centering

\begin{minipage}[b]{0.45\textwidth}

\centering

\includegraphics[width=1\textwidth]{no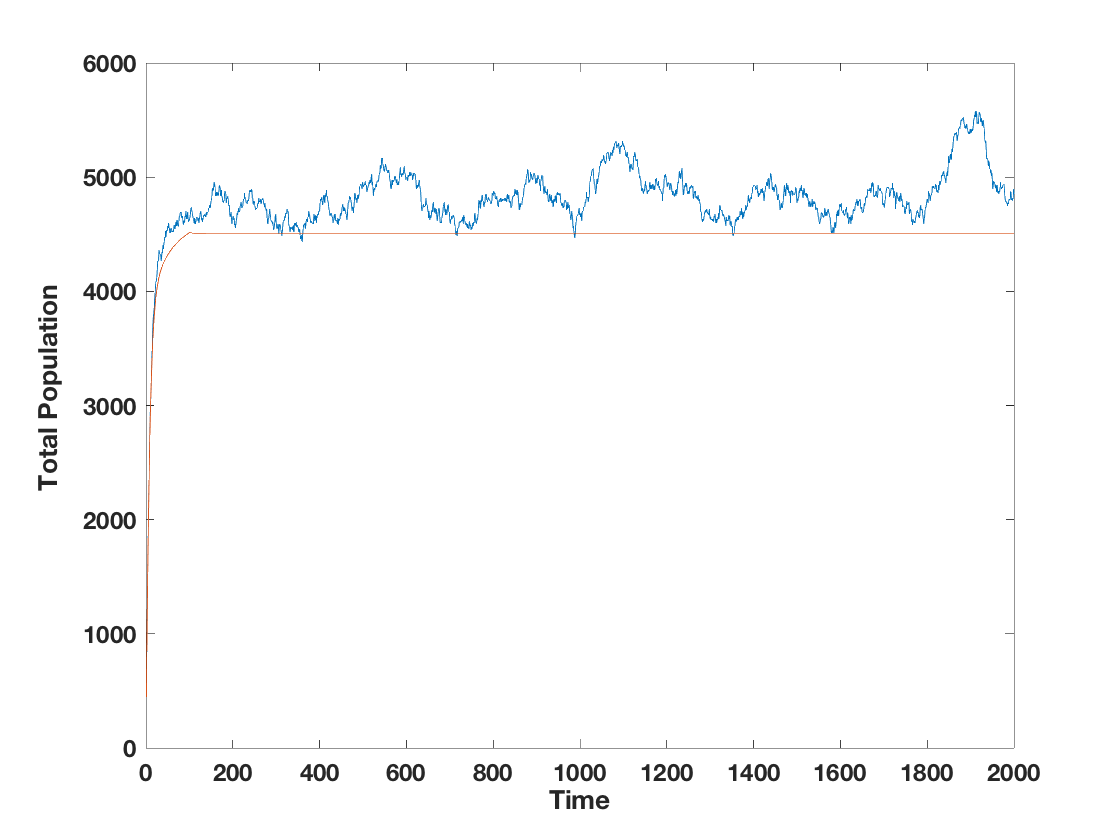}
\caption{\label{fig:one}The evolution of total population over time in the discrete (blue) and continuum (orange) settings for $n=500$, $T=100$, using the reasonable strategy.}
\end{minipage}
\hspace{0.08\textwidth}
\begin{minipage}[b]{0.45\textwidth}
\centering
\includegraphics[width=1\textwidth]{StripPlot.png}
\caption{\label{fig:two}The evolution of total population over time in the discrete (blue) and continuum (orange) settings for $n=500$, $T=100$, using the modified reasonable strategy.}
\end{minipage}
\end{figure}

For the continuum model we obtain reasonably rapid convergence---in about $T$ time---whereas for the discrete model in a similar time the system reaches its long-term average value, but with somewhat chaotic oscillations about this value, as shown in Figures~\ref{fig:one} and~\ref{fig:two}. In addition, the long-term average population for the discrete case is a bit larger than the continuum equilibrium value. (This is not surprising, for both variance and male/female imbalances will reduce the match rate.)

For moderate values of $n$ and $T$, the average loss in the modified reasonable strategy is better than the asymptotic bound we obtain. For example, consider the $n=500$, $T=100$ case. We prove an upper bound on the total loss of $11T\sqrt{T}$ but the simulation achieves an average loss of just $2.21T\sqrt{T}(\pm 1.7\%)$. The total populations are significantly closer (an upper bound of close to $1.5nN$ in our theorem vs.\ close to $n\sqrt{T}$ in the simulation).

For the case where agents use the modified reasonable strategy, we also examine the average population size and average loss for various $T$ (with $n$ fixed at $500$). Figure \ref{fig:three} shows a plot of $\text{Average Population}/n\sqrt{T}$ and $\text{Average Loss}/\sqrt{T}$ for five different values of $n$. The result is quite consistent with the $n\sqrt{T}$ scaling of the average total size and the $\sqrt{T}$ scaling of the average loss that we prove hold asymptotically,
even though these are only moderately large values of $n$ and $T$, and even though we are not in the $n$ much greater than $T$ regime of our analysis.

\begin{figure}[htb]

\centering

\begin{minipage}[tb]{0.45\textwidth}

\centering

\includegraphics[width=1\textwidth]{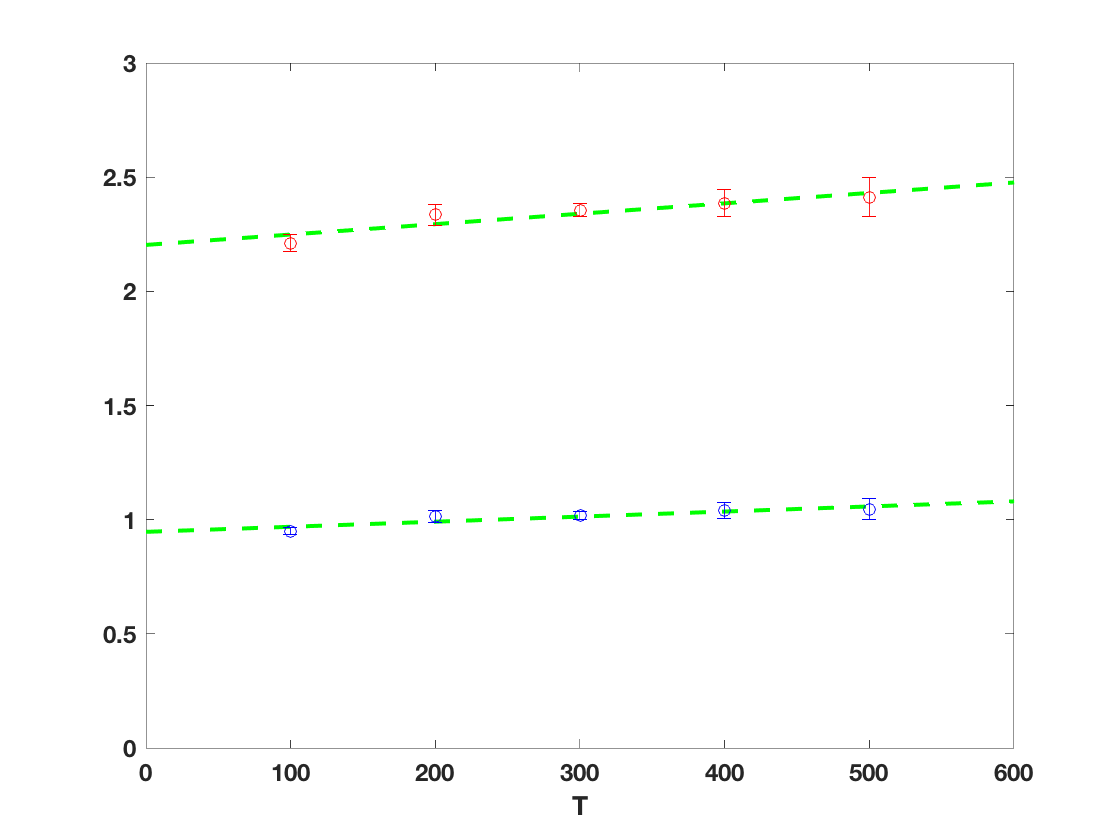}
\caption{\label{fig:three} $\text{Average Population}/n\sqrt{T}$ (blue) and $\text{Average Loss}/\sqrt{T}$ (red) for five  values of $T$, using the modified reasonable strategy.}
\end{minipage}
\hspace{0.08\textwidth}
\begin{minipage}[tb]{0.45\textwidth}
\centering
\includegraphics[width=1\textwidth]{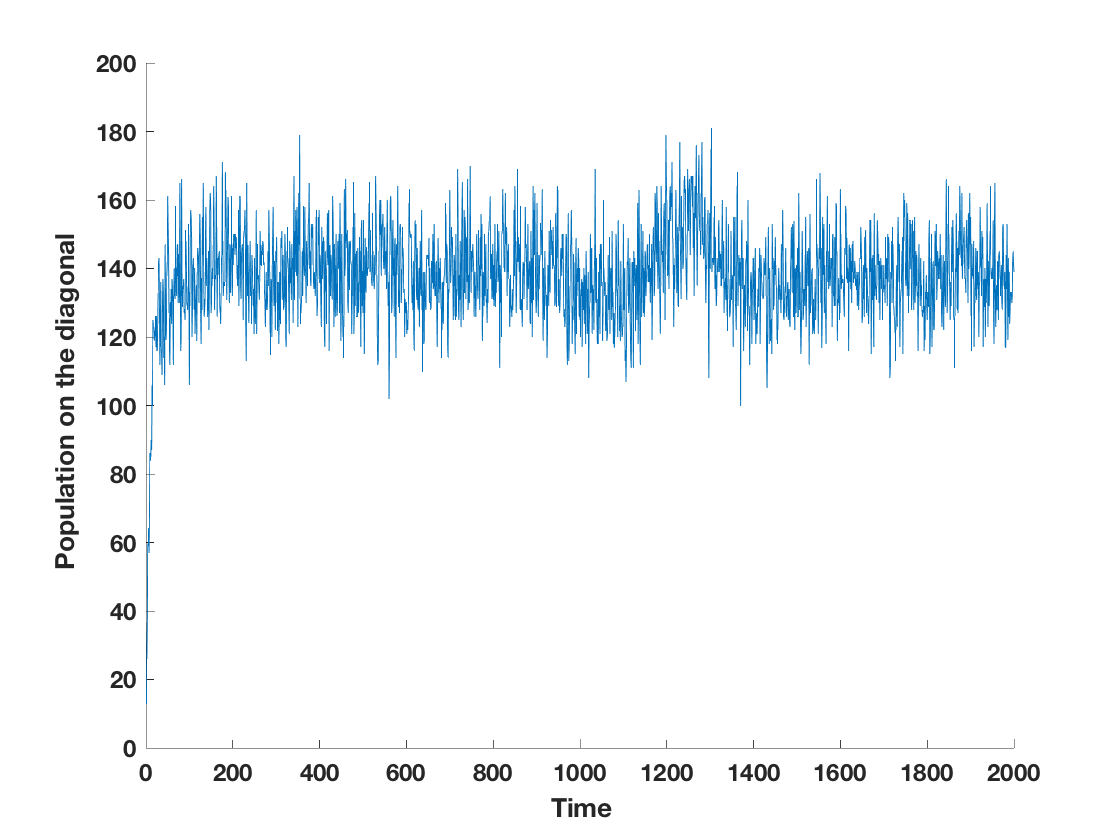}
\caption{\label{fig:four}The evolution of the population on a "typical diagonal" over time in the discrete setting (for $n=500$, $T=100$), using the modified reasonable strategy.}
\end{minipage}

\end{figure}

Finally, we examine the population on a "typical" diagonal\footnote{Here we consider the diagonal region with width $1$ that starts at value $3T/2$ at the top ($t=0$) boundary.} in the case that all agents are following the modified strip strategy (Figure~\ref{fig:four}). Notice that the range of oscillations for this value is large compared to the range of oscillations in the total population (Figure~\ref{fig:two}). Furthermore, these 
oscillations proceed at a much faster rate than
the changes in the overall population.
The results also indicate that while the system
remains within reasonable bounds, there is
substantial ongoing variation, particularly at a local level.

\hide{
\subsubsection{$n=500$, $T=100$, $2000$ iterations per run.}

Average size and loss (discrete case):

\begin{center}
 \begin{tabular}{||c c c||} 
 \hline
 Run number & Average population size & Average Loss\\ [0.5ex] 
 \hline\hline
 1 & 1544.1 & 10.07\\ 
 \hline
 2 & 1507.7 & 9.96\\
 \hline
 3 & 1558.7 & 10.13\\
 \hline
 4 & 1458.1 & 9.8\\
 \hline
 5 & 1456.4 & 9.82\\ 
 \hline
 6 & 1565.5 & 10.16\\
 \hline
 7 & 1487.6 & 9.91\\
 \hline
 8 & 1567 & 10.17\\
 \hline
 9 & 1499.5& 9.93\\
 \hline
 10 & 1508.3 & 9.99\\
 \hline
\end{tabular}
\end{center}

Average size (continuum): 1180.9

Average loss (continuum): 8.89
}

\hide{
\subsection{Simulation of the System with all Players Playing an Intermediate Strategy}
As per the reasonable strategy, every agent at each point $(v,t)$ accepts matches from agents in some in a region such they receive utility at least $vt(1-\frac{1}{T}+\frac{t}{T})$. The modified reasonable strategy is simpler in two ways. Firstly it considers only regions in the form of simple diagonal strips and secondly. Secondly the whole matching pool is partitioned into a few strips rather than each point $(v,t)$ having it's own unique region. 

We consider an intermediate strategy in which an agent at $(v,t)$ will match only with agents in a strip centered around the agent itself. Like the modified reasonable strategy, the regions are simple strips. However we have a different strip for each point. If the agent is at $(v,t)$ the strip will have width $T(\frac{1}{T}+\frac{t}{T})$.
\RJC{We don't do this varying widths in the reasonable strategy, so I am not sure it is a good idea to say this.}
\IA{Changed.}
\subsubsection{$n=500$, $T=100$, $2000$ iterations per run.}
Average size and loss (discrete case):
\begin{center}
 \begin{tabular}{||c c c||} 
 \hline
 Run number & Average population size & Average Loss\\ [0.5ex] 
 \hline\hline
 1 & 2454.4& 12.4\\ 
 \hline
 2 & 2492.3& 12.51\\
 \hline
 3 & 2525.4 & 12.6 \\
 \hline
 4 & 2504.9& 12.52\\
 \hline
 5 & 2532.9& 12.62\\ 
 \hline
 6 & 2442.7& 12.35 \\
 \hline
 7 & 2467.6 & 12.43\\
 \hline
 8 &  2464& 12.4\\
 \hline
 9 & 2560.1& 12.73\\
 \hline
 10 & 2478.6& 12.47\\
 \hline
\end{tabular}
\end{center}

Average Size (continuum):

Average Loss (continuum): 
}

\hide{
\subsubsection{$n=500$, $T=100$, $2000$ iterations per run.}

Average size and loss (discrete case):

\begin{center}
 \begin{tabular}{||c c c||} 
 \hline
 Run number & Average population size & Average Loss\\ [0.5ex] 
 \hline\hline
 1 & 5269.7 & 23.64\\ 
 \hline
 2 & 5170.7& 23.28\\
 \hline
 3 & 5238.7& 23.53\\
 \hline
 4 & 5269.2& 23.63\\
 \hline
 5 & 5230.1& 23.48\\ 
 \hline
 6 & 5245.1& 23.56\\
 \hline
 7 & 5272.6& 23.65\\
 \hline
 8 & 5228.2& 23.41\\
 \hline
 9 & 5232.9& 23.48\\
 \hline
 10 &5217.3& 23.47\\
 \hline
\end{tabular}
\end{center}

Average size (continuum):
5007.4

Average loss (continuum):
22.32  \RJC{Should be multiplied by $T$, labeled as $T$.
Even better rewrite as constant times $T\sqrt{T}$.}

We also consider various $n$ and $T$ to see that the scaling of the average total size and the average loss, even for these moderate values, is close to $n\sqrt{T}$ and $T\sqrt{T}$ respectively. This is what we prove for large $n$ and $T$ ($n$ much larger than $T$) in our analysis. 

\begin{center}
 \begin{tabular}{||c c c c||} 
 \hline
 n & T & Average population size & Average Loss\\ [0.5ex] 
 \hline\hline
 500 & 100& 5217.3  & 23.47\\ 
 \hline
 500 & 200& 7541.1 & 34.1\\ 
  \hline
 500 & 300& 9268.8& 42.03\\ 
  \hline
 500 & 400& 10523.6& 47.78\\
  \hline
 500 & 500& 12184& 55.4\\
  \hline
\end{tabular}
\end{center}

}

\hide{
\begin{center}
 \begin{tabular}{||c c c c||} 
 \hline
 n & T & Average population size & Average Loss\\ [0.5ex] 
 \hline\hline
 500 & 100& 5217.3  & 23.34\\ 
 \hline
 400 & 100& 4228.7 & 23.69\\ 
  \hline
 300 & 100& 3183.4& 23.75\\ 
  \hline
 200 & 100& 2192.7& 24.34\\
  \hline
 100 & 100& 1169.2& 25.61\\
  \hline
\end{tabular}
\end{center}
}
\section{Open Problems}
\label{sec:open-problems}

Two natural extensions of our model come to mind.
\begin{itemize}
    \item The values in our model are common to all agents, but in reality agents will have individual preferences. This could be captured with a model in which each agent $a$ has a value for agent $b$ given by $v_a + w_{a,b}$, where $v_a$ is a common public value while $w_{a,b}$ is an idiosyncratic private value of $a$ for $b$. The combining of public and private values has been studied in the literature on matchings in other settings~\cite{ashlagi2020clearing, lee2016incentive}.
    
    
    \item In our model, agents receive match proposals that are generated by choosing agents from the other side of the market uniformly at random. It would be interesting to consider a more sophisticated method of recommending matches, with recommended matches being localised in value and time around the agent.
\end{itemize}

Another intriguing direction concerns the stability of this system. We have shown that if the agents play the modified reasonable strategy then with high probability the strip sizes, the total size, and the imbalance between men and women in any strip, all remain within some range. But we conjecture that if any of these parameters were to have a large deviation which took it outside  its typical range, then with high probability it would soon return
to being within this range.


\printbibliography

@ARTICLE{Adachi03,
title = {A search model of two-sided matching under nontransferable utility},
author = {Adachi, Hiroyuki},
year = {2003},
journal = {Journal of Economic Theory},
volume = {113},
number = {2},
pages = {182-198},
url = {https://EconPapers.repec.org/RePEc:eee:jetheo:v:113:y:2003:i:2:p:182-198}
}

@article{AkbarpourLG20,
  author = {Akbarpour, Mohammad and   Li, Shengwu and Gharan, Shayan Oveis},
  title = {Thickness and Information in Dynamic Matching Markets},
  journal = {Journal of Political Economy},
  year = {2020},
  volume = {128},
  number ={3},
  doi = {10.1086/704761}
  }

@article{BaccaraLY20,
  author = {Baccara, Mariagiovanna and Lee, SangMok and Yariv, Leeat},
  title = {Optimal dynamic matching},
  journal = {Theoretical Economics},
  volume = {15},
  pages = {1221-–1278},
  year = {2020},
}

@article{Bearden06,
 author = {Bearden, J.N.},
 title = {A new secretary problem with rank-based selection and cardinal payoffs},
 journal = {J. of Math.l Psych.},
 volume = {50},
 pages = {58--9},
 year = {2006},
 doi ={10.1016/j.jmp.2005.11.003},
}

@article{BurdettC97,
    author = {Burdett, Ken and Coles, Melvyn G.},
    title = "{Marriage and Class*}",
    journal = {The Quarterly Journal of Economics},
    volume = {112},
    number = {1},
    pages = {141-168},
    year = {1997},
    month = {02},
    issn = {0033-5533},
    doi = {10.1162/003355397555154},
    url = {https://doi.org/10.1162/003355397555154},
    eprint = {https://academic.oup.com/qje/article-pdf/112/1/141/5291635/112-1-141.pdf},
}

@article{BurdettC99,
author = {Burdett, Kenneth and Coles, Melvyn G.},
title = {Long-Term Partnership Formation: Marriage and Employment},
journal = {The Economic Journal},
volume = {109},
number = {456},
pages = {307-334},
doi = {https://doi.org/10.1111/1468-0297.00435},
url = {https://onlinelibrary.wiley.com/doi/abs/10.1111/1468-0297.00435},
eprint = {https://onlinelibrary.wiley.com/doi/pdf/10.1111/1468-0297.00435},
year = {1999}
}

@ARTICLE{BurdettW98,
title = {Two-Sided Search with Nontransferable Utility},
author = {Burdett, Kenneth and Wright, Randall},
year = {1998},
journal = {Review of Economic Dynamics},
volume = {1},
number = {1},
pages = {220-245},
url = {https://EconPapers.repec.org/RePEc:red:issued:v:1:y:1998:i:1:p:220-245}
}

@article{ChadeES17,
Author = {Chade, Hector and Eeckhout, Jan and Smith, Lones},
Title = {Sorting through Search and Matching Models in Economics},
Journal = {Journal of Economic Literature},
Volume = {55},
Number = {2},
Year = {2017},
Month = {June},
Pages = {493-544},
DOI = {10.1257/jel.20150777},
URL = {https://www.aeaweb.org/articles?id=10.1257/jel.20150777}}

@article{Ferguson89,
  author = {Ferguson, Thomas S.},
  title = {Who Solved the Secretary Problem?},
  journal = {Statistical Science},
  volume = {4},
  number = {3},
  pages = {282--9},
  year = {1989},
   doi = {10.1214/ss/1177012493},
}

@inproceedings{KarpVV90,
author = {Karp, R. M. and Vazirani, U. V. and Vazirani, V. V.},
title = {An Optimal Algorithm for On-Line Bipartite Matching},
year = {1990},
isbn = {0897913612},
publisher = {Association for Computing Machinery},
address = {New York, NY, USA},
url = {https://doi.org/10.1145/100216.100262},
doi = {10.1145/100216.100262},
booktitle = {Proceedings of the Twenty-Second Annual ACM Symposium on Theory of Computing},
pages = {352–358},
numpages = {7},
location = {Baltimore, Maryland, USA},
series = {STOC '90}
}

@article {PanconesiSrinivasan,
    AUTHOR = {Panconesi, Alessandro and Srinivasan, Aravind},
     TITLE = {Randomized distributed edge coloring via an extension of the
              {C}hernoff-{H}oeffding bounds},
   JOURNAL = {SIAM J. Comput.},
  FJOURNAL = {SIAM Journal on Computing},
    VOLUME = {26},
      YEAR = {1997},
    NUMBER = {2},
     PAGES = {350--368},
      ISSN = {0097-5397},
   MRCLASS = {05C85 (60C05 60F10 68Q22 68Q25 68R10)},
  MRNUMBER = {1438520},
MRREVIEWER = {Richard C. Brewster},
       DOI = {10.1137/S0097539793250767},
       URL = {https://doi.org/10.1137/S0097539793250767},
}

@article{RogersonSW05,
  author = {Rogerson, Richard and Shimer, Robert and  Wright, Randall},
  title = {Search-Theoretic Models of the Labor
Market: A Survey},
journal = {Journal of Economic Literature},
  volume = {43},
  year = {2005},
  pages = {959-–988},
}

@article{mcnamara1990job,
  title={The job search problem as an employer-candidate game},
  author={McNamara, John M and Collins, EJ},
  journal={Journal of Applied Probability},
  pages={815--827},
  year={1990},
  publisher={JSTOR}
}

@article{smith2006marriage,
  title={The marriage model with search frictions},
  author={Smith, Lones},
  journal={Journal of political Economy},
  volume={114},
  number={6},
  pages={1124--1144},
  year={2006},
  publisher={The University of Chicago Press}
}

@article{shimer2000assortative,
  title={Assortative matching and search},
  author={Shimer, Robert and Smith, Lones},
  journal={Econometrica},
  volume={68},
  number={2},
  pages={343--369},
  year={2000},
  publisher={Wiley Online Library}
}

@article{bloch2000two,
  title={Two-sided search, marriages, and matchmakers},
  author={Bloch, Francis and Ryder, Harl},
  journal={International Economic Review},
  volume={41},
  number={1},
  pages={93--116},
  year={2000},
  publisher={Wiley Online Library}
}

@article{eeckhout1999bilateral,
  title={Bilateral search and vertical heterogeneity},
  author={Eeckhout, Jan},
  journal={International Economic Review},
  volume={40},
  number={4},
  pages={869--887},
  year={1999},
  publisher={Wiley Online Library}
}

@article{lauermann2014stable,
  title={Stable marriages and search frictions},
  author={Lauermann, Stephan and N{\"o}ldeke, Georg},
  journal={Journal of Economic Theory},
  volume={151},
  pages={163--195},
  year={2014},
  publisher={Elsevier}
}

@article{damiano2005unravelling,
  title={Unravelling of dynamic sorting},
  author={Damiano, Ettore and Li, Hao and Suen, Wing},
  journal={The Review of Economic Studies},
  volume={72},
  number={4},
  pages={1057--1076},
  year={2005},
  publisher={Wiley-Blackwell}
}

@article{blumlearning,
  title={Learning, Regret Minimization, and Equilibria},
  author={Blum, Avrim and Mansour, Yishay},
  journal={Algorithmic Game Theory},
  year={2007}
}

@article{ashlagi2020clearing,
  title={Clearing matching markets efficiently: informative signals and match recommendations},
  author={Ashlagi, Itai and Braverman, Mark and Kanoria, Yash and Shi, Peng},
  journal={Management Science},
  volume={66},
  number={5},
  pages={2163--2193},
  year={2020},
  publisher={INFORMS}
}

@article{lee2016incentive,
  title={Incentive compatibility of large centralized matching markets},
  author={Lee, SangMok},
  journal={The Review of Economic Studies},
  volume={84},
  number={1},
  pages={444--463},
  year={2016},
  publisher={Review of Economic Studies Ltd}
}
\newpage
\appendix

\section{Deferred Proofs from Section \ref{sec::prelim}}
\label{appn::prelim}
\begin{proof} (Of Lemma~\ref{lem::negative_dependence_two_sex})~
Let $N=\max\{N_1,N_2\}.$
Consider any subset $S\subseteq[n]$ where $|S|=k$. W.l.o.g.\ let $S=[k]$. Then,
\begin{align*}
E\Big[\prod_{i\in S} X_i\Big] &= \chance\Big[{\prod_{i\in S} X_i=1}\Big]=\pr{X_1=1,X_2=1,\ldots, X_k=1}\Big]\\
&=\pr{X_1=1}\cdot \pr{X_2=1|X_1=1},\ldots ,\pr{X_k=1|X_1=1,X_2=1,\ldots, X_{k-1}=1}\\
&=\frac{r}{N}\cdot\frac{r-1}{N-1},\ldots,\frac{r-k+1}{N-k+1}.
\end{align*}
Hence,
$$E\Big[\prod_{i\in S} X_i\Big]\leq \Big(\frac{r}{N}\Big)^k,$$
while
$$\prod_{i\in S} E[X_i]=\Big(\frac{r}{N}\Big)^k.$$
Similarly,
\begin{align*}
E\Big[\prod_{i\in S} (1-X_i)\Big] &= \chance\Big[{\prod_{i\in S}(1-X_i)=1}\Big]=\pr{X_1=0,X_2=0,\ldots, X_k=0}\Big]\\
&=\pr{X_1=0}\cdot \pr{X_2=0|X_1=0},\ldots ,\pr{X_k=0|X_1=0,X_2=0,\ldots, X_{k-1}=0}\\
&=\frac{N-r}{N}\cdot\frac{N-r-1}{N-1},\ldots,\frac{N-r-k+1}{N-k+1}.
\end{align*}
Hence,
$$E\Big[\prod_{i\in S}(1-X_i)\Big]\leq \Big(\frac{N-r}{N}\Big)^k,$$
while
$$\prod_{i\in S} E[1-X_i]=\Big(\frac{N-r}{N}\Big)^k.$$
Thus the set $\{X_i\}$ is negative cylinder dependent.
By \cite[Theorem 3.4]{PanconesiSrinivasan} with $\lambda=1$, Chernoff bounds for sums of independent random variables apply to the sums of negative cylinder dependent random variables as well.
This concludes the proof.
\end{proof}

\begin{proof} (Of Lemma~\ref{lem::chernoff_bound})~

First we prove that 
\begin{align*}
    \chance\left[\sum X_i \geq (1 + \delta) \overline{\mu} \right] \leq e^{- \frac{\delta^2 \overline{\mu}}{3}}.
\end{align*}
Let $\overline{\mu}=(1+\theta)\mu$. By Lemma \ref{lem::negative_dependence_two_sex},
\begin{equation}
\label{eqn::chernoff_with_upper_bounded_mean}
\begin{aligned}
    \chance\left[\sum X_i \geq (1 + \gamma) \mu \right] \leq e^{- \frac{\gamma^2 \mu}{3}}.
\end{aligned}
\end{equation}
Set $(1+\gamma)= (1+\theta)(1+\delta)$. Then, from equation \eqref{eqn::chernoff_with_upper_bounded_mean} we obtain,

\begin{equation*}
\begin{aligned}
    \chance\left[\sum X_i \geq (1 + \delta) \overline{\mu} \right] &\leq \exp\bigg[- \frac{(\theta+\delta+\theta\cdot\delta)^2 \mu}{3}\bigg]\leq \exp\bigg[- \frac{(\delta+\theta\cdot\delta)^2 \mu}{3}\bigg]\\
    &\leq \exp\bigg[- \frac{\delta^2(1+\theta)^2 \mu}{3}\bigg]\leq \exp\bigg[- \frac{\delta^2(1+\theta) \mu}{3}\bigg]\leq \exp\bigg[- \frac{\delta^2 \overline{\mu}}{3}\bigg]
\end{aligned}
\end{equation*}
which proves the claim.
We will now prove that
\begin{align*}
    \chance\left[\sum X_i \leq \mu - \delta \overline{\mu} \right] \leq e^{- \frac{\delta^2 \mu}{2}}
\end{align*}

Let $\theta=\frac{\mu}{\overline{\mu}}$. 
By Lemma \ref{lem::negative_dependence_two_sex},
\begin{align*}
    \chance\left[\sum X_i \leq \mu - \delta \mu \right] \leq e^{- \frac{\delta^2 \mu}{2}}
\end{align*}

Let $\gamma=\frac{\overline{\mu}}{{\mu}}$. Note that $\gamma\geq1$

\begin{equation*}
\begin{aligned}
    \chance\left[\sum X_i \leq \mu-\delta\overline{\mu} \right]&= \chance\left[\sum X_i \leq \mu-\gamma\delta\mu \right] \leq \exp\bigg[- \frac{(\gamma\delta)^2 \mu}{2}\bigg]\leq \exp\bigg[- \frac{\delta^2\gamma\overline{\mu}}{2}\bigg]\leq \exp\bigg[-\frac{\delta^2\overline{\mu}}{2}\bigg]\\
\end{aligned}
\end{equation*}
which completes the proof.

\end{proof}

\section{Deferred Calculation from Section 5}
\label{appn::loss_prob_calculation}
\begin{proof} (Final calculation for Theorem~\ref{thm::lb-loss-two_sex})~
\begin{align*}
\chance[{\mathcal E}] \cdot\left(1-\tau T  e^{-\frac{\delta^2 nw^2}{6T^2}}-\tau e^{-\frac{\epsilon^2 n}{4}}\right)
& \ge \Big(1-e^{-\frac{\delta^2nw}{6T}}\Big)^{\tau\frac{T}{w}}\cdot\Big(1-\tau T e^{-\frac{\delta^2nw^2}{6T^2}}-\tau e^{-\frac{\epsilon^2 n}{4}}\Big)\\
&\geq \Big[1-\Big(\frac{1}{3n^cT\tau}\Big)^{4\sqrt{T}}\Big]^{4\tau\sqrt{T}}\cdot\Big(1-\frac{1}{3n^c}-\frac{1}{3n^{c}}\Big)\\
&\geq \Big[1 - 4\tau\sqrt{T}\Big(\frac{1}{3n^cT\tau}\Big)^{2} \Big]  \cdot\Big(1-\frac{1}{3n^c}-\frac{1}{3n^{c}}\Big) \\
&\geq \Big[1 - \Big(\frac{1}{3n^c}\Big)^{4 \sqrt{T}}\Big]\cdot\Big(1-\frac{2}{3n^c}\Big)
\geq \Big( 1  - \frac{1}{n^c}\Big).
\end{align*}
\end{proof}
\section{Deferred Proofs from Section \ref{sec::upper_bound}}

\begin{proof} (Of Lemma~\ref{lem::ind-bound}.)~
We complete the sketch proof by bounding the failure probability. Per time step,
Theorems~\ref{thm::lb-size}, \ref{thm::total_size_upper_bound}, \ref{type1_upper}, \ref{type2_upper}
all have failure probability of at most $1/n^{2c+1}$, Theorem~\ref{thm::imbalance_bound} has
failure probability at most $2/n^{2c+1}$,
and Theorem~\ref{thm::initialization}, which is
applied once,
has failure probability at most $1/n^{c+1}$.
Theorem~\ref{thm::last_strip_size_upper_bound}
does not introduce any additional possibility of
failure.
Multiplying by the $n^c$ possible time steps,
gives a  total failure probability of at most
$7/n^{c+1} < 1/n^c$.
\end{proof}

\subsection{Upper Bound on Loss due to a Match}
\label{appn::loss_from_match}
\begin{proof} (of Lemma~\ref{lem::loss_in_strip}).
Consider an agent (Agent $1$) at value $v$ and time $t$. Suppose they match with another agent (Agent $2$) who is present in the same strip. 
The worst location for Agent 2 is to be on the low value strip boundary, and on this boundary to be at 
one of the endpoints. 

\noindent
\emph{Type $1$ strip}.
If Agent $2$ is at the top endpoint, 
Agent $1$ obtains utility $w\cdot T-t$, where $w$ is the value at the top endpoint.
We can see that $w\ge v- \sqrt{T} -2t$ (move from $v$ horizontally to the lower boundary, a distance of at most $\sqrt{T}$ and then move up to the $t=0$ location, which subtracts $2t$ from the value).
Thus the utility Agent $1$ receives is at least
$(T-t)(v- \sqrt{T} -2t)$. Therefore the loss is at most $t(v+2T+\sqrt{T}) \le 4tT + 2T\sqrt{T}$.

If Agent $2$ is at the lower endpoint of a Type 1 strip, we argue as follows.
The $v\cdot T-t$ product is equal at the two endpoints of a boundary,
and therefore the loss is greatest at the top endpoint, 
for the utility garnered by Agent $1$ would be $w\cdot (T-t)$ and not $w\cdot T$,
whereas at the bottom endpoint the garnered utility is $2T\cdot w/2 =wT$.

\smallskip
\noindent
\emph{Type $2$ Strip}.
We define the following $t$ values:
$a$ is the  value for the left end of the top boundary of the strip,
$b$ the value for the left end of the bottom boundary.
and $c$ the value for the right end of the bottom boundary,
Then $a\le (2t+T-v)/2$, $b \leq 2a+\sqrt{T}$, and $c = T/2 + b \leq T/2 + 2a + \sqrt{T}$.

If Agent $2$ is at the lower endpoint, then
Agent 1 would receive utility $2T\cdot (T-c) \ge 2T(v -T/2 -2t -\sqrt{T})$.
Thus the loss is at most $T(T-v) + 4tT + 2T\sqrt{T}
 \le 4tT + 2T\sqrt{T}$.

If Agent $2$ is at the top endpoint and Agent 1 is older than Agent 2,
then Agent $1$ receives utility $T(T-t)$. 
As we are in a Type 2 strip, $(v-T)\le 2t$ or $v\le T+2t$. 
So Agent $1$ incurs a loss of at most $vT-T(T-t) \le (T+2t)T - T(T-t) \le 3tT$.

While if Agent $1$ is no older than Agent $2$,
then Agent $1$ receives utility $T(T-b)\ge T(v - 2t - \sqrt{T})$. 
Thus the loss is at most $vT - T(v - 2t - \sqrt{T}) = 2Tt + T\sqrt{T}$.
\end{proof}

\subsection{Lower Bound on the Total Population}
\label{appn::lower_bound_on size}
\begin{proof} (of Theorem~\ref{thm::lb-size})
The agents enter with one of $T$ values chosen uniformly at random and are equally likely to be men or women. Hence, for all $n^c$ time steps, for each value $v$,
$$\chance\Big[\text{At most $\frac{n(1+\epsilon)}{2T}$ men enter with value $v$}\Big]\geq 1- n^c Te^{-\frac{\epsilon^2n}{6T}}.$$
Call this event $\mathcal E$. Henceforth we condition on $\mathcal E$.

Let's consider those agents that enter at times  in the range $[\tau-\sqrt{T}+1,\tau]$ for some $\tau \le n^c$. We want to lower bound the number of these agents who are present in the pool for the match at time $\tau$.

In fact, henceforth, We will only consider men with values in the range $[T+\sqrt{T},2T)$. Among these men, consider those who have been in the pool for $t$ time, where $0\le t < \sqrt{T}$.
Let $p_i^t$ be the probability that during their  $t$th time step, the men in strip $i$ are offered a match in their own strip. Even if all these men were still present in the matching pool,
\begin{align*}
\chance\Big[\text{\# of these men matched in strip $i$ at age $t$ } \le \frac{n(1+\delta)(1+\epsilon)\cdot p_i\cdot w_i}{2T} \Big]\geq 1-e^{-\frac{\delta^2np_iw_i}{6T}},
\end{align*}
where $w_i$ is the horizontal width of strip $i$ occupied by these men when aged $t$.
For every Type 1 strip, $w_i \le \sqrt{T}$.
For the one Type 2 strip, since all values are
at least $T+\sqrt{T}$, 
for ages up to $\sqrt{T}$, $w_i \le \sqrt{T}$.  
By applying $\overline{\mu} = \frac{n(1 + \epsilon) \max\{p_i, \frac{1}{T}\} \sqrt{T}}{2 T}$ in Lemma~\ref{lem::chernoff_bound}, it follows that:
\begin{align*}
\chance\Big[\text{\# of these men matched in strip $i$ at age $t$ } \le \frac{n(1+\delta)(1+\epsilon)\cdot \max\{p_i, \frac{1}{T}\}}{2\sqrt{T}} \Big]&\geq 1-e^{-\frac{\delta^2n \max\{p_i, \frac{1}{T}\} \sqrt{T}}{6T}} \\
&\geq 1-e^{-\frac{\delta^2n}{6T^{1.5}}},
\end{align*}

The sum of the match probabilities---the $p_i$s--- is at most $1$.  Notice that at any fixed time we only need to consider $\sqrt{T}$ strips, because at any time step, the men we are considering will occupy only $\sqrt{T}$ many strips. This implies $\sum \max\{p_i, \frac{1}{T}\} \leq 1 + \frac{1}{\sqrt{T}}$. Therefore, 

$$\chance\Big[\begin{array}{l}\text{\# of these men being matched} \\ \text{over all the strips at age $t$}\end{array} \le \frac{ (1+\frac{1}{\sqrt{T}})n(1+\delta)(1+\epsilon)}{2\sqrt{T}}\Big]\geq 1-\sqrt{T}\cdot e^{-\frac{\delta^2n}{6T^{1.5}}}.$$

Hence, we can bound the probability of the number of men who entered at time $\tau -\Delta+1$ and left by time  $\tau$, for any $\Delta \le \sqrt{T}$, as follows:
$$\chance\Big[\begin{array}{l}\text{\# men being matched} \\ \text{in their first $\Delta$ steps}\end{array} \le \frac{ (1+\frac{1}{\sqrt{T}})n\Delta(1+\delta)(1+\epsilon)}{2\sqrt{T}}\Big]\geq 1- \Delta T\cdot e^{-\frac{\delta^2n}{6T^{1.5}}}.$$

Consequently, we can bound the probability for the number of men that enter in the time interval $[\tau -\sqrt{T}+1,\tau-1]$ and are matched no later than time $\tau-1$ as follows, where
we sum over all $\tau \le n^c$:
\begin{equation*}
    \begin{aligned}
    &\chance\bigg[\begin{array}{l}\text{\# men who entered and were} \\  \text{matched in a $\sqrt{T}-1$ time window}\end{array} \le \frac{(1+\frac{1}{\sqrt{T}})n(1+\delta)(1+\epsilon)(\sqrt{T}-1)}{4} \bigg] \geq 1- \frac 12 n^c T^{1.5} e^{-\frac{\delta^2n}{6T^{1.5}}}.
    \end{aligned}
\end{equation*}

We set $\delta=\Big[\frac{6T^{1.5}}{n} \ln \big(10n^{2c+1}n^c T^{1.5}\big)\Big]^{1/2}$ 
and $\epsilon=\Big[\frac{6T}{n}\ln\big(20n^{2c+1}n^c T\big)\Big]^{1/2}$. By constraint~\eqref{eqn::constraints}, $c\geq 1$, $400 \leq T\leq n$, and $n\geq 96T^2(3c+3)\ln n$, 
therefore $\delta\leq 1/4$ and $\epsilon\leq  {1}/{64}$.  
This yields the bound:
\begin{align*}
\chance\Big[\begin{array}{l}\text{\# men who entered in a $\sqrt{T}-1$} \\ \text{ window being matched }\end{array} \ge \frac{\frac{65}{64} \cdot \frac 54 n\sqrt{T}} {4} \Big]
\geq 1-\frac{1}{20n^{2c+1}}.
\end{align*}

Since we have been conditioning on $\mathcal E$, this bound holds with probability at least $1-\frac{1}{10n^{2c+1}}$.
The same bound applies to the women.

Recalling that we excluded the men with values less than $T+\sqrt{T}$,
this yields the following lower bound on the total population size, throughout this $n^c$ time period:
$$n\sqrt{T} - \frac{n(1 + \epsilon)}{2\sqrt{T}} - 0.635n\sqrt{T}\geq \frac 13 n\sqrt{T},$$
with probability at least $1-\frac{1}{5n^{2c+1}}.$
\end{proof}

\subsection{Upper Bound on The Total Population}
\label{appn::total_upper}

\begin{proof} (Of Theorem~\ref{thm::total_size_upper_bound}.)~
Let $P(t)$ be the total population at the start of time step $t$. Let $N$ be the total number of strips. By Constraint~\ref{eqn::constraints}, $T\geq 676$, so $N\leq \sqrt{T}+\log_2 \sqrt{T} + 1 \leq 5\sqrt{T}/4.$

If $P(t)\le \frac{3}{2}nN$, then $P(t+1) \le \frac{3}{2}nN+n$.
So we will only consider the case that $P(t)>\frac{3}{2}nN$.
In this case, the average strip population at the start of step $t$ is more than $\frac{3}{2}n$. 

Next, we upper bound the number of men in the population; the same bound applies to the number of women. 
\hide{\rjc{Using the bound on the strip imbalance from $H(t)$, we can see it is at most $[P(t) + nN/25\sqrt{T}]/2 =P(t)/2 + nN/50\sqrt{T}$.
WLOG we will assume that there are at least as many men as women in the overall population.
}}

\hide{Let
$$L\coloneqq\Big\{i \,|\, s_i \geq\frac{n}{\sqrt{T}}\Big\}.$$}

By $H(t)$, clause~\ref{itm::pop-imb},
the excess of men over women in each strip is
at most $n/25\sqrt{T}$ except for the last Type $2$ strip.
So the excess over all these $N - 1$ strips is at most $n(N-1)/25\sqrt{T}$. For the last Type $2$ strip, the population is less than $60n / \sqrt{T}$ which is smaller than $40 P(t) / 676$ as $T \geq 676$.
Consequently, there are at most $P(t)/2 + nN/50\sqrt{T}  + 20 P(t) / 676 \le 11P(t)/20$ 
men in the total population.

The expected number of matches in strip $i$, $\mu_i$, is given by 
$$E[\mu_i] =\frac{(\text{ number of women in strip }i) \times (\text{ number of men in strip } i)}{\text{number of men in the whole population}}.$$ Let $s_i$ denote the population of the $i$-th  strip.
The denominator is at most $\tfrac{11}{20} P(t)$ and at least $\tfrac{1}{2} P(t)$. The numerator is minimized when the number of women and men in the strip are as far apart as possible. So, for the strips other than the last Type $2$ strip, the numerator is at least 
$(s_i/2 + n/50\sqrt{T})(s_i/2 - n/50\sqrt{T}) =
s_i^2/4 - n^2/2500T^2$.
The numerator is maximized when the numbers of women and men are equal. Therefore,
\begin{align}
\label{eqn::bounds-on-Pt}
   \frac{s_i^2/4 - n^2/2500T^2}{\frac{11}{20}  P(t)}\leq E[\mu_i] \leq  \frac{s_i^2}{2 P(t)}. 
\end{align}

Consider an indicator random variable $X_i$ for each man in this strip, which is $1$ if that man gets matched. By Lemma \ref{lem::negative_dependence_two_sex} we can use a Chernoff bound to obtain:
\begin{align}\chance\Big[\text{number of matches in strip } i \leq \mu_i(1-\epsilon)\Big]\leq  e^{-\epsilon^2\mu_i/2}. \label{ineq::chernoff_upper_total}
\end{align}

For $E[\mu_i]  \ge \alpha n / \sqrt{T}$, 
\begin{align*}
    \chance\big[\text{number of matches in strip } i \leq E[\mu_i] (1-\epsilon)\big]\leq  e^{-\epsilon^2E[\mu_i] /2} \leq e^{-\epsilon^2 \alpha n / (2\sqrt{T})}. 
\end{align*}
Let $\epsilon = \sqrt{\frac{2\sqrt{T}}{\alpha n}\ln (N n^{2c + 1}) }$. Then $\chance\big[\text{number of matches in strip } i \leq \mu_i(1-\epsilon)\big] \leq \frac{1}{N n^{2c + 1}}$

For $E[\mu_i]  < \alpha n / \sqrt{T}$, let $\epsilon = \frac{\theta}{E[\mu_i] }$, then \eqref{ineq::chernoff_upper_total} becomes 
\begin{align*}
    \chance\big[\text{number of matches in strip } i \leq E[\mu_i]  - \theta \big]\leq  e^{-\theta^2/(2 E[\mu_i] )} \leq e^{-\theta^2 \sqrt{T}/(2 \cdot \alpha n )}.
\end{align*}
Let $\theta = \sqrt{\frac{2 \alpha n}{\sqrt{T}} \ln (N n^{2c + 1})} = \alpha \epsilon \frac{n}{\sqrt{T}}$, then $\chance\big[\text{number of matches in strip } i \leq \mu_i - \theta \big]\leq  \frac{1}{N n^{2c + 1}}$ 

Let $\texttt{NL}$ be the set of all strips except for the last Type $2$ strip. Then, with probability at least $1 - \frac{1}{n^{2c+1}}$, the number of matches is larger than $(1 - \epsilon) \sum_{i \in \texttt{NL}}  E[\mu_i]  - N \theta$. In addition, by \eqref{eqn::bounds-on-Pt}, $\sum_{i \in \texttt{NL}}  E[\mu_i] $ is lower bounded by:
\begin{align*}
\sum_{i \in \texttt{NL}} \frac{s_i^2/4 - n^2/2500T^2}{\tfrac {11}{20}  P(t)} &\geq  \sum_{i \in \texttt{NL}} \frac{\frac{1}{4} \left(\frac{9 P(t)}{10N}\right)^2 - n^2 /2500T^2}{\tfrac {11}{20}  P(t)} \\
    &\geq \left(\frac{81}{220} \frac{P(t)}{N} - \frac{n^2 N}{1375 T^2 P(t)}\right)  \\
    & \geq  \left(\frac{243}{440} n - \frac{2n}{4125 T^2}\right)  \\
    & \geq  \frac{11}{20} n. 
\end{align*}
The first inequality follows as $\sum_{i \in \texttt{NL}} s_i \ge \frac{9}{10}P(t)$, and the next to last inequality follows as $P(t) \geq \frac{3}{2}n N$. Let $\alpha = 0.4$. Since $\epsilon \leq \frac{1}{22}$, as $n \geq 2420 \sqrt{T} (2c + 2) \ln n$, and $N \theta \leq \frac{5}{4} \alpha \epsilon n$, as $N \leq \frac{5}{4}\sqrt{T}$ and $\theta = \alpha \epsilon n / \sqrt{T}$,   
\begin{align*}
    (1 - \epsilon) \sum_{i \in \texttt{NL}} E[\mu_i]  - N \theta \geq (1 - \epsilon) \frac{11}{20} n - \frac{5}{4} \alpha \epsilon n \geq \frac{n}{2}.
\end{align*}
This means the total number of people matched in the market is greater than $n$, which is the number of people entering, which completes the proof.

\hide{
The upper bound on $\sum_{i\in L}(1-\epsilon)\mu_i$ is minimized when every $s_i$ is equal, i.e.\ $s_i=\frac{19P(t)}{20N_l}$, where $N_l$ is the number of strips in $L$.

Recall that for $i\in L$, $s_i \ge n/\sqrt{T}$; also, $P(t) \le \frac{3}{2}nN+n$.  
Thus,
$$\exp[-(\epsilon^2 \mu_i)/2)] \le \exp\left[-\epsilon^2 \frac{2}{5} n^2 \frac{1}{\frac{3}{2}nN+n}\right]
\le \exp\left[-\frac{\frac {4}{15}\epsilon^2\frac nN} {1 + \frac {2}{3N}}\right].$$

So, by a union bound,
$$\pr{\text{ total number of matches } \ge N_l\times \frac{2P(t)(1-\epsilon)}{5N_l^2}=\frac{2 P(t)(1-\epsilon)}{5 N_l}} \ge 1 - N_l \exp\left[-\frac{\frac {4}{15}\epsilon^2\frac nN} {1 + \frac {2}{3N}}\right].$$
The worst case both for the probability and for the number of matches in the above bound occurs when $N_l=N$. Thus,

$$\pr{\text{ total number of matches } \ge \frac{2P(t) (1-\epsilon)}{5N}} \ge 1 - N \cdot \exp\left[-\frac{\frac {4}{15}\epsilon^2\frac nN} {1 + \frac {2}{3N}}\right].$$

Recall that  by assumption, $\frac{3}{2}nN < P(t) \le \frac{3}{2}nN+n$.
Let $\delta =\frac{3}{2}nN+n-P(t)$; so $0\le \delta < n$. 

We now upper bound $P(t+1)$.
$$P(t+1)\le \frac{3}{2}nN+n-\delta+n- 2 \cdot(\text{number of matches in step } t+1).$$

Thus,
\begin{align*}
P(t+1) &\le \frac{3}{2}nN+n-\delta+n- \frac{4(\frac{3}{2}nN+n-\delta)(1-\epsilon)}{5N}\\
& \le (\frac{3}{2}nN+n)+n-\frac{6}{5}n(1-\epsilon)-\delta+\frac{4\delta(1-\epsilon)}{5N}\\
& \le (\frac{3}{2}nN+n)-n\left(\frac{1}{5}-\frac{6}{5}\epsilon\right)\\
& \le \frac{3}{2}nN+n,
\end{align*}
if $\epsilon \le \tfrac 16$.

\smallskip

Thus,

$$\pr{P(t+1) \le \frac{3}{2}nN+n} \ge 1 - N \cdot \exp\left[-\frac{\frac {4}{15}\epsilon^2\frac nN} {1 + \frac {2}{3N}}\right].$$

We set $\epsilon=\Big[\ln(n^{2c+1}N)\cdot \frac{15N}{4n}\cdot \left(1 + \frac {2}{3N}\right)\Big]^{1/2}$.
Thus it suffices to have $\Big[\ln(n^{2c+1} N)\cdot \frac{15N}{4n}\cdot \left(1 + \frac {2}{3N}\right)\Big]^{1/2} \le \frac{1}{6}$.
As $N \le \frac{3}{2}\sqrt{T}$, it suffices that $\Big[\ln[n^{2c+1}(\frac{3}{2}\sqrt{T})]\cdot \frac{3\sqrt{T}}{2n}\cdot \left(1 + \frac {2}{3N}\right)\Big]^{1/2}\le \frac{1}{6}$. With this choice of $\epsilon$,

$$\pr{P(t+1) \le \frac{3}{2}nN+n} \ge 1-N\cdot \frac{1}{n^{2c+1} N}.$$

So the upper bound is maintained with probability at least $1-\frac{1}{n^{2c+1}}$, for any constant $c\ge 1$, as long as $T\ge 400$ and 
$\Big[\ln[n^{2c+1}(\frac{3}{2}\sqrt{T})]\cdot \frac{3\sqrt{T}}{2n}\cdot \left(1 + \frac {2}{3N}\right)\Big]^{1/2}\le \frac{1}{6}.$

Next, we confirm that $4n^2/9\ge T\ge 400$ and $n/\ln n\ge 56(2c+2)\sqrt{T}$ suffice. Note that $N\ge \sqrt{T}$, and as $N$ is integral, therefore $N\ge 20$.
Thus,

\begin{align*}
\ln[n^{2c+1}(\frac{3}{2}\sqrt{T})]\cdot \frac{3\sqrt{T}}{2n}\cdot \left(1 + \frac {2}{3N}\right)
& \leq (2c+2)\ln n\frac{1.55\sqrt{T}}{n} \le\frac{1.55}{56}\le \frac {1}{36},
\end{align*}
as required.
}
\end{proof}

\subsection{Upper Bound on the Size of a Type 1 Strip.}
\label{appn::type1_upper}

\begin{proof} (Of Theorem~\ref{type1_upper})~
Consider any Type $1$ strip $s$. For any two points $(v,t_1)$ and $(v,t_2)$ in $s$ which have the same value $v$, we have $|t_2-t_1|\leq \frac{\sqrt{T}}{2}$.
Let $s'$ be the strip immediately to the right of $s$.

We are going to lower bound the number of matches in time step $t$. Let's consider the agents who will be in strip $s$ at time $t+1$. They will all enter $s$ during a length $\sqrt{T}/2$ time interval ending at time $t+1$. 

Let $P_{t-\sqrt{T}+1}$ be the agents in strip $s'$ at time $t-\sqrt{T}$.
By the inductive hypothesis, applied to $s'$ at time $t-\sqrt{T}$, we know that $|P_{t-\sqrt{T}+1}| \le dn$.
We are going to track the subset of these agents who remain
in the system after each step of matches, for the next $\sqrt{T}$ steps, along with the new agents who join the diagonals used
by this subset of agents.
Define $S_{t-\sqrt{T}+i}$ to be the rightmost $\sqrt{T}+1 -i$ diagonals in $s'$ plus the leftmost $i-1$ diagonals in $s$,
for $1\le i \le \sqrt{T}+1$.
Let $P_{t-\sqrt{T}+i}$ be the population occupying $S_{t-\sqrt{T}+i}$
at the start of step $t-\sqrt{T}+i$.
Then $P_{t-\sqrt{T}+i+1}$ is obtained from $P_{t-\sqrt{T}+i}$ by removed matched agents and then adding the new agents
for the diagonals in $S_{t-\sqrt{T}+i+1}$.
Our analysis will show that, with high probability,
for each of these $\sqrt T$ steps,
the number of matches is at least the number of new agents.
This implies the upper bound on the strip population
continues to hold.

By means of a Chernoff bound, we observe that the number of new agents per step can be bounded with high probability as follows:
\begin{equation}
\label{eqn::type1_entering}
    \begin{aligned}
    \pr{\text{\# new agents } \leq \frac{n}{\sqrt{T}}(1+\delta)}  \ge 1-e^{-\frac{n\delta^2}{3\sqrt{T}}}.
    \end{aligned}
\end{equation}
Let $\delta = \sqrt{\frac{3\sqrt{T}}{n} \ln (N n^{2c + 1})}$.
As $ n\ge 60 T (2c+2) \ln n$, $\delta \leq \frac{1}{20}$,
which
yields 
\begin{align}
\label{eqn::type-one-new-agent-bound}
\chance\Big[\text{\# new agents} \leq \frac{41n}{40\sqrt{T}} =1.025\frac{n}{\sqrt{T}}\Big]  \ge 1-\frac{1}{N n^{2c + 1}}.
\end{align}

By \eqref{eqn::upper_and_lower_TotBound},
the maximum of the number of men and number of women in the market is at most $n\sqrt{T}$.
\hide{
Now, we will bound the number of agents matched in a single step. We assume that at the start of the round, $|P| \leq d n$. 
We need to compute lower bounds on the match rates.
For this, we need the upper bound on the total population, from Theorem~\ref{thm::total_size_upper_bound},
of $(3/2)nN+n\le (15/8)n\sqrt{T} +n\le (751/400)n\sqrt{T}$, as by
Constraint~\ref{eqn::constraints}, $n\ge T\ge 400$. Therefore, the number of men and the number of women in the system is at most $\frac{1}{2} ((751/400)n\sqrt{T} + (288 / T) (n \sqrt{T}) + n N / (25 \sqrt{T})) \le 13\sqrt{T}/10$.
The second term is from the last Type $2$ strip. $T \geq 400$ and $N \leq \frac{5}{4} \sqrt{T}$ are used in the last inequality.
}

Let $P_{t'',s}$ and $P_{t'',s'}$ be the portions of population $P_{t''}$ at time $t''$ in strips $s$ and $s'$, resp., for $t+1-\sqrt{T} \le t''\le t$.

By Lemma~\ref{lem::upper_strip_tech}, the matches remove at least the following number of people from $P_{t'',s}$:
\begin{align*}
    \frac{(|P_{t'',s}|^2)/2  -  (n / 25 \sqrt{T})^2 / 2}  n \sqrt{T} = \frac{ |P_{t'',s}|^2  -  n^2 / 625 T}{2  n \sqrt{T}} 
\end{align*}

A similar bound applies to the matches involving
$P_{t'',s'}$.
To minimize the terms $ |P_{t'',s'}|^2/2n\sqrt{T}$ for $P_{t'',s}$ and $P_{t'',s'}$,
we should make them equal.
Thus the expected number of matches of population $P_{t''}$ is at least

\begin{align}
\label{eqn::type1-matches-min}
\frac{|P_{t''}|^2} {4n\sqrt{T}} - \frac{ n}{625 T \sqrt{T}}.
\end{align}

Next, we want to obtain a high probability bound.

There are four sets of people, resp.\ the men and women in each of $P_{t'',s}$ and $P_{t'',s'}$. Let $\mu$ be the number of matches of one set. If $E[\mu] \geq \frac{\alpha n}{\sqrt{T}}$, then
\begin{align*}
    \Pr\big[\mu \ge  E[\mu](1-\epsilon) \big] \ge  1 - e^{-(E[\mu] \epsilon^2)/2} \ge 1 - e^{-\epsilon^2  \alpha n / (2 \sqrt{T})};
\end{align*}
letting $\epsilon  = \sqrt{\frac{2 \sqrt{T}}{\alpha n} \ln (4 T n^{2c + 1})}$ yields $\Pr\big[\mu \ge  E[\mu](1-\epsilon) \big]  \ge 1 - \frac{1}{4 T n^{2c + 1}}$.

Otherwise, $E[\mu] \leq \frac{\alpha n}{\sqrt{T}}$, and
\begin{align*}
    \Pr\big[\mu \ge  E[\mu] - \theta \big] \ge 1 - e^{-(\theta^2)/(2 E[\mu])} \ge 1 - e^{-(\sqrt{T} \theta^2) /(2 \alpha n)}; 
\end{align*}
setting $\theta = \sqrt{\frac{2\alpha n}{\sqrt{T}} \ln (4 \sqrt{T} n^{2c + 1})} = \alpha \epsilon \frac{n}{\sqrt{T}}$ yields  $\Pr\Big[\mu \ge  E[\mu] - \theta \Big] \ge 1 - \frac{1}{4 T n^{2c + 1}}$.

Recall \eqref{eqn::type-one-new-agent-bound}, the high probability bound that the number of new agents is at most $1.025 \frac{n}{\sqrt{T}}$.

By \eqref{eqn::type1-matches-min}, the number of people matched is at least $(1 - \epsilon)\left[\frac{|P_{t''}|^2} {4n\sqrt{T}} - \frac{ n}{625 T \sqrt{T}}\right] - 4 \theta$. 
Recall that $|P_{t''}| \leq d n$; we let $x = d n - |P_{t''}|$. Then, the number of people left is at most:
\begin{align*}
    d n - x - \left[(1 - \epsilon)\left[\frac{(d n - x)^2} {3n\sqrt{T}} - \frac{ n}{625 T \sqrt{T}}\right] - 4 \theta \right] \leq dn - \left[(1 - \epsilon)\left[\frac{(d n)^2} {4n\sqrt{T}} - \frac{ n}{625 T \sqrt{T}}\right] - 4 \theta \right],
\end{align*}
if $1 \geq (1 - \epsilon) d / 2\sqrt{T}$. This number is upper bounded by
$d n - (1 - \epsilon) (\frac{ d^2}{4} - \frac{1}{422,500}) \frac{n}{\sqrt{T}} +   4 \alpha \epsilon \frac{n}{\sqrt{T}}$ as $T \geq 676$ and $|P_{t''}| \geq dn$. 
Let $d = 2.6$ and $\alpha = \frac{3}{16}$. Also, $\epsilon \leq \frac{1}{10}$ as $n \ge 27 (2c + 2) T\ln n$ and $676 \le T \leq n$. 
A final calculation shows that $(1 - \epsilon) (\frac{ d^2}{4} - \frac{1}{422,500}) \frac{n}{\sqrt{T}} -   4 \alpha \epsilon \frac{n}{\sqrt{T}}$ is at least $1.025 \frac{n}{\sqrt{T}}$, demonstrating the result.

\end{proof}

\subsection{Upper Bound on the Size of a Type 2 Strip}
\label{appn::type2_upper}

\begin{proof} (of Theorem~\ref{type2_upper})

Consider any Type $2$ strip s. If $s$ is the topmost Type $2$ strip, clearly we can upper bound its size by twice the bound on the size of a Type $1$ strip given in Theorem~\ref{type1_upper}. In addition, if $s$ is the Type $2$ strip next to the topmost Type $2$ strip, then the size of strip $s$ is less than that of the topmost Type $2$ strip at time $t + 1 - \sqrt{T}$, which completes the proof for this strip too.

We now assume that $s$ has at least two Type $2$ strips above it. Let $s'$ be the strip immediately above $s$, and let $h$ denote the height of $s$. Then the height of $s'$ is $h/2$ and $h \geq \sqrt{T}$.  

Let's consider the agents who will be in strip $s$ at time $t+1$. They will all enter $s$ during a length $h$ time interval ending at time $t+1$. They can be partitioned into two sets as follows:
\begin{itemize}
    \item $Y_{t+1} = \{\text{agents that will have spent less than $h/2$ time in strip $s$ by time $t+1$}\}$. .
    \item $O_{t+1} =\{\text{agents that will have spent at least $h/2$ time steps in strip $s$ by time $t+1$}\}$.
\end{itemize}

The agents in $Y_{t+1}$ were all present at time $t'=t+1-h/2$ as part of the population of strip $s'$ at that time. 
By the inductive hypothesis, applied to $s'$ at time $t'$, we know that there were at
most $2 \cg n\sqrt{T}/h$ agents in $s'$ at that time.
Let $P_y$ denote this population.
The agents in $O_{t+1}$ were all present at time $t'=t+1-h$ as part of the population of strip $s'$ at that time. 
By the inductive hypothesis, applied to $s'$ at time $t'$, we know that there were at
most $2\cg n\sqrt{T}/h$ agents in $s'$ at that time.
Let $P_o$ denote this population.

Let $P^{y}_{t'',s}$ and $P^y_{t'',s'}$ be the remainder of population $P_y$ at time $t''$ in strips $s$ and $s'$, resp., for $t+1-\sqrt{T} \le t''\le t$.
Also, let $P^y_{t''} = P_{t'',s} \cup P_{t'',s'}$. Similarly, define $P^{o}_{t'',s}$, $P^o_{t'',s'}$ and $P^o_{t''}$.

\hide{We will now show that both populations $P_y$ and $P_o$ must have shrunk substantially over $h/2$ and $h$ steps respectively.}

To this end, we need to compute lower bounds on the match rates.

\rjc{By \eqref{eqn::upper_and_lower_TotBound},
the maximum of the number of men and number of women in the market is at most $n\sqrt{T}$.
}
\hide{
We recall the upper bound on the total population, from Theorem~\ref{thm::total_size_upper_bound},
of $(3/2)nN+n\le (15/8)n\sqrt{T} +n\le  (751/400)n\sqrt{T}$, as by
Constraint~\ref{eqn::constraints}, $n\ge T\ge 400$. Therefore, the number of men and number of women in the market should not exceed $\frac{1}{2} ((751/400)n\sqrt{T} + (288 / T) (n \sqrt{T}) + n N / (25 \sqrt{T})) \leq 13 n\sqrt{T}/10 $. The second term is from the last type $2$ strip. $T \geq 400$ and $N \leq \frac{5}{4} \sqrt{T}$ is used in the last inequality.
}

We divide the period $[t+1-h,t+1)$ into two phases; Phase 1, $[t+1-h,t+1-h/2)$, and Phase $2$, $[t+1-h/2,t+1)$. We will show that the size of $P^y_{t''}$ at the end of Phase $1$ is at most $\cg_1n\sqrt{T}/h$. We will specify $\cg_1$ later. Then, at the start of Phase $2$ the size of $P^o\cup P^y$ is at most $2\cg n\sqrt{T}/h + \cg_1n\sqrt{T}/h$. We claim that after Phase $2$, the size of $P^o\cup P^y$ is at most $\cg n\sqrt{T}/h$.

We analyze Phase $1$ first.
Consider the set $P^o_{t'',s}$ and time $t'' \in [t + 1 - h, t +1 - h /2)$. 
 
By Lemma~\ref{lem::upper_strip_tech}, at time $t''$ these matches remove, in expectation, at least the following number of people from $P^o_{t'',s}$:
\begin{equation}
\label{eqn::type2_matches}
    \begin{aligned}
    \frac{ |P^o_{t'',s}|^2  -  n^2 / \rjc{625} T}{2  n \sqrt{T}} 
    \end{aligned}
\end{equation}

Similar bounds will hold for the sets $P^o_{t'',s'}$. 
Notice that the total expected number of matches from $P^o_{t''}$ is minimized if $|P^o_{t'',s}|=|P^o_{t'',s'}|$. Thus we obtain that the size of $P^o_{t''}$ reduces, in expectation, by at least 

\begin{equation}
\label{eqn::type2_old_reduction_late_phase}
\begin{aligned}
\frac{|P^o_{t''}|^2} {4 n\sqrt{T}}- \frac{n}{625 T\sqrt{T}}.
\end{aligned}
\end{equation}

As in the analysis for the Type $1$ strip, we then give a high probability bound.
We have four sets of people, the men and the women in the sets $P^o_{t'',s'}$ and $P^o_{t'',s}$, respectively. Suppose $\mu$ be the number of matches in one of these set at time $t''$.

If $E[\mu]\geq \alpha n\sqrt{T}/h^2$, by Lemma \ref{lem::negative_dependence_two_sex},

\begin{equation*}
    \begin{aligned}
    \Pr\Big[\mu \ge  E[\mu](1-\epsilon) \Big] \ge  1 - e^{-(E[\mu] \epsilon^2)/2} \ge 1 - e^{-\epsilon^2 \alpha n \sqrt{T}/2h^2},
    \end{aligned}
\end{equation*}

Setting $\epsilon=\bigg[\frac{2h^2}{\alpha n\sqrt{T}}\ln(T (\log_2 \sqrt{T} + 1) n^{2c+1})\bigg]^\frac{1}{2}$ yields  $\Pr\Big[\mu \ge  E[\mu](1-\epsilon) \Big] \ge  1 - \frac{1}{T (\log_2 \sqrt{T} + 1) n^{2c+1}}$.

Otherwise, $E[\mu]\leq \alpha n\sqrt{T}/h^2$, so by Lemma \ref{lem::negative_dependence_two_sex},

\begin{equation*}
    \begin{aligned}
    \Pr\Big[\mu \ge  E[\mu] - \theta \Big] \ge 1 - e^{-(\theta^2)/(2 E[\mu])} \ge 1 - e^{-(\theta^2 h^2) /(2 \alpha n\sqrt{T})},  
    \end{aligned}
\end{equation*}

Setting $\theta=\bigg[\frac{2\alpha n\sqrt{T}}{h^2}\ln(T (\log_2 \sqrt{T} + 1)  n^{2c+1})\bigg]^\frac{1}{2} = \epsilon \alpha \frac{n \sqrt{T}}{h^2}$ yields  $\Pr\Big[\mu \ge  E[\mu]-\theta) \Big] \ge  1 - \frac{1}{T (\log_2 \sqrt{T} + 1) n^{2c+1}}$.

For each  of the four sets we can use one of the two bounds above.

We can set $\alpha=0.1,\epsilon\leq 0.1$ by imposing the constraints $c\geq 1, 400\leq T\leq n, n\geq 125(2c+2.5)T\sqrt{T}\ln n$ which are provided by the constraints in \eqref{eqn::constraints}, $h \leq T / 4$, and $\log_2 \sqrt{T} + 1\leq \sqrt{T} / 4$ .

Let $\cg(\cdot)$  be a real valued function. 
Suppose the size of the set $P^o_{t''}$ at round $t''$ is smaller than $\cg(t'') \cdot  n \sqrt{T} / h$ and let $X = \cg(t'') \cdot n \sqrt{T} / h - |P^o_{t''}|$. If $(1 - \epsilon)  \cg(t'') / (2 h) \leq 1$, \footnote{Note that this is satisfied when $\cg(t'') \leq 21.5$ and $h \geq \sqrt{T} = 20$.} then the size of the set $P^o_{t'' + 1}$ at round $t'' + 1$ is at most
\begin{align*}
    &\cg(t'') n \sqrt{T} / h - X - (1 - \epsilon) \left[\frac{(\cg(t'') n \sqrt{T} / h - X)^2} {4 n\sqrt{T}}- \frac{ n}{625 T\sqrt{T}}\right] + 4  \epsilon \alpha \frac{n \sqrt{T}}{h^2} \\
    &\vspace*{0.4in} \leq \cg(t'') n \sqrt{T} / h  - (1 - \epsilon) \left[\frac{(\cg(t'') n \sqrt{T} / h )^2} {4 n\sqrt{T}}- \frac{ n}{625 T\sqrt{T}}\right] + 4  \epsilon \alpha \frac{n \sqrt{T}}{h^2} \\
    &\vspace*{0.4in} \leq \cg(t'') n \sqrt{T} / h -  \frac{9\cg(t'')^2 n \sqrt{T} } {40 h^2 } + \frac{9 n}{6250 T\sqrt{T}} +  \frac{n \sqrt{T}}{25 h^2} \\
     &\vspace*{0.4in} \leq  n \sqrt{T} / h  \left[ \cg(t'') - \frac{1}{h} \left(\frac{9\cg(t'')^2 } {40} - 0.041\right)\right].
\end{align*}
The last inequality uses the constraint that $h \leq T / 4$.

Let $\cg(t + 1 - h) = 2\cg$ and $\cg(t''+ 1) = \left[ \cg(t'') - \frac{1}{h} \left(\frac{9\cg(t'')^2 } {40} - 0.041\right)\right]$ for $t'' \in [t + 1 - h, t + 1 - h/2)$, then we have shown that the size of the set $P^o_{t''}$ at round $t + 1 - h/2$ is at most $\cg(t+1 - h/2) \cdot n \sqrt{T} / h$. One way to solve $\cg(\cdot)$ by using a differential equation.
Consider the differential equation $\dd \bar{\cg} / \dd t = - \frac{1}{h} \left(\frac{9\bar{\cg}^2 } {40} - 0.041\right)$ and $\bar{\cg}(t + 1 - h) = 2\cg$. Note that $\bar{\cg}(t'') \geq \cg(t'')$ for all $t'' \in [t + 1 - h, t+ 1 - h/2]$. \footnote{Suppose it is not true. Since $\bar{\cg}(t + 1  - h) =  \cg(t + 1 - h) = 2\cg$, there exists a $t'$, such that $\bar{\cg}(t') \geq  \cg(t')$ and $\bar{\cg}(t' + 1) < \cg(t'+1)$. Then, there exist a $t'' \in [t', t' + 1)$ such that $\bar{\cg}(t'') = \cg(t')$. After time $t''$, $\dd \bar{\cg}(t'') / \dd t \geq - \frac{1}{h} \left(\frac{9\bar{\cg(t'')}^2 } {40} - 0.041\right) = \cg(t' + 1) - \cg(t')$. Therefore, $\bar{\cg}(t'+1) = \bar{\cg}(t'') + \int_{s = t''}^{t'+1} \dd \bar{\cg}(s)  \geq \cg(t') + \int_{s = t''}^{t'+1} [\cg(t' + 1) - \cg(t')] \dd s \geq \cg(t' + 1)$, which contradicts the assumption.}

Therefore, in order to prove $\cg(t+1 - h/2) \leq \cg_1$, we only need $\bar{\cg}(t+1 - h/2) \leq \cg_1$. We look at the total time for $\bar{\cg}$ to reduce from the value $\cg(t+1 - h) = 2\cg$ to $\cg_1$: $\dd t = - h  \dd \bar{\cg} / (\frac{9 \bar{\cg}^2}{40} - 0.041) $. Therefore, the total time is $\int_{\bar{\cg} = \cg_1}^{2\cg} h  \dd \bar{\cg} / (\frac{9 \bar{\cg}^2}{40} - 0.041) <= \int_{\bar{\cg} = \cg_1}^{2\cg} h  \dd \bar{\cg} / (\frac{9 \bar{\cg}^2}{40} - 0.041 \frac{\bar{\cg}^2}{\cg_1^2}) = (h / (\frac{9 }{40} - \frac{0.041}{\cg_1^2}))(1/\cg_1 - 1 / (2\cg))$. To have this be at most $h/2$ (the total duration of Phase $1$), we only need $2(1/\cg_1 - 1 / (2\cg)) \leq 9/40 - 0.041/(\cg_1)^2$, which is satisfied by letting $\cg = 7.5$ and $\cg_1 = 6.5$.

We consider Phase $2$ next. Consider the set $P^o_{t'',s} \cup P^y_{t'',s}$ and time $t'' \in [t + 1 - h/2, t +1)$. The analysis is exactly the same as that for Phase $1$.
By Lemma~\ref{lem::upper_strip_tech}, these matches remove, in expectation, at least the following number of people from $P^o_{t'',s} \cup P^y_{t'',s}$:
\begin{equation}
    \begin{aligned}
    \frac{ |P^o_{t'',s} \cup P^y_{t'',s}|^2  -  n^2 / 625 T}{2  n \sqrt{T}} 
    \end{aligned}
\end{equation}
Similar bounds will hold for the set $P^o_{t'',s'} \cup P^y_{t'',s'}$. We also reduce the size of $P^o_{t''} \cup P^y_{t''}$, in expectation, by at least 
\begin{equation}
\begin{aligned}
\frac{|P^o_{t''} \cup P^y_{t''}|^2} {3 n\sqrt{T}}- \frac{ n}{625 T\sqrt{T}}.
\end{aligned}
\end{equation}

Then, as in Phase $1$, suppose the size of the set $P^o_{t''} \cup P^y_{t''}$ at time $t''$ is smaller than $g(t'') n \sqrt{T} / h$ and let $X = g(t'') n \sqrt{T} / h - |P^o_{t''} \cup P^y_{t''}|$. If $(1 - \epsilon)  g(t'') / (2 h) \leq 1$, then the size of the set $P^o_{t''} \cup P^y_{t''}$ at round $t'' + 1$ is at most
\begin{align*}
      n \sqrt{T} / h  \left[ g(t'') - \frac{1}{h} \left(\frac{9g(t'')^2 } {40} - 0.041\right)\right].
\end{align*}

Let $\cg(t + 1 - h/2) = 2\cg + \cg_1$ and $g(t''+ 1) = \left[ g(t'') - \frac{1}{h} \left(\frac{9\cg(t'')^2 } {40} - 0.041\right)\right]$ for $t'' \in [t + 1 - h/2, t + 1)$, then we have shown that the size of the set $P^o_{t''} \cup P^y_{t''}$ at round $t + 1$ is at most $\cg(t + 1) \cdot n \sqrt{T} / h$. We consider the same differential equation here, $\dd \bar{\cg} / \dd t = - \frac{1}{h} \left(\frac{9\bar{\cg}^2 } {40} - 0.041\right)$, with $\bar{\cg}(t + 1 - h/2) = 2\cg + \cg_1$. Note that $\bar{\cg}(t'') \geq \cg(t'')$ for all $t'' \in [t + 1 - h/2, t+ 1]$. 

Therefore, in order to prove $\cg(t+1) \leq \cg$, we only need $\bar{\cg}(t+1) \leq \cg$. We look at the total time for $\bar{\cg}$ to reduce from the value $2g + g_1$ to $\cg$: $\dd t = - h  \dd \bar{\cg} / (\frac{9 \bar{\cg}^2}{40} - 0.041) $. Therefore, the total time is $\int_{\bar{\cg} = \cg}^{2\cg + \cg_1} h  \cdot \dd \bar{\cg} / (\frac{9 \bar{\cg}^2}{40} - 0.041) <= \int_{\bar{\cg} = \cg}^{2\cg + \cg_1} h \cdot  \dd \bar{\cg} / (\frac{9 \bar{\cg}^2}{40} - 0.041 \frac{\bar{\cg}^2}{\cg^2}) = (h / (\frac{9 }{40} - \frac{0.041}{\cg^2}))(1/\cg - 1 / (2\cg + \cg_1))$. To have this be at most $h/2$ (the total duration of Phase $2$), we only need $2(1/\cg - 1 / (2\cg + \cg_1)) \leq 9/40 - 0.041/(\cg)^2$, which is also satisfied by letting $\cg = 7.5$ and $\cg_1 = 6.5$.

Finally, as there are $\log_2 \sqrt{T} + 1$ Type $2$ strips, and, for each Type $2$ strip, we consider $h \leq T/4$ steps, the success probability is at least $1 - \frac{T/4 \cdot 4 \cdot (\log_2 \sqrt{T} + 1)}{T(\log_2 \sqrt{T} + 1) n^{2c+1}} \leq 1 - \frac{1}{n^{2c + 1}}$.

\end{proof}

\begin{proof} (Of Theorem~\ref{thm::last_strip_size_upper_bound}.)~
Let's call the bottommost Type $2$ strip $s$ and the Type $2$ strip immediately above it strip $s'$.
Any agent in the population in $s$ at time $t+1$ must belong to one of the following categories:

\begin{itemize}
    \item The agent was present in $s'$ at time $t+1-T/4$.
    \item The agent was present in $s'$ at time $t+1-T/2$.
\end{itemize}

However, by our inductive hypothesis $H(t)$, we know that at all time steps before $t+1$, the size of $s$ was always less than $\frac{7.5n\sqrt{T}}{\text{height of strip $s$}}\leq 4\cdot 7.5 n/\sqrt{T}.$ 

Thus,  the size of $s$ at time $t+1$ is bounded by $8\cdot 7.5n/\sqrt{T}\leq 60n/\sqrt{T}$, which concludes the proof.
\end{proof}

\subsection{Bound on the Imbalance}
\label{appn::imbalance}

\begin{proof} (Of Claim~\ref{clm::update-to-I}).
The expected number of matches at time $\tau$ between men in diagonal $d_i$ and women in diagonal $d_j$ is 
$$\frac{(2A_i+I_i+X_i)(2A_j-I_j-X_j)}{4R}.$$
Similarly, the expected number of women in $d_i$ that match with men in $d_j$ is 
$$\frac{(2A_i-I_i-X_i)(2A_j+I_j+X_j)}{4R}.$$
Thus $I'(d_i,\tau)$ is given by:
\begin{align*}
 & I_i+X_i - \sum_{d_j \in s}\Big[ \frac{(2A_i+I_i+X_i)(2A_j-I_j-X_j)}{4R}-\frac{(2A_i-I_i-X_i)(2A_j+I_j+X_j)}{4R}\Big]\\
&~~= I_i+X_i - \sum_{d_j \in s} \Big[ X_i\frac{(2A_j-I_j-X_j)}{4R}-X_j\frac{(2A_i-I_i-X_i)}{4R}+I_i\frac{(2A_j-I_j-X_j)}{4R}-I_j\frac{(2A_i-I_i-X_i)}{4R}\Big].
\end{align*}

\end{proof}

\begin{proof} (Of Claim~\ref{clm::bound-on-Y-one-strip}.)~
We first give a high probability bound on $\sum Y(d_i, \tau)$.
Let $m(d_i, \tau)$ be the number of men entering the market on diagonal $d_i$ at time $\tau$. Note that $d_T$ is the last Type $1$ diagonal. Let $d_r$ be a diagonal in Type $1$ strip $s$; then,
\begin{align*}
\chance\bigg[\Big|\sum_{r \leq i \leq T} Y(d_i, \tau)\Big| > \Delta \bigg] = \chance\bigg[\Big|\sum_{r \leq i \leq T} m(d_i, \tau)  - (T-r+1)\cdot \frac{n}{2T}\Big| >  \Delta / 2\bigg].
\end{align*}
Note that 
\begin{align*}
    &\chance\bigg[\Big|\sum_{r \leq i \leq T} m(d_i, \tau)  - (T-r+1)\cdot \frac{n}{2T}\Big| >  \Delta / 2\bigg]\\
    &\leq 2 \exp\Big[-\Delta^2  / \Big(3 \cdot (T-r+1)\cdot \frac{n}{2T}\Big)\Big] \leq 2 \exp \Big[-\Delta^2 / \Big(\frac{3n}{2}\Big)\Big].
\end{align*}
The last inequality follows as $T - r + 1 \leq T$. Letting $\Delta = \sqrt{\frac{3n}{2} \ln \left(2 T n^{3c + 1}\right)}$ yields
\begin{align*}
\chance\bigg[\Big|\sum_{r \leq i \leq T} Y(d_i, \tau)\Big| > \sqrt{\frac{3n}{2} \ln \left(2 T n^{3c + 1}\right)} \bigg] \leq \frac{1}{T n^{3c + 1}}.
\end{align*}
Therefore, with probability at least $\frac{1}{n^{3c+1}}$, for all $r$ such that $d_r$ is a diagonal in a Type $1$ strip, $\left|\sum_{r \leq i \leq T} Y(d_i, \tau)\right| \leq \sqrt{\frac{3n}{2} \ln \left(2 T n^{3c + 1}\right)}$.

With this result in hand, we prove the claim as follows.
Let $d_{r(s)}$ be the rightmost (lowest index) diagonal in $s$ and $d_{l(s)}$ be the leftmost (highest index) diagonal in $s$. Let $w_i = \sum_{j\ge r(s)} Y(d_i,\tau,d_j,\tau')/Y(d_i,\tau)$.
Let's consider 
$\sum_{d_i\in s'; j \ge r(s)} Y(d_i,\tau,d_j,\tau')
= \sum_{d_i\in s'} w_i\cdot Y(d_i,\tau)
$. 
By Claim~\ref{clm::distr-of-X}, $w_i \le w_k$, for $i<k$. Thus, 
\begin{align*}
    \Big|\sum_{d_i; j \ge r(s)} Y(d_i,\tau,d_j,\tau')\Big|=
    \Big|\sum_{i = 1}^{T} w_i\cdot Y(d_i,\tau)\Big|\leq w_T \max_{r \leq T}\big| \sum_{r\le i \le T} Y(d_i,\tau)\Big| \leq \sqrt{\frac{3n}{2} \ln \left(2 T n^{3c + 1}\right)}.
\end{align*}
Finally,
\begin{align*}
   \Big|\sum_{d_i; d_j\in s} Y(d_i,\tau,d_j,\tau') \Big| =
   \Big| \sum_{d_i; j \ge r(s)} Y(d_i,\tau,d_j,\tau') -
   \sum_{d_i; j \ge l(s)+1} Y(d_i,\tau,d_j,\tau') \Big| \le 2\sqrt{\frac{3n}{2} \ln \left(2 T n^{3c + 1}\right)},
\end{align*}
which completes the proof.
\end{proof}

\begin{proof} (Of Claim~\ref{clm::remain::type::2}.)~
We begin by bounding $\sum_{j: d_j \in s} (2A_j - I_j - X_j) / 4R$ for any Type $2$ strip $s$. 
\begin{align}
\label{eqn::match-rate-bound-second}
\sum_{j: d_j \in s} \frac{(2A_j - I_j - X_j)}{4R} \leq \frac{1}{2} \frac{\frac{3.75n\sqrt{T}}{H}+n/50\sqrt{T}}{\frac{n\sqrt{T}}{6}}<\frac{23}{2H}\hspace*{0.2in}\text{(as $\sqrt{T}\ge 26$ by constraint~\ref{eqn::constraints})}.
\end{align}

Let $s$ have height $H$.

Consider $X(d,\tau,d',\tau')$. If $d'$ is in a Type $2$ strip then by \eqref{eqn::match-rate-bound-second} at most $\frac{23}{2H}$ of it disperses to some location in the same strip and at least $1-\frac{23}{2H}$ of it moves down distance one. This implies that, within $H$ time, a Type $2$ strip loses at least $e^{-23/2}$ of the $X(d,\tau,d',\tau')$ that had been present within it at time $\tau'$. Let $K_2=e^{12}\ln 2$. Therefore, by time $\tau'+ K_2 H$ ($\leq \tau' K_2 T / 4$) at least half of the $X(d,\tau,d',\tau')$ in a Type $2$ strip has moved out of the strip.

Similarly, we can now carry out the same kind of argument for the Type $2$ strips. After $2e^2(\ln 2) \sqrt{T}(\sqrt{T}+\log_{2}(2n^k))$ time there is at most $\frac{1}{2n^k}$ fraction of $X(d,\tau)$ in the Type $1$ strips. We focus on the remaining
$1 - \frac{1}{2n^k}$ portion of $X(d,\tau)$ which has already entering the type 2 strips.
Number the Type $2$ strips from top to bottom\footnote{Our argument doesn't involve the last Type $2$ strip, so we will end at the second to last strip.}. Now consider $\gamma$ as a distribution of the rest of the $X(d,\tau)$ where $\gamma_i$ is the fraction of $X(d,\tau)$ in strip number $i$. Recall that there are 
$\log_2(\sqrt{T})$ Type $2$ strips other than the bottom Type $2$ strip. We consider the worst case where  all the remaining $(1 - \frac{1}{2 n^k}) \cdot X(d,\tau)$ starts out at the topmost Type $2$ strip. Define a potential function $\phi(\gamma)=\sum_{i=1}^{\log_2 \sqrt{T} + 1}\gamma_i\cdot 2^{(\log_2 \sqrt{T}) - i + 1}$. For the remaining $(1 - \frac{1}{2 n^k}) \cdot X(d,\tau)$, The initial potential is at most $\sqrt{T}$. Every $K_2 T / 4$ time steps, the potential decreases by at least $1/4$. Therefore, after $\frac{1}{\log_2 (4/3)} K_2T/4\log_{2}(2n^k\sqrt{T})$ time steps, the potential would have reduced to at most $\frac{1}{2n^k}$. 

There is also $\frac{1}{2n^k}$ fraction which might still be in the Type $1$ strips. Thus the fraction of $X(d,\tau)$ in any strip other than the bottommost Type 2 strip after $\frac{1}{\log_2 (4/3)} K_1\sqrt{T}(\sqrt{T}+\log_{2}(2n^k))+$ 
\newline$\frac{1}{\log_2 (4/3)} K_2T/4\log_{2}(2n^k\sqrt{T})$ time is at most $\frac{1}{n^k}$.
\end{proof}

\begin{proof} (Final Calculation in Theorem ~\ref{thm::imbalance_bound})\\

As $\kappa = \frac{e^2 \ln 2}{\log_2 (4/3)} \sqrt{T}(\sqrt{T}+\log_{2}(2n^k)) + \frac{e^{12} \ln 2}{4 \log_2 (4/3)} T\log_{2}(2n^k\sqrt{T}) \leq 12.35 (T + \sqrt{T} + (c+4) \sqrt{T}\log_2 n ) + \frac{e^{12}}{2} (T + (c + 4) T \log_2 n + 0.5 T \log_2 T)$, the total imbalance is at most 
\begin{align*}
    \frac{15T}{n^3} + &\Big[12.35 (T + \sqrt{T} + (c+4) \sqrt{T}\log_2 n ) + \frac{e^{12}}{2} (T + (c + 4) T \log_2 n + 0.5 T \log_2 T) \Big] \cdot \\
    &~~~~~~~~~~~~~\bigg(192\sqrt{ \frac{n\ln(4n^{3c+1} (T^2/32 + T/8) N)}{\sqrt{T}}} +2\sqrt{\frac{3n}{2} \ln \left(2 T n^{3c + 1}\right)} \bigg).
\end{align*} In order to make it smaller than $\frac{n}{25 \sqrt{T}}$, we only need that

\begin{align*}\frac{375 T \sqrt{T}}{n^3} + &\Big[309 (T + \sqrt{T} + (c+4) \sqrt{T}\log_2 n ) + \frac{25 e^{12}}{2} (T + (c + 4) T \log_2 n + 0.5 T \log_2 T)\Big] \cdot \\
&~~~~~~~~~~~~\bigg(192\sqrt{ \sqrt{T}\ln[4n^{3c+1} (T^2/32 + T/8) N]} +2\sqrt{\frac{3 T}{2} \ln \left(2 T n^{3c + 1}\right)} \bigg) \leq \sqrt{n}.
\end{align*}

As $375 T\sqrt{T} / n^3 \leq 0.0012 \sqrt{n}$ from the constraint $n \geq T \geq 676$, we need
\begin{align*} &\Big[309 (T + \sqrt{T} + (c+4) \sqrt{T}\log_2 n ) + \frac{25 e^{12}}{2} (T + (c + 4) T \log_2 n + 0.5 T \log_2 T)\Big] \cdot \\
&~~~~~~~~~~~~\bigg(192\sqrt{ \sqrt{T}\ln[4n^{3c+1} (T^2/32 + T/8) N]} +2\sqrt{\frac{3 T}{2} \ln \left(2 T n^{3c + 1}\right)} \bigg) \leq 0.9988\sqrt{n}.  \numberthis \label{ineq::final::1}
\end{align*}
In addition, as $n \geq T \geq 676$,
\begin{align*}
    &\Big[309 (T + \sqrt{T} + (c+4) \sqrt{T}\log_2 n ) + \frac{25 e^{12}}{2} (T + (c + 4) T \log_2 n + 0.5 T \log_2 T)\Big] \\
    &~~~~~~~~~~~~~~~~~~\leq (86.61 + 12.876c + 57.62 e^{12} + 12.5 e^{12} c) T \log_2 n, \numberthis \label{ineq::final::2}
\end{align*}
and, as $n \geq T \geq 676$ and $n \geq N$,
\begin{align*}
    \bigg(192\sqrt{ \sqrt{T}\ln[4n^{3c+1} (T^2/32 + T/8) N]} +2\sqrt{\frac{3 T}{2} \ln \left(2 T n^{3c + 1}\right)} \bigg) \leq 42 \sqrt{T (3c + 4) \ln n}.  \numberthis \label{ineq::final::3}
\end{align*}
By inequalities \eqref{ineq::final::1}, \eqref{ineq::final::2}, and \eqref{ineq::final::3},
\begin{align*}
    (86.61 + 12.876c + 57.62 e^{12} + 12.5 e^{12} c) T \log_2 n \cdot 42 \sqrt{T (3c + 4) \ln n} \leq 0.9988\sqrt{n}.
\end{align*}
Therefore, $n \geq (3654 + 2436e^{12} + 546(e^{12} + 1) c)^2(3c+4) T^3 (\log_2 n)^2 \ln n$ suffices.

\end{proof}

\subsection{Initialization}
\label{appn::init}

\begin{proof} (Of Theorem~\ref{thm::initialization}.)~
At any point in the first $\sqrt{T}$ time steps:
\begin{itemize}
    \item The total population in the entire matching pool is clearly less than $n\sqrt{T}<nN$, as only these many agents could have even entered the matching pool.
    
    \item In any single Type $1$ strip, 

$$\pr{\text{Number of agents that entered the strip from the top}\leq \frac{n\sqrt{T}(1+\epsilon)}{\sqrt{T}}}\geq 1-e^{-\frac{n\epsilon^2}{3}}.$$

However the agents that enter a Type $1$ strip during the first $\sqrt{T}$ time must either have entered from the top or they could have entered at the top boundary of the previous strip. Thus, by a union bound,

$$\pr{\text{Number of agents that entered any Type $1$ strip}\leq 2n(1+\epsilon)}\geq 1-\sqrt{T}e^{-\frac{n\epsilon^2}{3}}.$$

So by setting $\epsilon=\sqrt{\frac{3}{n}\ln (n^{c+1}\sqrt{T})}$, and imposing the constraints $c\ge 1$, $T\le n$, and $n\ge 35(c+2)\ln n$ that guarantee that $\epsilon< 0.3$ (from \eqref{eqn::constraints}), we obtain that with probability $1-\frac{1}{n^{c+1}}$ every Type $1$ strip has a population $<2.6n$.

\item The agents in the first Type $2$ strip after $\sqrt{T}$ time steps (the only Type $2$ strip with any population after $\sqrt{T}$ time) could only be those agents that entered the leftmost two Type $1$ strips from the top. However the previous bound already guarantees that this number is also $<2.6n$.

\item Also, the population in the bottom most Type 2 strip will be 0.

\item Now it remains only to show that in each of the strips, except possibly the bottommost Type 2 strip, $|\text{number of men}-\text{number of women}|\leq \frac{n}{25\sqrt{T}}$. 
\end{itemize}

We will follow the proof of Theorem \ref{thm::imbalance_bound}, though the proof will be simplified by the fact that we only need to consider $\sqrt{T}$ many time steps.

We divide each strip into thin diagonals of width $1$. Let the diagonal include the bottom but not the top boundary. Notice that for each value, a diagonal contains at most one grid point. 

As in Theorem \ref{thm::imbalance_bound}, we introduce the following notation w.r.t.\ diagonal $d$ at time step $\tau$, where we are conditioning on the outcome of step $\tau-1$.
\begin{align*}
I(d,\tau) &= \Expect[(\text{number of men at time $\tau$}-\text{number of women at time $\tau$})]\\
X(d,\tau) &=  (\text{number of men matching at time $\tau$}-\text{number of women matching at time $\tau$})\\
&\hspace*{0.2in} - \Expect[(\text{number of men matching at time $\tau$}-\text{number of women matching at time $\tau$})]\\
Y(d,\tau) &= \text{number of men entering at time $\tau$} - \text{number of women entering at time $\tau$}
\\
A(d,\tau) &= (\text{number of men matching at time $\tau$}+\text{number of women matching at time $\tau$})/2.
\end{align*}

$I(d,\tau)$ is measured after the entry of the new agents at time $\tau$ but prior to the match for this step. Also, note that $Y(d, \tau) = 0$ if $d$ is in a Type $2$ strip. 

In addition, observe that the imbalance $\Imb(s)$ at the start of step $t$ equals $\sum_{d\in s} I(d,t)$.

We observe that a match between two agents in distinct diagonals of the same strip
will increment the $(\text{number of men } - \text{ number of women})$
in one diagonal and decrement it in the other.
Thus there is a zero net change over all the diagonals
in the strip due to the matches. However, as the agents all age by 1 unit during a step, some agents enter the strip and some leave, which can cause changes to the imbalance within a strip.
However, the entry of new agents can introduce new imbalances.
We will need to understand more precisely how these imbalances evolve.

It is convenient to number the diagonals as $d_1,d_2,d_3,\ldots$, in right to left order.

We recall the following claims from the proof of Theorem \ref{thm::imbalance_bound}.

\begin{claim}
\label{clm::copy_update-to-I}
Let $d_i$ and $d_j$ be two diagonals in the same strip $s$. For brevity, let $I_i\triangleq I(d_i,\tau-1)$,
$I_j\triangleq I(d_j,\tau-1)$,
$A_i\triangleq A(d_i,\tau-1)$,
$A_j\triangleq A(d_j,\tau-1)$,
$X_i\triangleq X(d_i,\tau-1)$,
$X_j\triangleq X(d_j,\tau-1)$.
Finally, let $R$ denote the maximum of the total number
of men and the total number of women in the system
at time $\tau-1$.
Then the new imbalance on diagonal $d_i$, prior to every unmatched agent adding 1 to their age (which causes the agents on $d_i$ to move to $d_{i+1}$), denoted by $I'(d_i,\tau)$, is given by:
\begin{align*}
&I'(d_i,\tau)= \\
&~~~~I_i + X_i - \sum_{d_j \in s}\Big[   X_i\frac{(2A_j-I_j-X_j)}{4R}-X_j\frac{(2A_i-I_i-X_i)}{4R}+I_i\frac{(2A_j-I_j-X_j)}{4R}-I_j\frac{(2A_i-I_i-X_i)}{4R}\Big]; \\
&\text{~~and~~}I(d_i, \tau) = I'(d_{i-1}, \tau - 1) + Y(d,\tau).
\end{align*}
\end{claim}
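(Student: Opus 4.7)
The plan is to derive the update rule directly from the matching procedure at step $\tau-1$, and then layer one step of aging plus the injection of new agents on top.

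First, I would invoke Lemma~\ref{lem::match rate} together with the observation that, under the modified reasonable strategy, every pair inside a common strip is mutually acceptable: the expected number of matches at step $\tau-1$ between men on $d_i$ and women on $d_j$ (both in strip $s$) equals $M_i W_j / R$, where $M_k$ and $W_k$ denote the corresponding sub-populations on $d_k$ and $R = \max\{\text{total men},\text{total women}\}$. Writing $M_k = A_k + (I_k+X_k)/2$ and $W_k = A_k - (I_k+X_k)/2$, this expected count becomes $(2A_i+I_i+X_i)(2A_j-I_j-X_j)/(4R)$, and symmetrically the expected count of matches with a woman on $d_i$ and a man on $d_j$ equals $(2A_i-I_i-X_i)(2A_j+I_j+X_j)/(4R)$.

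Next, I would compute the net expected decrease in the (men $-$ women) count on $d_i$ by summing the former minus the latter over all $d_j \in s$: a man-on-$d_i$ with woman-on-$d_j$ match decreases the imbalance on $d_i$ by one, while the opposite pairing increases it by one. Starting from the pre-match imbalance $I_i + X_i$ on $d_i$ and subtracting this net decrease yields $I'(d_i,\tau)$ (the post-match imbalance on $d_i$ before aging). The algebraic simplification rests on the identity $(a+b)(c-d) - (a-b)(c+d) = 2(bc-ad)$, which cancels the $A_i A_j$ product terms and leaves only the cross-difference structure displayed in the claim, in which each surviving term has the form of an $X$ or $I$ ``spread'' by a fraction $(2A_j-I_j-X_j)/(4R)$ or its $i \leftrightarrow j$ counterpart.

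Finally, I would handle the one-step aging. Since every unmatched agent at the end of step $\tau-1$ ages by one, the population on $d_i$ at the start of step $\tau$ consists of the surviving post-match population from the right-neighbor diagonal $d_{i-1}$ at step $\tau-1$, augmented by any new agents arriving on $d_i$ at time $\tau$. Taking the corresponding imbalance yields $I(d_i,\tau) = I'(d_{i-1},\tau-1) + Y(d_i,\tau)$, with $Y(d_i,\tau)=0$ unless $d_i$'s topmost grid point lies on the top boundary of the market. There is no real obstacle: the argument is a direct expansion of the expected match counts together with routine bookkeeping for aging and arrivals, and the only point requiring any care is the index shift $d_{i-1} \to d_i$ induced by aging.
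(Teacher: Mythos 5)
Your proposal follows essentially the same route as the paper's proof: expand the expected counts of ($d_i$-man, $d_j$-woman) and ($d_i$-woman, $d_j$-man) matches as $(2A_i+I_i+X_i)(2A_j-I_j-X_j)/4R$ and $(2A_i-I_i-X_i)(2A_j+I_j+X_j)/4R$, take their difference summed over $d_j\in s$ to get the net expected change in the imbalance on $d_i$, and then do the bookkeeping for aging and new arrivals. One small caveat: your identity $(a+b)(c-d)-(a-b)(c+d)=2(bc-ad)$ is correct, but the bracketed expression actually displayed in the claim works out to $(bc-ad)/4R$ per term rather than $2(bc-ad)/4R$; this factor-of-two discrepancy is already present in the paper's own one-line simplification, so your derivation faithfully reproduces the paper's argument (including its constant) rather than introducing a new gap.
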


$X(d,\tau)$ and $Y(d,\tau)$ are generated at diagonal $d$ at time $\tau$ and, by 
Claim \ref{clm::copy_update-to-I}, at any subsequent time step, $X(d,\tau)$ and $Y(d,\tau)$ will be redistributed over other diagonals.
\begin{enumerate}
    \item Due to the expected matching at time $\tau'\ge \tau$, each $X(d,\tau)$ and $Y(d,\tau)$ spreads to other diagonals in the same strip.
    \item At the end of  time step $\tau'$ the portions of $X(d,\tau)$ and $Y(d,\tau)$ present on diagonal $d_i$ move to diagonal $d_{i+1}$.
\end{enumerate}

Building on these observations, we will show our bound on the imbalance by bounding the total contribution from $X(\cdot, \tau)$ and $Y(\cdot, \tau)$ to strip $s$ at time $\tau'$.

Notice that  $\sum_{d_i\in s} I'(d_i,\tau) = \sum_{d_i\in s} I(d_i,\tau-1)$, for the coefficients
multiplying $X_i$ cancel, as they also do for $I_i$. Thus we can think of this process as redistributing the imbalance, but not changing the total imbalance.

Over time an imbalance $X(d_i,\tau)$ will be redistributed over many diagonals. We write
$X(d_i,\tau,d_j,\tau')$ to denote the portion of
$X(d_i,\tau)$ on diagonal $d_j$ at time $\tau'$.
$d_j$ need not be in the same strip as $d_i$.
Note that $\sum_{d_j} X(d_i,\tau,d_j,\tau') = X(d_i,\tau)$ for all $\tau'\ge \tau$. $Y(d_i,\tau,d_j,\tau')$ is defined analogously.

For the purposes of the following claim, we treat the final strip as a single diagonal, and in addition ignore the fact that people depart at age $T$ (which means that once an imbalance appears in this strip it remains there). The reason this strip is different is that it covers the whole of the bottom boundary and so is the only strip from which people leave the system by aging out.

\begin{claim}
\label{clm::copy_distr-of-X}
For all $\ell$, for all $i<k$, and for all $\tau'\ge \tau$,
$\big|\sum_{j>\ell} X(d_i,\tau,d_j,\tau')\big| \le \big|\sum_{j>\ell} X(d_k,\tau,d_j,\tau')\big|$.
The same property holds for the $Y(d_i,\tau,d_j,\tau')$.
\end{claim}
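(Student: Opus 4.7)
The plan is to prove Claim~\ref{clm::copy_distr-of-X} by induction on $\tau'$, exactly mirroring the proof of Claim~\ref{clm::distr-of-X} given earlier. The base case $\tau'=\tau$ is immediate, since $X(d_i,\tau,d_j,\tau)$ equals $X(d_i,\tau)$ when $j=i$ and is zero otherwise, so both sides of the inequality are either $|X(d_i,\tau)|$ (resp. $|X(d_k,\tau)|$) or $0$ depending on whether $\ell$ falls below the respective source diagonal; since $i<k$ and the diagonals are numbered right-to-left, whenever the left-hand side is nonzero the right-hand side is also nonzero with equal absolute value.

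For the inductive step, I would normalize and define $x_{ij}\triangleq X(d_i,\tau,d_j,\tau')/X(d_i,\tau)$ and analogously $x_{kj}$ (these are nonnegative and sum to $1$). The induction hypothesis states that for every $\ell$, $\sum_{j>\ell} x_{ij}\le \sum_{j>\ell} x_{kj}$. I would then use the ``unit-interval alignment'' trick: lay the $x_{ij}$ out on $[0,1]$ ordered by increasing $j$, and likewise for the $x_{kj}$, and map aligned portions to each other. The inductive hypothesis is equivalent to saying that on aligned portions, the $j$-index used for the $i$-source is at most the $j'$-index used for the $k$-source.

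Now I would track how aligned portions evolve under one step. By Claim~\ref{clm::copy_update-to-I}, the fraction of $X_i$ that moves from diagonal $i$ to diagonal $j$ is $(2A_j-I_j-X_j)/4R$, which depends only on the target diagonal $j$ and not on the source. Therefore if an aligned pair $(x_{ij},x_{kj'})$ lies with $j$ and $j'$ in the same strip, the same dispersion pattern acts on both fragments and the inequality is preserved on each sub-piece. If $j<j'$ lie in different strips, dispersion keeps all the dispersed pieces inside each respective strip, so the ordering $j<j'$ is automatically maintained across every sub-fragment. Finally, the uniform one-step downward shift (from $d_i$ to $d_{i+1}$ at the end of the step) preserves the alignment inequality. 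The identical argument applies verbatim with $Y$ in place of $X$.

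The main obstacle I anticipate is making the ``aligned portions within a strip get dispersed identically'' statement fully rigorous: one must verify that the multiplier is indeed independent of the source within the strip, and that the portions leaving the strip (by moving down one diagonal at step's end) do not disturb the relative ordering encoded by the unit-interval mapping. Once these bookkeeping points are checked, the induction closes and the same argument carries $Y$.
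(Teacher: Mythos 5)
Your proposal is correct and takes essentially the same approach as the paper: the paper's own argument (given as the proof of Claim~\ref{clm::distr-of-X}, of which this appendix claim is a verbatim restatement) is exactly the induction-plus-unit-interval-alignment argument you describe, resting on the observation that the dispersal multiplier $(2A_j-I_j-X_j)/4R$ depends only on the target diagonal and is therefore identical for all sources within a strip. The bookkeeping points you flag (source-independence of the multiplier, preservation of the alignment under the end-of-step downward shift) are handled in the paper in the same brief way you indicate, so nothing further is needed.
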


Later, we will show a common bound $B$ on the sums
$\big| \sum_{i\le j \le k} X(d_j,\tau)\big|$,
which holds for all $d_i$ and $d_k$ in the same strip and all $\tau$.

With this bound and Claim~\ref{clm::copy_distr-of-X} in hand, for each strip $s$, we can bound the contribution of the $X(d_i,\tau,d_j,\tau')$ summed
over all $d_i$ and over $d_j\in s$ by $2B$.
\begin{claim}
\label{clm::copy_bound-on-X-one-strip}
For all $\tau'\ge \tau$, for every strip $s$,
$\big|\sum_{d_i; d_j\in s} X(d_i,\tau,d_j,\tau')  \big| \le 2B$.
\end{claim}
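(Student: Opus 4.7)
The plan is to mirror the Abel-summation argument already used to prove Claim~\ref{clm::bound-on-X-one-strip} in the main text, using Claim~\ref{clm::copy_distr-of-X} as the key structural input. The basic ingredients are already in place: within each strip $s'$ the generated within-step imbalances $X(d_i,\tau)$ sum to zero (matches among diagonals of a common strip merely redistribute imbalance), and Claim~\ref{clm::copy_distr-of-X} forces the ``tail mass'' ratios $w_i := \sum_{j \ge r(s)} X(d_i,\tau,d_j,\tau')/X(d_i,\tau)$ to be monotone non-decreasing in $i$. These two facts together allow a summation-by-parts argument to convert a weighted sum against $X(d_i,\tau)$ into a weighted sum against partial sums, each of which is controlled by the common bound $B$.

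Concretely, I would first fix a single strip $s'$ and write
\[
\sum_{d_i \in s';\; j \ge r(s)} X(d_i,\tau,d_j,\tau') \;=\; \sum_{d_i \in s'} w_i\, X(d_i,\tau).
\]
Applying Abel summation and using the monotonicity of the $w_i$, this can be bounded in absolute value by $(w_{l(s')} - w_{r(s')}) \cdot \max_{r \ge r(s')} \bigl|\sum_{r \le i \le l(s')} X(d_i,\tau)\bigr|$, and the inner maximum is at most $B$ by assumption. Summing over all strips $s'$, the telescoping quantity $\sum_{s'}(w_{l(s')} - w_{r(s')})$ is at most $1$, since the $w_\cdot$ values lie in $[0,1]$ and are monotone across the entire index range. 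This produces $\bigl|\sum_{d_i;\; j \ge r(s)} X(d_i,\tau,d_j,\tau')\bigr| \le B$.

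Running exactly the same argument with the cutoff $j \ge l(s)+1$ in place of $j \ge r(s)$ (i.e., comparing against the top of the strip immediately below $s$) gives the analogous bound $\bigl|\sum_{d_i;\; j \ge l(s)+1} X(d_i,\tau,d_j,\tau')\bigr| \le B$. Writing the contribution to $s$ as the difference of these two tail sums and applying the triangle inequality yields the desired bound of $2B$.

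The main technical point I would need to verify carefully is that the common bound $B$ genuinely applies in this initialization regime: this is the content of Claim~\ref{clm:: B_for_init} and requires the variance-type calculation for times $\tau \le \sqrt{T}$ rather than the later-round bound of Claim~\ref{clm::bound-on-X-one-strip::B}. I would also want to double-check the handling of the bottommost Type~2 strip (treated as a single diagonal as noted just before Claim~\ref{clm::copy_distr-of-X}), so that the within-strip cancellation $\sum_{d_i \in s'} X(d_i,\tau) = 0$ used above is not spoiled there. Once these technical points are confirmed, the proof is word-for-word the same as that of Claim~\ref{clm::bound-on-X-one-strip}, so no new ideas are required.
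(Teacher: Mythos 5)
Your proposal is correct and is essentially identical to the paper's own argument: the paper proves this claim by the same Abel-summation/telescoping argument used for Claim~\ref{clm::bound-on-X-one-strip}, invoking the monotonicity of the tail-mass ratios from Claim~\ref{clm::copy_distr-of-X} and then differencing the two tail sums at $r(s)$ and $l(s)+1$ to get $2B$. The technical caveat you flag — that $B$ must be re-derived for the initialization regime $\tau\le\sqrt{T}$ — is exactly what the paper handles separately in Claim~\ref{clm:: B_for_init}.
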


This allows us to obtain the bound the imbalance in a strip $s$ at any time $\tau'\leq\sqrt{T}$ by considering the contributions of $\big|\sum_{d_i; d_j\in s} X(d_i,\tau,d_j,\tau')\big|$ and $\big|\sum_{d_i; d_j\in s} Y(d_i,\tau,d_j,\tau') \cdot Y(d_i, \tau) \big|$ at all possible previous times (which is at most $\sqrt{T}$ time). 

Regarding the contribution of Y, we also have the following claim,
\begin{claim}
\label{clm::copy_bound-on-Y-one-strip}
With probability at least $1 - \frac{1}{n^{2c+1}}$, for all $\tau'\ge \tau$, for every strip $s$,
$\big|\sum_{d_i; d_j\in s} Y(d_i,\tau,d_j,\tau') \big| \le 2\sqrt{\frac{3n}{2} \ln \left(2 T n^{3c + 1}\right)}$.
\end{claim}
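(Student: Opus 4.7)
The plan is to mirror the proof of Claim~\ref{clm::bound-on-Y-one-strip} essentially verbatim. The key observation is that the random variables $Y(d_i,\tau)$ depend only on the agents entering the market at time $\tau$ (their chosen value and gender), so they are independent of all matching outcomes and do not require any assumption on $H(t)$; in particular, the Chernoff step in Claim~\ref{clm::bound-on-Y-one-strip} goes through unchanged during the initialization window.

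First, for each fixed Type 1 diagonal $d_r$ and each fixed $\tau\le\sqrt{T}$, apply the Chernoff bound to $\sum_{r\le i\le T}Y(d_i,\tau)$ exactly as in Claim~\ref{clm::bound-on-Y-one-strip}; this gives
$$\chance\!\left[\Big|\sum_{r\le i\le T}Y(d_i,\tau)\Big|>\sqrt{\tfrac{3n}{2}\ln(2Tn^{3c+1})}\right]\le\tfrac{1}{Tn^{3c+1}}.$$
A union bound over the at most $T$ choices of $r$ and the at most $\sqrt{T}\le n^c$ choices of $\tau$ yields total failure probability at most $\sqrt{T}/n^{3c+1}\le 1/n^{2c+1}$.

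Second, on this good event, combine Claim~\ref{clm::copy_distr-of-X} (applied to the $Y$ terms, as the text notes it applies) with the Abel-summation step used in Claim~\ref{clm::copy_bound-on-X-one-strip}. Setting $w_i=\sum_{j\ge r(s)}Y(d_i,\tau,d_j,\tau')/Y(d_i,\tau)\in[0,1]$ and noting $w_i\le w_k$ for $i<k$, summation by parts gives
$$\Big|\sum_{d_i;\,j\ge r(s)}Y(d_i,\tau,d_j,\tau')\Big|\le w_T\cdot\max_{r\le T}\Big|\sum_{r\le i\le T}Y(d_i,\tau)\Big|\le\sqrt{\tfrac{3n}{2}\ln(2Tn^{3c+1})},$$
and subtracting the analogous bound obtained by replacing $r(s)$ with $l(s)+1$ produces the stated factor-of-two bound on $|\sum_{d_i;\,d_j\in s}Y(d_i,\tau,d_j,\tau')|$.

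The one point requiring care is that Claim~\ref{clm::copy_distr-of-X} really does apply to the $Y$ terms throughout the initialization regime. Inspecting the proof of Claim~\ref{clm::distr-of-X}, the critical ingredient is that the redistribution multiplier $(2A_j-I_j-X_j)/4R$ depends only on the target diagonal, not on the source, within a common strip; this is a structural property of the update rule in Claim~\ref{clm::copy_update-to-I} that does not invoke the inductive hypothesis, and in particular remains valid when some strips are still empty during the first $\sqrt{T}$ steps (the corresponding redistribution is then simply trivial).
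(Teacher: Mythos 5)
Your proof is correct and follows essentially the same route as the paper: the paper does not re-prove this ``copy'' claim in the initialization section but simply invokes the argument for Claim~\ref{clm::bound-on-Y-one-strip}, which is exactly the Chernoff-plus-union-bound on $\sum_{r\le i\le T}Y(d_i,\tau)$ followed by the Abel-summation step via the monotonicity of the $w_i$ that you reproduce. Your added observation that the $Y$ variables depend only on entering agents (hence need no inductive hypothesis) and that the redistribution-monotonicity property survives the initialization window is a correct and worthwhile clarification of why the earlier proof transfers verbatim.
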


Thus,
\begin{equation}
\label{eqn::init_strip_variance}
    \begin{aligned}
    |\Imb(s,\tau')|\leq \Big[2B+2\sqrt{\frac{3n}{2} \ln \left(2 T n^{3c + 1}\right)}\Big]\sqrt{T}
    \end{aligned}
\end{equation}

We now calculate $B$.
\begin{claim}
\label{clm:: B_for_init}
For any diagonal $d_j$ and any $d_i$ and $d_k$ that lie in the same strip, at any time $\tau \leq \sqrt{T}$,
$$\Pr\bigg[\Big|\sum_{i\leq j\leq k} X(d_j,\tau)\Big| \geq 2\sqrt{\frac{n(1+\epsilon)^2\sqrt{3}\ln (16T\sqrt{T}(\sqrt{T}+1)n^{c+1})}{0.48\sqrt{T}}}\bigg] \leq \frac{1}{4n^{c+1}}.$$
\end{claim}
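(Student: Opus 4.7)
The plan is to follow the template of Claim~\ref{clm::bound-on-X-one-strip::B}, adapted to the initialization window $\tau \le \sqrt{T}$ in which the population is still growing. First I would establish analogues of the two-sided population bounds in~\eqref{eqn::upper_and_lower_TotBound} appropriate to this regime. For the upper side, at most $n\tau$ agents have entered the pool by time $\tau$, and a Chernoff bound on the number of arrivals of each gender gives $\max\{\#\text{men},\#\text{women}\} \le (1+\epsilon)\, n\tau/2$ with high probability. For the lower side, I combine the same arrival Chernoff with a crude upper bound on the cumulative number of matches (using that the per-agent match rate is $O(1/\sqrt{T})$ during the initialization window, so that most agents that entered are still present) to obtain $M := \max\{\#\text{men},\#\text{women}\} \ge 0.48\, n\tau$.

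Next, fix a set $S$ of consecutive diagonals in strip $s$ containing $m$ men and $w$ women. As in Claim~\ref{clm::bound-on-X-one-strip::B}, I apply Lemmas~\ref{lem::negative_dependence_two_sex} and~\ref{lem::match rate} to the indicator that each man in $S$ is paired with a woman in $S$, obtaining
\[
\Pr\!\left[\,\bigl|\,\#\text{men matched in }S - \Expect[\#\text{men matched in }S]\,\bigr| > \tfrac{mw\epsilon}{M}\right] \;\le\; 2\, e^{-mw\epsilon^2/3M},
\]
together with the symmetric bound for women. Substituting the upper bound on $M$ inside the exponent and the lower bound on $M$ inside the deviation term, and then bounding $mw \le ((m+w)/2)^2$ using the init-phase strip-size bounds already verified in the earlier bullets (each Type~1 strip has at most $2.6n$ agents, and only the topmost Type~2 strip can be populated), I choose $\epsilon$ so that the per-event failure probability is $1/(4n^{c+1})$ divided by the union-bound cardinality; this yields the claimed deviation $2\sqrt{n(1+\epsilon)^2\sqrt{3}\ln(\cdot)/(0.48\sqrt{T})}$.

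Finally, I union-bound over all $\tau \le \sqrt{T}$, all $N \le \sqrt{T}+1$ strips populated during init, and all pairs $(i,k)$ of bounding diagonals within a strip (of which there are $O(T)$ per strip, since only $O(\sqrt{T})$ diagonals per strip contain agents during init); together with the factor of $2$ from the symmetric women-matched-to-$S$ argument, the union-bound cardinality is absorbed into the logarithmic factor $\ln\!\bigl(16T\sqrt{T}(\sqrt{T}+1)n^{c+1}\bigr)$. The main obstacle I anticipate is securing the lower bound $M \ge 0.48\,n\tau$ uniformly over the initialization window, since at the smallest values of $\tau$ the variance in arrivals and matches is proportionally largest. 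I expect to handle this either by observing that for the very smallest $\tau$ the strip simply cannot house enough agents to make the claim non-trivial, or by inflating the Chernoff slack by a factor that is absorbed into the constant $0.48$ via the constraint on $n$ in~\eqref{eqn::constraints}.
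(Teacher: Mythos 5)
Your overall template is the same as the paper's: condition on per\nobreakdash-value arrival bounds, apply Lemmas~\ref{lem::negative_dependence_two_sex} and~\ref{lem::match rate} to the matched counts in a block of consecutive diagonals, and union\nobreakdash-bound over times, strips, and diagonal pairs, exactly as in Claim~\ref{clm::bound-on-X-one-strip::B}. However, two of your quantitative inputs are wrong, and each by itself breaks the claimed bound.

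First, the population lower bound $M \ge 0.48\,n\tau$ is neither available nor correctly justified. A per-step, per-agent match probability of order $1/\sqrt{T}$ compounded over up to $\sqrt{T}$ steps removes a \emph{constant fraction} of the entrants, so ``most agents that entered are still present'' is false near $\tau = \sqrt{T}$. The paper's own accounting (Claim~\ref{claim::early_total_size_lower_bound}, proved by rerunning the argument of Theorem~\ref{thm::lb-size} on the truncated window) allows up to $0.635\,nt$ of the $nt$ entrants to have matched and only certifies a total population of $0.25\,nt$, hence $M \ge 0.12\,nt$. The $0.48$ in the statement is not a population constant at all: it arises as $4 \times 0.12$, where the $4$ comes from the product bound on $mw$ discussed next.

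Second, controlling $mw$ via the static strip-size bound $m+w \le 2.6n$ does not give the claimed deviation uniformly in $\tau$. The optimized deviation is of order $\sqrt{mw\,M_{\mathrm{up}}\ln(\cdot)}\,/\,M_{\mathrm{low}}$; with $mw=\Theta(n^2)$ and $M_{\mathrm{up}},M_{\mathrm{low}}=\Theta(n\tau)$ this is $\Theta\big(\sqrt{n\ln(\cdot)/\tau}\big)$, which for small $\tau$ exceeds the claimed $\Theta\big(\sqrt{n\ln(\cdot)/\sqrt{T}}\big)$ by up to a factor of $T^{1/4}$. The paper instead uses the conditioned arrival events to get the time-dependent bounds $m,w \le nt(1+\epsilon)/(2\sqrt{T})$ (a strip spans only $\sqrt{T}$ of the $T$ entry values), so that $mw \le (nt(1+\epsilon))^2/(4T)$ and the factors of $t$ cancel against $M \ge 0.12\,nt$, leaving $nt/T \le n/\sqrt{T}$. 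Your proposed escape for small $\tau$ --- that the strip cannot house enough agents to make the claim nontrivial --- does not rescue this: the deterministic bound $\big|\sum_{d} X(d,\tau)\big| \le 2(m+w) \le 2n\tau(1+\epsilon)/\sqrt{T}$ falls below the claimed threshold only for $\tau \lesssim \sqrt{\sqrt{T}\,\ln n / n}$, which under~\eqref{eqn::constraints} is less than $1$, i.e.\ it covers no time step.
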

\begin{proof}
The agents enter with one of $T$ values chosen uniformly at random and are equally likely to be men or women. Hence, for all $\tau\leq \sqrt{T}$ time steps, for each value $v$,
$$\chance\Big[\text{At most $\frac{n(1+\epsilon)}{2T}$ men enter with value $v$}\Big]\geq 1-\tau Te^{-\frac{\epsilon^2n}{6T}}\geq 1-T\sqrt{T} e^{-\frac{\epsilon^2n}{6T}}.$$
Call this event $\mathcal E_m$.
Similarly,
$$\chance\Big[\text{At most $\frac{n(1+\epsilon)}{2T}$ men enter with value $v$}\Big]\geq 1-\tau Te^{-\frac{\epsilon^2n}{6T}}\geq 1-T\sqrt{T} e^{-\frac{\epsilon^2n}{6T}}.$$
Call this event $\mathcal E_w$.
Henceforth we condition on $\mathcal E_m$ and $\mathcal E_w$.

Consider some time $\tau$. At this time, let $M=\max\{\text{total number of men},\text{total number of women}\}$ while $m=\text{number of men in strip $s$}$ and $w=\text{number of men in strip $s$}$.
Using Lemmas~\ref{lem::negative_dependence_two_sex} and~\ref{lem::match rate}, we obtain the following bound on the deviation from the expected number of the number of men in $s$ matched in a given time step, $\tau$:
\begin{equation}
\label{eqn::init_strip_deviation_bound}
\begin{aligned}
    &\Pr\bigg[\Big|
\begin{array}{l}
    \text{number of men matched}\\
    \hspace*{0.2in}-\Expect[\text{number of men matched}]
\end{array}
    \Big|> \frac{mw\delta}{M}\bigg] \leq 2e^{-{mw\delta^2}/{3M}}.
\end{aligned}
\end{equation}

We will later prove the following claim,

\begin{claim}
\label{claim::early_total_size_lower_bound}
For all time $0\leq t\leq \sqrt{T}$, $0.12nt\leq M\leq nt$ for all $t\leq \sqrt{T}$, with probability at least $1-\frac{1}{5n^{2c+1}}$.
\end{claim}

Call this event $\mathcal E_M$. Henceforth, we further condition on $\mathcal E_M$.

Thus, from equation \ref{eqn::init_strip_deviation_bound}, we obtain
\begin{equation*}
\begin{aligned}
    &\Pr\bigg[\Big|
\begin{array}{l}
    \text{number of men matched}\\
    \hspace*{0.2in}-\Expect[\text{number of men matched}]
\end{array}
    \Big|> \frac{mw\delta}{0.12nt}\bigg] \leq 2e^{-{mw\delta^2}/{3nt}}.
\end{aligned}
\end{equation*}

Setting $\delta=\Big[\frac{3nt}{mw}\ln (n^cA(n,T)) \Big]^{1/2}$, we obtain
\begin{equation*}
\begin{aligned}
    &\Pr\bigg[\Big|
\begin{array}{l}
    \text{number of men matched}\\
    \hspace*{0.2in}-\Expect[\text{number of men matched}]
\end{array}
    \Big|> \sqrt{\frac{mw\sqrt{3}\ln (n^cA(n,T))}{0.12nt}}\bigg] \leq \frac{2}{A(n,T)n^c}.
\end{aligned}
\end{equation*}
We will specify $A(n,T)$ later.
It is easy to see that because of $\mathcal E_m$ and $\mathcal E_w$, $m\leq nt(1+\epsilon)/2\sqrt{T}$ and $w\leq nt(1+\epsilon)/2\sqrt{T}$. So we obtain,
\begin{equation*}
\begin{aligned}
    &\Pr\bigg[\Big|
\begin{array}{l}
    \text{number of men matched}\\
    \hspace*{0.2in}-\Expect[\text{number of men matched}]
\end{array}
    \Big|> \sqrt{\frac{nt(1+\epsilon)^2\sqrt{3}\ln (n^cA(n,T))}{0.48T}}\bigg] \leq \frac{2}{A(n,T)n^c}.
\end{aligned}
\end{equation*}

Since $t\leq \sqrt{T}$,
 
\begin{equation*}
\begin{aligned}
    &\Pr\bigg[\Big|
\begin{array}{l}
    \text{number of men matched}\\
    \hspace*{0.2in}-\Expect[\text{number of men matched}]
\end{array}
    \Big|> \sqrt{\frac{n(1+\epsilon)^2\sqrt{3}\ln (n^cA(n,T))}{0.48\sqrt{T}}}\bigg] \leq \frac{2}{A(n,T)n^c}.
\end{aligned}
\end{equation*}

We can perform the same argument for the women to obtain,

\begin{equation*}
\begin{aligned}
    &\Pr\bigg[\Big|
\begin{array}{l}
    \text{number of women matched}\\
    \hspace*{0.2in}-\Expect[\text{number of women matched}]
\end{array}
    \Big|> \sqrt{\frac{n(1+\epsilon)^2\sqrt{3}\ln (n^cA(n,T))}{0.48\sqrt{T}}}\bigg] \leq \frac{2}{A(n,T)n^c}.
\end{aligned}
\end{equation*}

From this it immediately follows that

\begin{equation*}
\begin{aligned}
    &\Pr\bigg[\Big|\sum_{d\in S}X(d,\tau)
    \Big|> 2\sqrt{\frac{n(1+\epsilon)^2\sqrt{3}\ln (n^cA(n,T))}{0.48\sqrt{T}}}\bigg] \leq \frac{4}{A(n,T)n^c}.
\end{aligned}
\end{equation*}

where $S$ is any subset of the diagonals in strip $s$. We have to consider $\sqrt{T}$ many possible times, $T$ many diagonals $d_j$ and up to $(\sqrt{T}+1)$ many strips. Thus setting $A(n,T)=16T\sqrt{T}(\sqrt{T}+1)n$, proves the claim.
\end{proof}

From equation \eqref{eqn::init_strip_variance} we obtain, for every strip $s$ and $\tau\leq \sqrt{T}$, the following bound on $|\Imb(s,\tau')|$:

\begin{equation*}
    \begin{aligned}
    |\Imb(s,\tau')|&\leq \Big[2B+2\sqrt{\frac{3n}{2} \ln \left(2 T n^{3c + 1}\right)}\Big]\sqrt{T}\\
    &\leq 8\sqrt{n(1+\epsilon)^2\sqrt{T}\ln(16T\sqrt{T}(\sqrt{T}+1)n^{c+1})} + \sqrt{6nT \ln \left(2 T n^{3c + 1}\right)}.
\end{aligned}
\end{equation*}

We are conditioning on $\mathcal E_m$, $\mathcal E_w$ and $\mathcal E_M$. Set $\epsilon= \Big[\frac{6T}{n}\ln(4T\sqrt{T}n^{3c+1})\Big]^{1/2}$. 
We choose constraints so that $\epsilon\leq 1$. So, for every strip $s$ and $\tau\leq \sqrt{T}$,

\begin{equation*}
\begin{aligned}
    &\Pr\bigg[\big|\Imb(s,\tau')\big|\leq 16\sqrt{n\sqrt{T}\ln (16T\sqrt{T}(\sqrt{T}+1)n^{c+1})} + \sqrt{6nT \ln \left(2 T n^{3c + 1}\right)}\bigg]\\
    &\geq \bigg(1-\frac{1}{4n^{c+1}}-\frac{1}{n^{2c+1}}\bigg)\bigg(1-\frac{1}{5n^{2c+1}}-\frac{1}{2n^{2c+1}}\bigg).
\end{aligned}
\end{equation*}

Then we obtain,

\begin{equation*}
\begin{aligned}
    &\Pr\bigg[\big|\Imb(s,\tau')\big|\leq 16\sqrt{n\sqrt{T}\ln (16n^{c+1}T\sqrt{T}(\sqrt{T}+1))} + \sqrt{6nT \ln \left(2 T n^{3c + 1}\right)}\bigg]\\
    &\geq \bigg(1-\frac{1}{4n^{c+1}}-\frac{1}{n^{2c+1}}\bigg)\bigg(1-\frac{7}{10n^{2c+1}}\bigg)]\geq 1-\frac{1}{n^{c+1}}.
\end{aligned}
\end{equation*}

We desire that
$$16\sqrt{n\sqrt{T}\ln (16n^{c+1}T\sqrt{T}(\sqrt{T}+1))} + \sqrt{6nT \ln \left(2 T n^{3c + 1}\right)}\leq \frac{n}{25\sqrt{T}}$$

for which it suffices (by constraints \eqref{eqn::constraints}) that
$$32\cdot 25\cdot T\sqrt{(3c+3)}\ln n\leq \sqrt{n}.$$

or,
$$n\geq (3c+3)(25\cdot 32\cdot T\ln n)^2,$$

which is also provided by the constraints \eqref{eqn::constraints}.
\end{proof}

It now remains to prove Claim \ref{claim::early_total_size_lower_bound}. We proceed exactly as in the proof of Theorem \ref{thm::lb-size}.
\begin{proof} (of Claim~\ref{claim::early_total_size_lower_bound}~). 

Let's consider those agents that enter at times in the range $[0,t]$ for some $t \le \sqrt{T}$. We want to lower bound the number of these agents who are present in the pool for the match at time $t$.

Henceforth, we will only consider men with values in the range $[T+\sqrt{T},2T)$. Among these men, consider those who have been in the pool for $t'$ time, where $0\le t' < \sqrt{T}$.
Let $p_i^{t'}$ be the probability that during their  $t'$-th time step, the men in strip $i$ are offered a match in their own strip. Even if all these men were still present in the matching pool,
\begin{align*}
\chance\Big[\text{\# of these men matched in strip $i$ at age $t'$ } \le \frac{n(1+\delta)(1+\epsilon)\cdot p_i\cdot w_i}{2T} \Big]\geq 1-e^{-\frac{\delta^2np_iw_i}{6T}},
\end{align*}
where $w_i$ is the horizontal width of strip $i$ occupied by these men when aged $t'$.
For every Type 1 strip, $w_i \le \sqrt{T}$.
For the one Type 2 strip, since all values are
at least $T+\sqrt{T}$, 
for ages up to $\sqrt{T}$, $w_i \le \sqrt{T}$.  
By applying $\overline{\mu} = \frac{n(1 + \epsilon) \max\{p_i, \frac{1}{T}\} \sqrt{T}}{2 T}$ in Lemma~\ref{lem::chernoff_bound}, it follows that:

\begin{align*}
\chance\Big[\text{\# of these men matched in strip $i$ at age $t'$ } \le \frac{n(1+\delta)(1+\epsilon)\cdot \max\{p_i, \frac{1}{T}\}}{2\sqrt{T}} \Big]&\geq 1-e^{-\frac{\delta^2n \max\{p_i, \frac{1}{T}\} \sqrt{T}}{6T}} \\
&\geq 1-e^{-\frac{\delta^2n}{6T^{1.5}}},
\end{align*}

The sum of the match probabilities---the $p_i$'s--- is at most $1$. Notice that at any fixed time we only need to consider $\sqrt{T}$ strips, because at any time step, the men we are considering will occupy only $\sqrt{T}$ many strips. This implies $\sum \max\{p_i, \frac{1}{T}\} \leq 1 + \frac{1}{\sqrt{T}}$. Therefore,

$$\chance\Big[\begin{array}{l}\text{\# of these men being matched} \\ \text{over all the strips at age $t'$}\end{array} \le \frac{ (1+\frac{1}{\sqrt{T}})n(1+\delta)(1+\epsilon)}{2\sqrt{T}}\Big]\geq 1-\sqrt{T}\cdot e^{-\frac{\delta^2n}{6T^{1.5}}}.$$

Hence, we can bound the probability of the number of men who entered at time $t -\Delta+1$ and left by time $t$ , for any $\Delta \le t$, as follows:
$$\chance\Big[\begin{array}{l}\text{\# men being matched} \\ \text{in their first $\Delta$ steps}\end{array} \le \frac{ (1+\frac{1}{\sqrt{T}})n\Delta(1+\delta)(1+\epsilon)}{2\sqrt{T}}\Big]\geq 1- \Delta \sqrt{T}\cdot e^{-\frac{\delta^2n}{6T^{1.5}}}.$$

Consequently, we can bound the probability for the number of men that enter in the time interval $[0,t-1]$ and are matched no later than time $t-1$ as follows:

\begin{equation*}
    \begin{aligned}
    &\chance\bigg[\begin{array}{l}\text{\# men who entered and were} \\  \text{matched in a $t-1$ time window}\end{array} \le \frac{(1+\frac{1}{\sqrt{T}})n(1+\delta)(1+\epsilon)(t-1)}{4} \bigg] \geq 1- \frac 12 t^2\sqrt{T} e^{-\frac{\delta^2n}{6T^{1.5}}}.
    \end{aligned}
\end{equation*}

We set $\delta=\Big[\frac{6T^{1.5}}{n} \ln \big(10n^{2c+1} T^{1.5}\big)\Big]^{1/2}$. Note that $t\leq \sqrt{T}$. We already chose $\epsilon= \Big[\frac{6T}{n}\ln(4T\sqrt{T}n^{2c+1})\Big]^{1/2}.$
By constraint~\eqref{eqn::constraints}, $c\geq 1$, $400 \leq T\leq n$, and $n\geq 96T^2(2c+3)\ln n$, 
$\delta\leq 1/4$ and $\epsilon\leq  {1}/{64}$.  This yields the bound:
\begin{align*}
\chance\Big[\begin{array}{l}\text{\# men who entered in a $t-1$} \\ \text{ window being matched }\end{array} \ge \frac{\frac{65}{64} \cdot \frac 54 nt} {4} \Big]
\geq 1-\frac{1}{20n^{2c+1}}.
\end{align*}

Since we have been conditioning on $\mathcal E$, this bound holds with probability at least $1-\frac{1}{10n^{2c+1}}$.
The same bound applies to the women.

Recalling that we excluded the men with values less than $T+\sqrt{T}$,
this yields the following lower bound on the total population size, at time t:
$$nt - \frac{n(1 + \epsilon)}{\sqrt{T}} - 0.635nt\geq 0.25 nt,$$
with probability at least $1-\frac{1}{5n^{2c+1}}.$

Thus $0.12 nt\leq M\leq nt$ for all $t\leq T$, with probability at least $1-\frac{1}{5n^{2c+1}}$, which proves the claim.
\end{proof}

\section{Further Data from Numerical Simulations}
\label{appn::data}
Here we provide the average population size and average loss we obtained from every run of our numerical simulations.\footnote{every run was for 2000 iterations}.
\subsection{Reasonable Strategy}
\begin{itemize}
    \item Discrete Setting:\\
    $n=500$, $T=100:$
    \begin{center}
 \begin{tabular}{||c c c||} 
 \hline
 Run number & Average population size & Average Loss/T\\ [0.5ex] 
 \hline\hline
 1 & 1544.1 & 10.07\\ 
 \hline
 2 & 1507.7 & 9.96\\
 \hline
 3 & 1558.7 & 10.13\\
 \hline
 4 & 1458.1 & 9.8\\
 \hline
 5 & 1456.4 & 9.82\\ 
 \hline
 6 & 1565.5 & 10.16\\
 \hline
 7 & 1487.6 & 9.91\\
 \hline
 8 & 1567 & 10.17\\
 \hline
 9 & 1499.5& 9.93\\
 \hline
 10 & 1508.3 & 9.99\\
 \hline
\end{tabular}
\end{center}
\noindent
Average Population Size $= 1511.7\pm 3.7\%$.
\newline
Average Loss/T $= 9.99\pm 1.9\%$.
\item Continuum Setting ($=500$, $T=100$):\\
\noindent
Average Population Size $=1180.9$.
\newline
Average Loss/T $= 8.89$.

\end{itemize}

\subsection{Modified Reasonable Strategy}
\begin{itemize}
    \item Discrete Setting:
    \begin{itemize}
        \item $n=500$, $T=100$:\\
 \begin{center}
 \begin{tabular}{||c c c||} 
 \hline
         Run number & Average population size & Average Loss/T\\ [0.5ex] 
 \hline\hline
 1 & 4733.1 &   22.04\\ 
 \hline
 2 &  4735 & 22.06\\
 \hline
 3 &  4666.8 & 21.82\\
 \hline
 4 &  4827.4 & 22.38\\
 \hline
 5 &  4726 & 22.04\\ 
 \hline
 6 &  4685.4 & 21.87\\
 \hline
 7 &  4785.1 & 22.25\\
 \hline
 8 &  4732.8 & 22.06\\
 \hline
 9 & 4681.4 & 21.89\\
 \hline
 10 & 4721.2 & 22.02\\
 \hline
\end{tabular}
\end{center}
 \smallskip
\noindent
Average Population Size $= 4747.1\pm 1.7\%$.
\newline
Average Loss/T $= 22.1\pm 1.7\%$.

        \item $n=500$, $T=200$:\\
 \begin{center}
 \begin{tabular}{||c c c||} 
 \hline
                 Run number & Average population size & Average Loss/T\\ [0.5ex] 
 \hline\hline
 1 &  6972.6& 32.39\\ 
 \hline
 2 &  7194.9& 33.17\\
 \hline
 3 &  7210.3& 33.19\\
 \hline
 4 &  7148.8& 33.01\\
 \hline
 5 &  7336.1& 33.67\\ 
 \hline
 6 &  7085.7& 32.79\\
 \hline
 7 &  7153.2& 32.97\\
 \hline
 8 &  7119.2& 32.94\\
 \hline
 9 & 7195.2& 33.11\\
 \hline
 10 & 7086.6 & 32.79\\
 \hline
\end{tabular}
\end{center}
\smallskip
\noindent
Average Population Size $= 7154.4\pm 2.6\%$.
\newline
Average Loss/T $= 33.03\pm 2\%$.
        
        \item $n=500$, $T=300$:\\
        
 \begin{center}
 \begin{tabular}{||c c c||} 
 \hline
                 Run number & Average population size & Average Loss/T\\ [0.5ex] 
 \hline\hline
 1 &  8963.4& 41.25\\ 
 \hline
 2 &  8930.5& 41.21\\
 \hline
 3 &  8937& 41.17\\
 \hline
 4 &  8744.5& 40.49\\
 \hline
 5 &  8889.2& 41.07\\ 
 \hline
 6 &  8798.6& 40.75\\
 \hline
 7 &  8854.8& 40.89\\
 \hline
 8 &  8783.7& 40.68\\
 \hline
 9 & 8682.9& 40.34\\
 \hline
 10 & 8705.3& 40.37\\
 \hline
\end{tabular}
\end{center}
\smallskip
\noindent
Average Population Size $= 8823.2\pm 1.6\%$.
\newline
Average Loss/T $= 40.8\pm 1.2\%$.
        
        \item $n=500$, $T=400$:\\
        
 \begin{center}
 \begin{tabular}{||c c c||} 
 \hline
                 Run number & Average population size & Average Loss/T\\ [0.5ex] 
 \hline\hline
 1 &  10554.7& 48.4\\ 
 \hline
 2 &  10281& 47.22\\
 \hline
 3 &  10060.9& 46.52\\
 \hline
 4 &  10168.1& 47\\
 \hline
 5 &  10748.9& 48.9\\ 
 \hline
 6 &  10295.3& 47.39\\
 \hline
 7 &  10185.9& 46.86\\
 \hline
 8 &  10094.6& 46.75\\
 \hline
 9 & 10555.1& 48.3\\
 \hline
 10 & 10185.3& 46.99\\
 \hline
\end{tabular}
\end{center}
\noindent
\smallskip
Average Population Size $= 10404.9 \pm 3.4\%$.
\newline
Average Loss/T $= 47.71\pm 2.5\%$.
        
        \item $n=500$, $T=500$:\\
        
 \begin{center}
 \begin{tabular}{||c c c||} 
 \hline
                 Run number & Average population size & Average Loss/T\\ [0.5ex] 
 \hline\hline
 1 &  11186.2& 52.09\\ 
 \hline
 2 &  11268.1& 52.38\\
 \hline
 3 &  11530& 53.44\\
 \hline
 4 &  12209.8& 55.8\\
 \hline
 5 &  12198.1& 55.63\\ 
 \hline
 6 &  12027.3& 55.07\\
 \hline
 7 &  11710.77& 54.03\\
 \hline
 8 &  11372.4& 52.78\\
 \hline
 9 & 11493.2& 53.27\\
 \hline
 10 & 11378.5& 52.77\\
 \hline
\end{tabular}
\end{center}
\smallskip
\noindent
Average Population Size $= 11698\pm 4.4\%$.
\newline
Average Loss/T $= 53.95\pm 3.5\%$.
        
    \end{itemize}
    In summary:\\
 \begin{center}
 \begin{tabular}{||c c c c||} 
 \hline
 n & T & Average population size & Average Loss/T\\ [0.5ex] 
 \hline\hline
 500 & 100& $4747.1\pm 1.7\%$ & $22.1\pm 1.7\%$\\ 
 \hline
 500 & 200& $7154.4\pm 2.6\%$ & $33.03\pm 2\%$\\ 
  \hline
 500 & 300& $8823.2\pm 1.6\%$ & $40.8\pm 1.2\%$\\ 
  \hline
 500 & 400&  $10404.9\pm 3.4\%$.&  $47.71\pm 2.5\%$\\
  \hline
 500 & 500&  $11698\pm 4.4\%$&  $53.95\pm 3.5\%$\\
  \hline
\end{tabular}
\end{center}

    \item Continuum Setting ($n=500$, $T=100$):\\
    \noindent
Average Population Size $= 4484.8$
\newline
Average Loss/T $= 20.85$.
    
\end{itemize}
\end{document}